\def\ps@pprintTitle{%
 \let\@oddhead\@empty
 \let\@evenhead\@empty
 \def\@oddfoot{\centerline{\thepage}}%
 \let\@evenfoot\@oddfoot}
\patchcmd{\@endtheorem}{\@endpefalse}{}{}{}
\patchcmd{\endproof}{\@endpefalse}{}{}{}
\theoremstyle{plain}
\newtheorem{theorem}{Theorem}
\newtheorem{proposition}[theorem]{Proposition}
\newtheorem{corollary}[theorem]{Corollary}
\theoremstyle{definition}
\newtheorem{definition}[theorem]{Definition}
\theoremstyle{remark}
\DeclareSymbolFont{bbold}{U}{bbold}{m}{n}
\DeclareSymbolFontAlphabet{\mathbbold}{bbold}
\numberwithin{equation}{section}
\newcommand{\norm}[1]{\lVert #1 \rVert}
\newcommand{\Norm}[1]{\left\lVert #1 \right\rVert}
\definecolor{forestgreen}{rgb}{0.33,0.61,0.34}
\newcommand{\del}[1]{\empty}
\begin{document}

\begin{frontmatter}



\title{Effects of concurrency on epidemic spreading in Markovian temporal networks}


\author[label1]{Ruodan Liu}
\author[label2]{Masaki Ogura}
\author[label1]{Elohim Fonseca Dos Reis}
\author[label1]{Naoki Masuda\corref{cor1}\fnref{label3,label4}}
\ead{naokimas@buffalo.edu}
\cortext[cor1]{Corresponding author}
 \address[label1]{Department of Mathematics, State University of New York at Buffalo, Buffalo, NY 14260-2900, USA}
 \address[label2]{Graduate School of Information Science and Technology, Osaka University, Suita, Osaka 565-0871, Japan}
 \address[label3]{Computational and Data-Enabled Sciences and Engineering Program, State University of New York at Buffalo, Buffalo, NY 14260-5030, USA}
 \address[label4]{Faculty of Science and Engineering, Waseda University, 169-8555 Tokyo, Japan}

\begin{abstract}
The concurrency of edges, quantified by the number of edges that share a common node at a given time point, may be an important determinant of epidemic processes in temporal networks. We propose theoretically tractable Markovian temporal network models in which each edge flips between the active and inactive states in continuous time. The different models have different amounts of concurrency while we can tune the models to share the same statistics of edge activation and deactivation (and hence the fraction of time for which each edge is active) and the structure of the aggregate (i.e., static) network. We analytically calculate the amount of concurrency of edges sharing a node for each model. We then numerically
study effects of concurrency on epidemic spreading in the stochastic susceptible-infectious-susceptible  and susceptible-infectious-recovered dynamics on the proposed temporal network models. We find that the concurrency enhances epidemic spreading near the epidemic threshold while this effect is small in many cases. Furthermore, when the infection rate is substantially larger than the epidemic threshold, the concurrency suppresses epidemic spreading in a majority of cases. In sum, our numerical simulations suggest that the impact of concurrency on enhancing epidemic spreading within our model is consistently present near the epidemic threshold but modest. The proposed temporal network models are expected to be useful for investigating effects of concurrency on various collective dynamics on networks including both infectious and other dynamics.
\end{abstract}

\begin{keyword}
Concurrency, temporal network, SIS model, SIR model, epidemic threshold, Poisson process.
\end{keyword}

\end{frontmatter}

\section{Introduction}

The structure of contact networks among individuals shapes the dynamics of contagion processes in a population such as the number of individuals infected, their spatial distributions, and speed of spreading \cite{Newman2018book, Pastor2015RMP, Barrat2008book, Kiss2017book}. For example, a heterogeneous degree distribution, where the degree is the number of neighboring nodes that a node has, and a short average distance between nodes are two factors that usually enhance epidemic spreading on networks. In fact, empirical contact networks often vary over time on a time scale comparable to or faster than that of epidemic dynamics, probably most famously owing to mobility of human or animal individuals. This observation has naturally led to the investigation of how features of such temporal (i.e., time-varying) networks and statistical properties of contact events, such as distributions of inter-contact times, temporal correlation in inter-contact times, and appearance and disappearance of nodes and edges, affect outcomes of contagion processes~\cite{Holme2012PR, Holme2015EPJB, Masuda2020book, Bansal2010JBD, Masuda2013F1000PR, Moody2002SF}.

The concurrency is a feature of temporal contact networks that has been investigated in both mathematical and field epidemiology for over two decades, while many of these studies do not specifically refer to temporal networks \cite{Watts, Moody, Morris1995, Kretz1996, Morris1997, Goodreau, Masuda2021RoS, Lee, Morris2010PlosONE}. Concurrency generally refers to multiple partnerships of an individual that overlap in time. Concurrency has been examined in particular in the context of sexually transmitted infections such as HIV, specifically regarding whether or not the concurrency was high in sub-Saharan Africa and, if so, whether or not the high concurrency enhanced HIV spreading there \cite{Moody, Morris1995, Kretz1996, Aral2010CIDR, Kretzschmar2012AIDS, Lurie2010AIDS, Foxman2006STD}. Suppose that a partnership between individuals $i$ and $j$ overlaps one between $i$ and $j^\prime$ for some time.
Such concurrency may promote epidemic spreading because 
rapid transmission from $j$ to $j^\prime$ through $i$ and vice versa is possible during the concurrent partnerships (see Fig.~\ref{fig:3nodes}(a)). In contrast, if a partnership between $i$ and $j$ occurs before that between $i$ and $j'$ without an overlap, i.e., without concurrency, the transmission from $j$ to $j'$ can still occur in various different ways, but the transmission from $j'$ to $j$ does not (see Fig.~\ref{fig:3nodes}(b)).

Modeling studies are diverse in how to quantify the amount of concurrency. Many studies measure the concurrency through the mean degree, or the average contact rate for nodes \cite{Watts, Doherty, Gurski, Eames}. 
%
%
However, in these cases, it is hard to say whether an increased epidemic size owes to a high density of edges or a large amount of concurrency \cite{Onaga2017PRL, Miller, Masuda2021RoS, Bauch2000PRSB}. In studies of epidemic processes on static networks, it is a stylized fact that epidemic spreading is enhanced if there are many edges with all other things being equal. Another line of theoretical and computational approach to concurrency quantifies concurrency by the heterogeneity in the degree distribution of the network \cite{Morris1995, Kretz1996, Masuda2021RoS}. However, the results that a high concurrency in this sense enhances epidemic spreading is equivalent to an established result in static network epidemiology that heterogeneous degree distributions lead to a small epidemic threshold, thus promoting epidemic spreading \cite{Pastor2015RMP, Barrat2008book, Pastor2001PRL, Kiss2017book}. In some studies, the authors investigated the effect of concurrency by carefully ensuring that the degree distribution (and hence the average degree) stays the same across the comparisons while they manipulate the amount of concurrency \cite{Miller, Onaga2017PRL, Bauch2000PRSB}. These studies suggest that, despite the different definitions of the concurrency employed in these studies, higher concurrency increases epidemic spreading (but see \cite{Miller} for different results). However, mathematical models that enable us to analyze the effect of concurrency by fixing the structure of the static network are still scarce. 

\floatsetup[figure]{style=plain,subcapbesideposition=top}
\captionsetup{font={small,rm}} 
\captionsetup{labelfont=bf}
\begin{figure}[H]
  \centering
  \includegraphics[width=.8\linewidth]{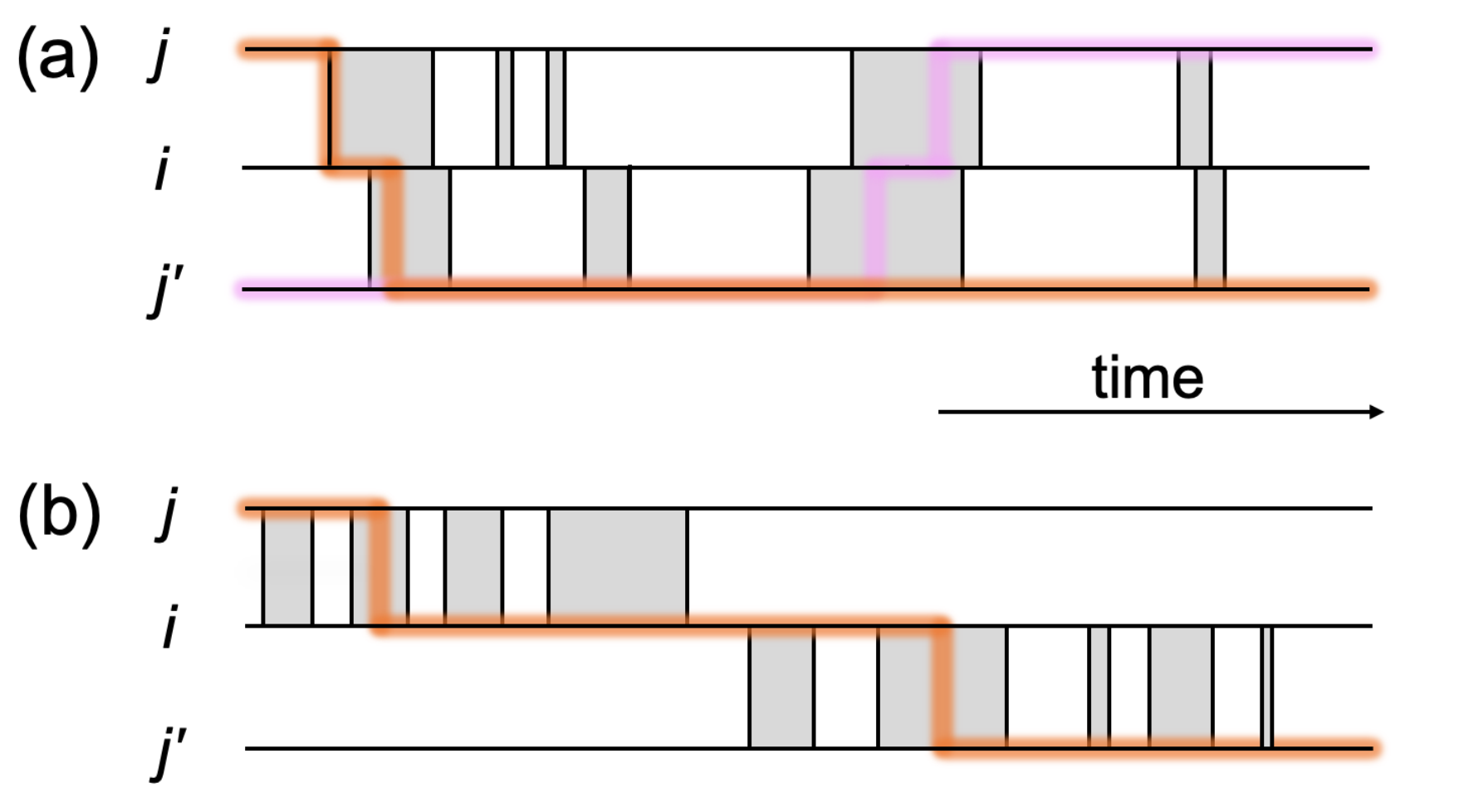}
   \caption{Schematic of part of temporal networks with different amounts of concurrency. We depict two edges sharing a node in each case. (a) Concurrent partnerships. (b) Non-concurrent partnerships. A shaded box represents a duration for which a partnership is present on the edge. The thick lines are examples of time-respecting paths transmitting infection from $j$ to $j'$ (shown in orange) or from $j'$ to $j$ (shown in magenta).
This figure is inspired by Fig.~1 in Ref.~\cite{Masuda2021RoS} and Fig.~1 in Ref.~\cite{Miller}.}
   \label{fig:3nodes}
\end{figure}

In the present study, we focus on effects of concurrency on epidemic spreading for an arbitrary static network on top of which partnerships, infection events, and recovery events occur. In this manner, we aim to study the effect of concurrency without being affected by the effect of the network structure itself. We consider three temporal network models that have different amounts of concurrency while keeping the probability that each edge is available the same across comparisons. In terms of a concurrency measure, we analytically evaluate the amount of concurrency of edges sharing a node for the three models as well as other properties of the models. Then, we numerically study effects of concurrency on epidemic spreading using the stochastic susceptible-infectious-susceptible (SIS) and susceptible-infectious-recovered (SIR) models. 

The Python codes for generating the numerical results in this article are available at Github (\url{https://github.com/RuodanL/concurrency}).

\section{Temporal network models}

We introduce three models of undirected and unweighted temporal networks in continuous time, which we build from independent Markov processes occurring on a given static network. We use these models to compare temporal networks that have the same time-aggregated network but different amounts of concurrency. Let $G$ be an undirected and unweighted static network with node set $V = \left\{1, ... , N\right\}$ and edge set $E = \left\{e_i\right\}_{i=1}^{M}$, where $N$ is the number of nodes and $M$ is the number of edges. We assume that the network $G$ has no self-loops. In the temporal network constructed based on undirected and unweighted static network $G$, by definition, each edge $e_i \in E$ is either active (i.e., temporarily present) or inactive (i.e., temporarily absent) at any given time $t \in \mathbb{R}$ and switches between the active and inactive states over time. We denote by $\mathcal{G}$ the temporal network and by $\mathcal{G}(t)$ the instantaneous network in $\mathcal{G}$ observed at time $t$. 

\subsection{Model 1}

Let $\tau_1$ and $\tau_2$ be the duration of the active state of and the inactive state of an edge, respectively. 
In our model 1, we assume that $\tau_1$ is independently drawn from probability density $\psi_1(\tau_1)$ each time the edge switches from the inactive to the active state.
Similarly, we draw $\tau_2$ independently from probability density $\psi_2(\tau_2)$ each time the edge switches from the active to the inactive state.
The duration $\tau_1$ or $\tau_2$ for different edges obeys the same distributions but is drawn independently for the different edges.
When $\psi_1(\tau_1)$ and $\psi_2(\tau_2)$ are both exponential distributions, the stochastic dynamics of the state of each edge
obeys independent continuous-time Markov processes with two states, and our model 1 reduces to previously proposed models \cite{Clementi, Newman2}.

\subsection{Model 2}

In models 2 and 3, we assume that each node independently obeys a continuous-time Markov process with two states, which we refer to as the high-activity and low-activity states. Each node independently switches between the high-activity state, denoted by $h$, and the low-activity state, denoted by $\ell$. Let $a$ be the rate at which the state of a node changes from $\ell$ to $h$, and let $b$ be the rate at which the state of a node changes from $h$ to $\ell$. In model 2, each edge $(u, v) \in E$, where $u, v \in V$, is active at any given time if and only if both $u$ and $v$ are in the $h$ state. The dynamics of a single edge in model 2 is schematically shown in Fig.~\ref{fig:m2&3}. The model resembles the so-called AND model in \cite{Elohim}. An intuitive interpretation of model 2 is that two individuals chat with each other if and only if both of them want to.

\floatsetup[figure]{style=plain,subcapbesideposition=top}
\captionsetup{font={small,rm}} 
\captionsetup{labelfont=bf}
\begin{figure}[H]
  \centering
  \includegraphics[width=1.0\linewidth]{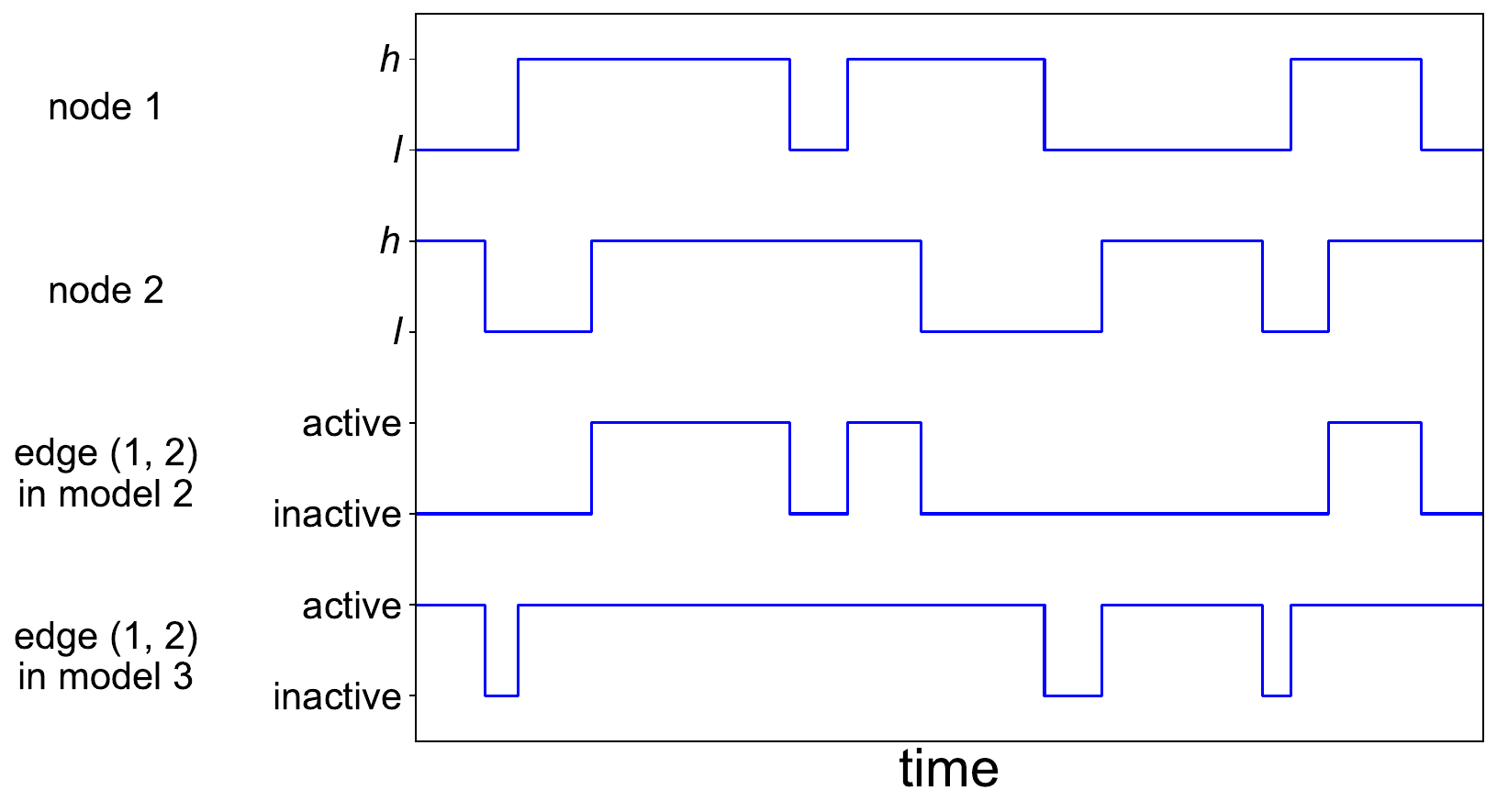}
   \caption{Schematic illustration of models 2 and 3. In model 2, edge $(1, 2)$ is active if and only if both node $1$ and node $2$ are in the $h$ state. Otherwise, the edge is inactive. In model 3, edge $(1, 2)$ is active if and only if either node $1$ or node $2$ is in the $h$ state.}
   \label{fig:m2&3}
\end{figure}

\subsection{Model 3}

Model 3 is a variant of model 2. As in model 2, we assume that each node independently switches between the high-activity and low-activity states at rates $a$ and $b$. In model 3, the edge between two nodes is active if and only if either node is in the $h$ state (see Fig.~\ref{fig:m2&3} for a schematic). This model resembles the OR model proposed in \cite{Elohim}. The intuition behind model 3 is that one person can start conversation with another person whenever either person wants to talk regardless of whether the other person wants to.

\section{Concurrency for the three models}

In this section, we first define a concurrency index for temporal networks in which partnership on each edge appears and disappears over time.
Then, we calculate and compare the concurrency index and related measures for models 1, 2, and 3.
 
\subsection{Definition of a concurrency index}

Consider a temporal network $\mathcal{G}$ that is constructed on the underlying static network $G$ and defined for time $t \in [0, T]$. We refer to $G$ as the aggregate network.
If two edges in $\mathcal{G}$ sharing a node in $G$ are both active at $t \in \mathbb{R}$, we say that the two edges are concurrent at time $t$; see Fig.~\ref{fig:3nodes}(a) for an example.
Consider a pair of edges sharing a node in the aggregate network, denoted by $e_i$ and $e_j$, where $e_i, e_j \in E$.
In fact, the likelihood that $e_i$ and $e_j$ are concurrent at time $t$ depends on how likely each single edge is active at time $t$.
We define a concurrency index that takes into account of this factor. To this end, we first define the set of time for which an edge $e \in E$ is active by
\begin{equation}
S(e) = \{ t \in [0, T]; \text{ edge } e \text{ is active at time } t \}.
\end{equation}
We define the concurrency for the edge pair $\{ e_i, e_j \} \in \mathcal{S}$ by
\begin{equation} \label{concurrency}
\kappa(e_i, e_j) = \frac{ m\left(S(e_i) \cap S(e_j)\right) }
{\min\{ m\left( S(e_i) \right), m\left( S(e_j) \right) \}},
\end{equation}
where $m$ denotes the Lebesgue measure, and $\mathcal{S}$ is the set of edge pairs that share a node in $G$. Note that
\begin{equation}
\left| \mathcal{S} \right|
= \sum_{i=1}^N \frac{\overline{k}_i (\overline{k}_i-1)}{2},
\label{eq:cardinality-mathcalS}
\end{equation}
where $\overline{k}_i$ is the degree of the $i$th node in $G$  \cite{Morris1995, Kretz1996}.
The numerator on the right-hand side of Eq.~\eqref{concurrency} is equal to the length of time for which both $e_i$ and $e_j$ are active.
The denominator is a normalization constant to discount the fact that the numerator would be large if the two edges are active for long time.
Because any edge $e$ in the aggregate network $G$ should be active sometime in $[0, T]$, the denominator is always positive; otherwise, we should exclude $e$ from $G$.

We define the concurrency index for temporal network $\mathcal{G}$ by
\begin{align}
\kappa(\mathcal{G}) & = \frac{1}{\left| \mathcal{S} \right|}
\sum_{i, j \text{ such that } 1\le i < j\le M \text{ and } \{ e_i, e_j \} \in \mathcal{S}} \kappa(e_i, e_j).
\end{align}
It holds true that $0 \le \kappa(\mathcal{G}) \le 1$ because $0 \le \kappa(e_i, e_j) \le 1$.
For empirical or numerical data, $[0, T]$ is the observation time window. For a stochastic temporal network model, we calculate $S(e)$ as the expectation and in the limit of $T\to\infty$ such that $\kappa(\mathcal{G})$ is a deterministic quantity.

Differently from $\kappa_3$, a concurrency index proposed in seminal studies \cite{Morris1995, Kretz1996} (also see for \cite{Masuda2021RoS} a review), 
$\kappa(\mathcal{G})$ is not affected by the degree distribution of the aggregate network $G$. It should be noted that the calculation of $\kappa(e_i, e_j)$ and hence $\kappa(\mathcal{G})$ requires the information about the aggregate network, i.e., $\mathcal{S}$. 

\subsection{Model 1}

To calculate the concurrency index for the three models of temporal networks, we first derive the probability of an arbitrary edge being active in the equilibrium for each model. In model 1, consider an arbitrary edge in the static network $G$. The mean duration for which the edge is active and that for which the edge is inactive are given by 
\begin{align}
\langle \tau_1\rangle & = \int_{0}^{\infty} \tau_1\psi_1(\tau_1) d\tau_1
\end{align}
and
\begin{align}
\langle \tau_2\rangle & = \int_{0}^{\infty} \tau_2\psi_2(\tau_2) d\tau_2,
\end{align}
respectively.
Owing to the renewal reward theorem \cite{Ross}, the probability that the edge is active in the equilibrium, denoted by $q^*$, is given by
\begin{equation}
q^*=\frac{\langle \tau_1\rangle}{\langle \tau_1\rangle+\langle \tau_2\rangle}.
\end{equation}
The concurrency index, $\kappa(\mathcal{G})$, which we simply refer to as $\kappa$ in the following text, is equal to the ratio of the time for which the two edges sharing a node are both active to the time for which an edge is active. Because the states of different edges are independent of each other, we obtain
\begin{equation} \label{model1}
\kappa=\frac{{q^{*}}^2}{q^*} = q^{*} = \frac{\langle \tau_1\rangle}{\langle \tau_1\rangle+\langle \tau_2\rangle}.
\end{equation}

The special case of model 1 in which the edge activation and deactivation occur as Poisson processes is equivalent to previously proposed 
models \cite{Newman2, Clementi}. In this case, using
$\psi_1(\tau_1)=\lambda_1e^{-\lambda_1\tau_1}$ and $\psi_2(\tau_2)=\lambda_2e^{-\lambda_2\tau_2}$, we obtain
\begin{align}
\kappa & = \frac{\langle \tau_1\rangle}{\langle \tau_1\rangle+\langle \tau_2\rangle}
= \frac{\lambda_1^{-1}}{\lambda_1^{-1}+\lambda_2^{-1}} = \frac{\lambda_2}{\lambda_1+\lambda_2}.
\end{align}

\subsection{Model 2}

To analyze model 2, let us consider a pair of neighboring nodes $v_1$ and $v_2$ in the static network $G$. Let $p^*_h$ and $p^*_\ell$ be the probability that an arbitrary node in $G$ is in state $h$ and $\ell$ in the equilibrium, respectively. Denote by $p^*_{s_1s_2}$ the probability that node $v_i$ is in state $s_i \in \{ h, \ell\}$ in the equilibrium, where $i \in \{1, 2\}$. Because the duration of the high-activity state and that of the low-activity state of a node obey the exponential distributions with mean $1/b$ and $1/a$, respectively, we apply the renewal reward theorem \cite{Ross} to obtain
\begin{equation}
p^*_h = \frac{\frac{1}{b}}{\frac{1}{a}+\frac{1}{b}} = \frac{a}{a+b}
\end{equation}
and
\begin{equation}
p^*_\ell = \frac{\frac{1}{a}}{\frac{1}{a}+\frac{1}{b}} = \frac{b}{a+b}.
\end{equation}
Because the states of different nodes are independent, we obtain
\begin{align} \label{equi1}
p_{hh}^{*} & = (p^*_h)^2 = \frac{a^2}{(a+b)^2}, \\
p_{h\ell}^{*} & = p_{\ell h}^{*} = p^*_hp^*_\ell = \frac{ab}{(a+b)^2},  
\end{align}
and
\begin{equation}
p_{\ell\ell}^{*} = (p^*_\ell)^2 = \frac{b^2}{(a+b)^2}.
\end{equation}
Therefore, the probability that an edge is active in the equilibrium is given by
\begin{equation}\label{m2q^*}
q^* = p_{hh}^{*} = \frac{a^2}{(a+b)^2}.
\end{equation}
To derive the concurrency for model 2, we consider a pair of edges sharing a node in $G$, denoted by $(v_1, v_2)$ and $(v_2, v_3)$. Denote by $p_{s_1s_2s_3}$ the probability that node $v_i$ is in state $s_i \in \{ h, \ell\}$, where $i=1$, $2$, and $3$. Similar to the analysis of $q^*$ for model 2, we obtain the following stationary probabilities:
\begin{align} \label{equi2}
p_{hhh}^{*} & = \frac{a^3}{(a+b)^3}, \\
p_{hh\ell}^{*} & = p_{h\ell h}^{*} = p_{\ell hh}^{*} = \frac{a^2b}{(a+b)^3},  \\
p_{h\ell\ell}^{*} & = p_{\ell h\ell}^{*} = p_{\ell\ell h}^{*} = \frac{ab^2}{(a+b)^3}, \\
p_{\ell\ell\ell}^{*} & = \frac{b^3}{(a+b)^3}.
\end{align}
Therefore, we obtain
\begin{equation} \label{model2}
\kappa=\frac{p_{hhh}^{*}}{q^*} = \frac{a}{a+b} = \sqrt{q^*}.
\end{equation}

\subsection{Model 3}

Similarly, for model 3, we obtain
\begin{equation}\label{m3q^*}
q^* = p_{hh}^{*} + p_{h\ell}^{*} + p_{\ell h}^{*} = \frac{a(a+2b)}{(a+b)^2},
\end{equation}
and
\begin{equation} \label{model3}
\kappa=\frac{p_{hhh}^{*}+p_{hh\ell}^{*}+p_{h\ell h}^{*}+p_{\ell hh}^{*}+p_{\ell h\ell}^{*}}{q^*} = \frac{a^2+3ab+b^2}{(a+b)(a+2b)}=\frac{q^*+\sqrt{1-q^*}}{1+\sqrt{1-q^*}}.
\end{equation}

\subsection{Comparison among the three models}

A large fluctuation of $\mathcal{G}(t)$ over time may impact concurrency \cite{Masuda2021RoS}.
In this section, we compare the amount of concurrency between the three models. 
\begin{proposition}
Model 2 is more concurrent than model 1 given that $q^*$ is the same between the two models.
\end{proposition}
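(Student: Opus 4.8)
The plan is to reduce the comparison to the two closed-form expressions for the concurrency index that have already been derived, so that no further probabilistic computation is needed. Equation~\eqref{model1} gives $\kappa = q^*$ for model 1, and Eq.~\eqref{model2} gives $\kappa = \sqrt{q^*}$ for model 2. Since the hypothesis fixes a common value of $q^*$ across the two models, the entire statement collapses to the single scalar inequality $\sqrt{q^*} > q^*$.

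To establish this inequality, I would first note that $q^*$ is a probability and hence lies in $[0,1]$; for the comparison to be non-degenerate we restrict attention to $q^* \in (0,1)$, the regime in which each edge genuinely flips between states. On this interval I would write
\begin{equation}
\sqrt{q^*} - q^* = \sqrt{q^*}\left(1 - \sqrt{q^*}\right),
\end{equation}
and observe that both factors on the right-hand side are strictly positive when $0 < q^* < 1$. Hence $\sqrt{q^*} > q^*$, so the concurrency of model 2 strictly exceeds that of model 1 at any shared $q^* \in (0,1)$.

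The only point requiring a little care — and the closest thing to an obstacle — is confirming that a common value of $q^*$ is actually attainable in both models, so that the hypothesis is not vacuous. For model 1, the renewal-reward expression $q^* = \langle \tau_1\rangle/(\langle \tau_1\rangle + \langle \tau_2\rangle)$ ranges over all of $(0,1)$ as the mean durations vary, while for model 2 the expression $q^* = a^2/(a+b)^2$ from Eq.~\eqref{m2q^*} sweeps $(0,1)$ as the ratio $a/b$ varies; hence every target $q^* \in (0,1)$ is realizable in both models. I would also record the boundary behaviour: at $q^* = 0$ and $q^* = 1$ the two indices coincide, so the strict inequality on the open interval is exactly the sense in which model 2 is ``more concurrent'' than model 1.
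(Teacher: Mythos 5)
Your proposal is correct and follows essentially the same route as the paper's proof: both reduce the claim to the inequality $\sqrt{q^*} > q^*$ for $0 < q^* < 1$ via Eqs.~\eqref{model1} and \eqref{model2}. Your added factorization $\sqrt{q^*}(1-\sqrt{q^*})$ and the attainability check are harmless elaborations of the same one-line argument.
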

\begin{proof}
Because $q^*$ is the same between the two models and $0<q^*<1$, we obtain $\sqrt{q^*} > q^*$ using Eqs.~\eqref{model1} and \eqref{model2}.
Therefore, model 2 is more concurrent than model 1.
\end{proof}
\begin{proposition}
Model 3 is more concurrent than model 1 given that $q^*$ is the same between the two models.
\end{proposition}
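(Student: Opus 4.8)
The plan is to mirror the strategy used in the proof of the preceding proposition: reduce the comparison to a single-variable inequality in $q^*$ and verify it by elementary means. First I would invoke Eq.~\eqref{model1}, which gives $\kappa = q^*$ for Model 1, and Eq.~\eqref{model3}, which gives
\[
\kappa = \frac{q^* + \sqrt{1-q^*}}{1 + \sqrt{1-q^*}}
\]
for Model 3. Since $q^*$ is assumed to be the same across the two models, it suffices to show that this expression strictly exceeds $q^*$ for every $0 < q^* < 1$.

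The heart of the argument is a one-line manipulation. Because $1 + \sqrt{1-q^*} > 0$, the target inequality $\frac{q^* + \sqrt{1-q^*}}{1+\sqrt{1-q^*}} > q^*$ is equivalent, after clearing the denominator, to $q^* + \sqrt{1-q^*} > q^* + q^* \sqrt{1-q^*}$, i.e., to $(1-q^*)\sqrt{1-q^*} > 0$. The condition $0 < q^* < 1$ forces both $1 - q^* > 0$ and $\sqrt{1-q^*} > 0$, so the inequality holds and Model 3 is more concurrent than Model 1.

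The only point I would treat with a little care is the well-posedness of comparing the two models at a common $q^*$. I would check that $q^*$ can indeed take any value in $(0,1)$ for Model 3: from Eq.~\eqref{m3q^*} one has $q^* = 1 - b^2/(a+b)^2$, which ranges over the whole interval $(0,1)$ as the positive rates $a$ and $b$ vary, and the same range is clearly available for Model 1. I therefore expect no genuine obstacle; as with the previous proposition, the result follows immediately once the Model 3 formula for $\kappa$ as a function of $q^*$ is available.
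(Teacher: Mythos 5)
Your proposal is correct and follows essentially the same route as the paper: the paper likewise computes $\frac{q^*+\sqrt{1-q^*}}{1+\sqrt{1-q^*}} - q^* = \frac{(1-q^*)^{3/2}}{1+(1-q^*)^{1/2}} > 0$, which is exactly your cleared-denominator inequality $(1-q^*)\sqrt{1-q^*}>0$. Your additional remark on the attainable range of $q^*$ for model 3 is a harmless extra check not present in the paper.
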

\begin{proof}
Because $q^*$ is the same between the two models and $0<q^*<1$, we obtain
\begin{align}
\frac{q^*+\sqrt{1-q^*}}{1+\sqrt{1-q^*}} - q^* = \frac{(1-q^*)^{\frac{3}{2}}}{1+(1-q^*)^{\frac{1}{2}}} > 0.
\end{align}
Therefore, model 3 is more concurrent than model 1.
\end{proof}
\begin{proposition}\label{cc23}
Given that $q^*$ is the same between models 2 and 3, \\
\begin{enumerate}[(i)]
\item model 3 is more concurrent than model 2 if $0<q^*<\frac{1}{2}$,
\item model 3 is less concurrent than model 2 if $\frac{1}{2}<q^*\leq 1$,
\item model 3 is equally concurrent to model 2 if $q^*=\frac{1}{2}$.
\end{enumerate}
\end{proposition}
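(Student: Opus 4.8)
The plan is to treat both concurrency indices as explicit functions of the shared equilibrium activation probability $q^*$ and to reduce the comparison to the sign of a single difference. From Eq.~\eqref{model2}, model~2 satisfies $\kappa = \sqrt{q^*}$, while from Eq.~\eqref{model3}, model~3 satisfies $\kappa = \frac{q^* + \sqrt{1-q^*}}{1 + \sqrt{1-q^*}}$. Since the proposition fixes $q^*$ to be common across the two models (even though the underlying rates $a$ and $b$ differ between them), I would write $\kappa_2(q^*) = \sqrt{q^*}$ and $\kappa_3(q^*) = \frac{q^* + \sqrt{1-q^*}}{1 + \sqrt{1-q^*}}$ and study the difference $\kappa_3(q^*) - \kappa_2(q^*)$ on the interval $q^* \in (0,1)$.

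First I would place the difference over the common denominator $1 + \sqrt{1-q^*}$, which is strictly positive for every $q^* \in (0,1)$, so that the sign of $\kappa_3 - \kappa_2$ is governed entirely by the numerator
\begin{equation}
N(q^*) = q^* + \sqrt{1-q^*} - \sqrt{q^*} - \sqrt{q^*}\,\sqrt{1-q^*}.
\end{equation}
The crux of the argument is to factor $N$. Grouping the terms as $(q^* - \sqrt{q^*}) + \sqrt{1-q^*}\,(1 - \sqrt{q^*})$ and using the identity $q^* - \sqrt{q^*} = -\sqrt{q^*}\,(1 - \sqrt{q^*})$ yields the clean factorization
\begin{equation}
N(q^*) = \left(1 - \sqrt{q^*}\right)\left(\sqrt{1-q^*} - \sqrt{q^*}\right).
\end{equation}

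With this factorization in hand the conclusion is immediate. For $q^* \in (0,1)$ the first factor $1 - \sqrt{q^*}$ is strictly positive, so the sign of $N(q^*)$, and hence of $\kappa_3 - \kappa_2$, coincides with the sign of $\sqrt{1-q^*} - \sqrt{q^*}$. Because the square root is increasing, this quantity is positive, zero, or negative exactly according to whether $1 - q^* > q^*$, $1 - q^* = q^*$, or $1 - q^* < q^*$, i.e.\ according to whether $q^* < \tfrac12$, $q^* = \tfrac12$, or $q^* > \tfrac12$, which gives cases (i), (iii), and (ii), respectively. The only subtlety is the right endpoint $q^* = 1$ in case (ii): there the factor $1 - \sqrt{q^*}$ vanishes and both indices equal $1$, so equality rather than strict inequality holds; accordingly I would state the strict inequality $\kappa_3 < \kappa_2$ on the open interval $\tfrac12 < q^* < 1$.

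I do not anticipate any serious obstacle, since the whole argument is elementary once the factorization of $N(q^*)$ is identified. The only (modest) difficulty is purely algebraic, namely spotting the grouping that exposes the common factor $1 - \sqrt{q^*}$; after that, the monotonicity of the square root settles all three cases simultaneously, and no case-by-case manipulation of $a$ and $b$ is needed.
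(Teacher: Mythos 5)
Your proof is correct, but it takes a genuinely different route from the paper's. The paper reverts to the rate parameters: it introduces $a_2,b_2$ and $a_3,b_3$, imposes the constraint \eqref{eo23} that the two models share the same $q^*$, eliminates $a_2/(a_2+b_2)$ in favor of $\sqrt{a_3(a_3+2b_3)}/(a_3+b_3)$, and reduces the difference of the two concurrency indices to the rational expression $b_3^2\left[(b_3-a_3)^2-2a_3^2\right]$ divided by a manifestly positive quantity, whence the sign is controlled by whether $b_3\gtrless(1+\sqrt{2})a_3$, which is then translated back into $q^*\lessgtr\tfrac12$. You instead work entirely with the closed forms $\kappa_2=\sqrt{q^*}$ and $\kappa_3=(q^*+\sqrt{1-q^*})/(1+\sqrt{1-q^*})$ from Eqs.~\eqref{model2} and \eqref{model3}, and your factorization
\begin{equation*}
q^*+\sqrt{1-q^*}-\sqrt{q^*}-\sqrt{q^*}\sqrt{1-q^*}=\bigl(1-\sqrt{q^*}\bigr)\bigl(\sqrt{1-q^*}-\sqrt{q^*}\bigr)
\end{equation*}
is valid (it expands back correctly) and makes all three cases fall out of the monotonicity of the square root in one stroke. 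Your approach is shorter, avoids the rate-elimination bookkeeping, and is arguably the more natural one given that the paper has already expressed both indices as functions of $q^*$ alone; the paper's approach has the minor virtue of staying in the parameters $a,b$ that define the models. You also correctly flag that at the endpoint $q^*=1$ both indices equal $1$, so the strict inequality claimed in case (ii) on $\tfrac12<q^*\leq 1$ actually degenerates to equality at $q^*=1$ --- a small imprecision in the proposition's statement that the paper's own proof also implicitly exhibits (the difference vanishes when $b_3=0$).
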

\begin{proof}
For the sake of the present proof, let $a_2$ and $a_3$ be the rate at which the state of a node changes from $\ell$ to $h$ for models 2 and 3, respectively. Likewise, let $b_2$ and $b_3$ be the rate at which the state of a node changes from $h$ to $\ell$ for models 2 and 3, respectively. By imposing that $q^*$ is the same between the two models, we obtain
\begin{equation} \label{eo23}
q^* = \frac{a_3(a_3+2b_3)}{(a_3+b_3)^2} = \frac{{a_2}^2}{\left(a_2+b_2\right)^2}.
\end{equation}
Therefore, the difference in the concurrency between the two models, given by Eqs.~\eqref{model2} and \eqref{model3}, is given by
\begin{align}
\frac{a_3^2+3a_3b_3+b_3^2}{(a_3+b_3)(a_3+2b_3)} - \frac{a_2}{a_2+b_2} & = 
\frac{a_3^2+3a_3b_3+b_3^2}{(a_3+b_3)(a_3+2b_3)} - \frac{\sqrt{a_3(a_3+2b_3)}}{a_3+b_3} \nonumber\\
%
%
%
& = \frac{b_3^2\left[\left(b_3-a_3\right)^2-2a_3^2\right]}{\left(a_3+b_3\right)\left(a_3+2b_3\right)\left[a_3^2+3a_3b_3+b_3^2+(a_3+2b_3)\sqrt{a_3(a_3+b_3)}\right]}.
\end{align}
Therefore, the concurrency of model 3 is larger than that of model 2 if and only if $b_3>(1+\sqrt{2})a_3$, which is equivalent to $q^* = \frac{a_3(a_3+2b_3)}{(a_3+b_3)^2} = \frac{1+2\frac{b_3}{a_3}}{\left(1+\frac{b_3}{a_3}\right)^2} < \frac{1}{2}$.
\end{proof}

Using Eqs.~\eqref{model1}, \eqref{model2}, and \eqref{model3}, we compare the amount of concurrency for models 1, 2, and 3 in Fig. \ref{fig:concom}. We find that, when $q^*>1/2$, the concurrency index for model 2 is only slightly larger than that for model 3.

\floatsetup[figure]{style=plain,subcapbesideposition=top}
\captionsetup{font={small,rm}} 
\captionsetup{labelfont=bf}
\begin{figure}[H]
  \centering
  \includegraphics[width=.7\linewidth]{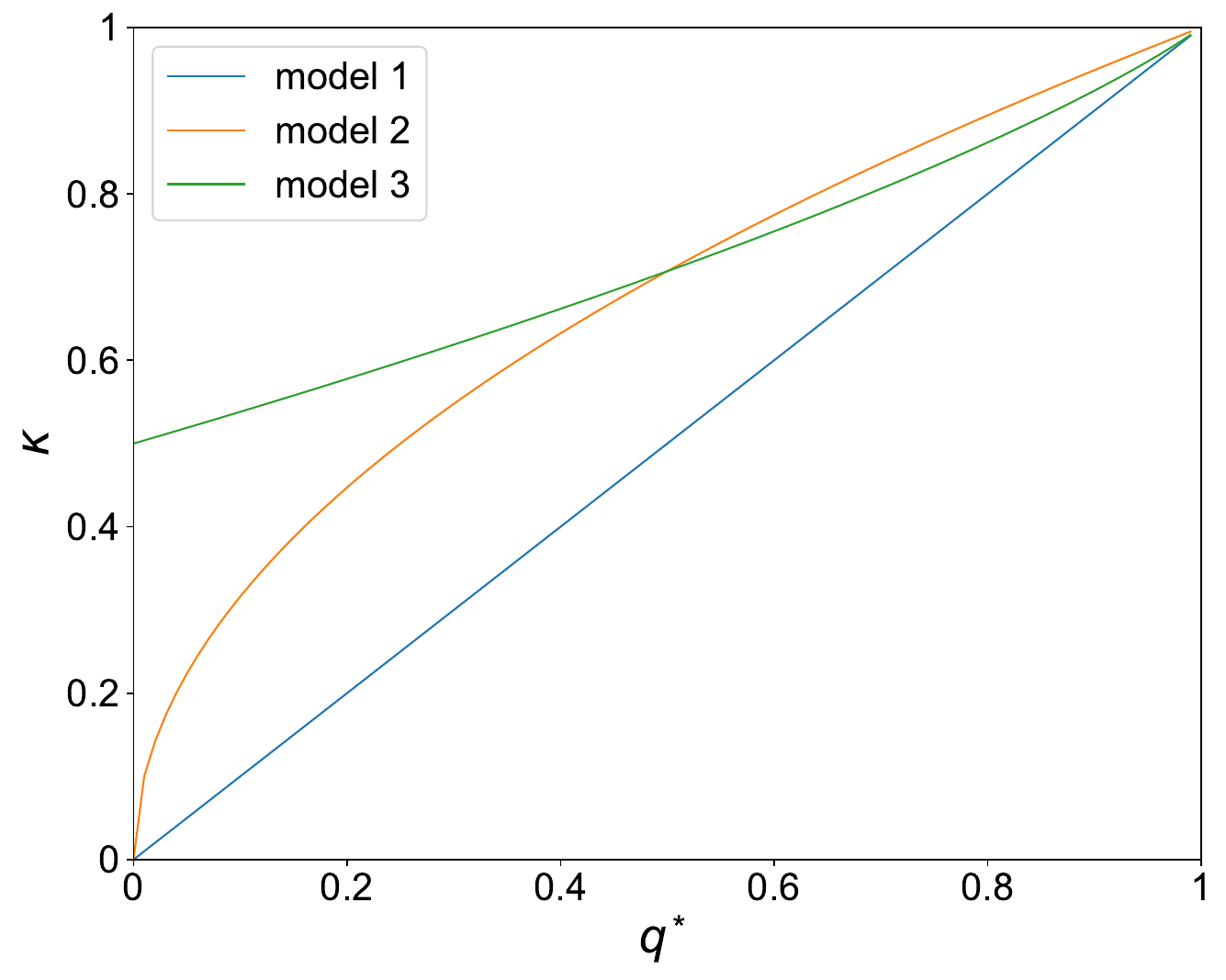}
   \caption{Concurrency index, $\kappa$, as a function of the stationary probability that an edge is active, $q^*$, for models 1, 2, and 3.}
   \label{fig:concom}
\end{figure}

\subsection{Fluctuations in the node's degree}

Consider the degree of individual nodes or its average over all nodes at any given time $t$. 
Let us fix their expectation and compare their statistical fluctuations, such as the standard deviation, across different models.
In a model with a higher concurrency, edges sharing a node in the aggregate network are more likely to be simultaneously active or inactive, which makes
the statistical fluctuation of the degree larger. Therefore, we anticipate that the concurrency of a temporal network model and the statistical fluctuation in the node's degree are interrelated. In this section, we establish such relationships for each of the three models. 
Let $A=(A_{ij})_{N \times N}$ be the adjacency matrix of static network $G$. 

To analyze the fluctuation in the node's degree in model 1, we define random variables by
\begin{equation}
x_{ij}(t) = 
\begin{cases}
0 & \text{if edge $(i, j)$ is inactive at time $t$}, \\
1 & \text{if edge $(i, j)$ is active at time $t$}.
\end{cases}
\end{equation}
The adjacency matrix of the temporal network at time $t$, $\mathcal{G}(t)$, of model 1 is given by the $N\times N$ matrix $B(t)=(B_{ij}(t))$, where $B_{ij}(t)=A_{ij}x_{ij}(t)$ (with $i, j \in \{1, \ldots, N\}$).
Let $k_i(t) \equiv \sum_{j=1}^N B_{ij}(t)$ be the degree of the $i$th node in network $\mathcal{G}(t)$. The average degree at time $t$ is given by
\begin{equation} \label{average_degree}
\langle k\rangle(t) \equiv \frac{1}{N} \sum_{i=1}^N k_i(t).
\end{equation}
In the following text, we omit $t$ because we discuss the fluctuations in $k_i(t)$ and $\langle k \rangle(t)$ in the equilibrium.
\begin{proposition} \label{1}
For model 1, it holds true that $k_i\sim B(\overline{k}_i, q^*)$ and $\langle k\rangle\sim\frac{2}{N}B\left(M, q^*\right)$, where $B(\cdot, \cdot)$ represents the binomial distribution. (We remind that $\overline{k}_i$ is the degree of the $i$th node in $G$ and that $M$ is the number of edges in $G$.)
\end{proposition}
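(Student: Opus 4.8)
The plan is to exploit the defining feature of model~1---that distinct edges switch between the active and inactive states according to mutually independent renewal processes---together with the equilibrium marginal already computed for a single edge. First I would record that, in the equilibrium, the indicator $x_{ij}$ of a single edge $(i,j)\in E$ is a Bernoulli random variable with success probability $q^*$. This is precisely the content of the renewal reward computation giving $q^*=\langle \tau_1\rangle/(\langle \tau_1\rangle+\langle \tau_2\rangle)$: the long-run fraction of time an edge is active equals the stationary probability that it is active at a fixed time, so $\Pr[x_{ij}=1]=q^*$ and $\Pr[x_{ij}=0]=1-q^*$.

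For the first claim, I would write $k_i=\sum_{j=1}^N B_{ij}=\sum_{j=1}^N A_{ij}x_{ij}$ and note that $A_{ij}=1$ exactly for the $\overline{k}_i$ neighbors of $i$ in $G$ and $A_{ij}=0$ otherwise. Hence $k_i$ reduces to a sum of exactly $\overline{k}_i$ terms $x_{ij}$, one per incident edge. Since these edges are distinct, their indicators are independent Bernoulli$(q^*)$ variables by the model definition, and a sum of $\overline{k}_i$ independent and identically distributed Bernoulli$(q^*)$ variables is by definition distributed as $B(\overline{k}_i,q^*)$. This establishes $k_i\sim B(\overline{k}_i,q^*)$.

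For the second claim, I would start from $\langle k\rangle=\frac{1}{N}\sum_{i=1}^N k_i=\frac{1}{N}\sum_{i=1}^N\sum_{j=1}^N A_{ij}x_{ij}$ and use that $G$ is undirected, so $A_{ij}=A_{ji}$ and $x_{ij}=x_{ji}$, meaning each of the $M$ edges contributes its indicator exactly twice to the double sum. Thus $\sum_{i,j}A_{ij}x_{ij}=2\sum_{e\in E}x_e$, where $x_e$ is the equilibrium indicator of edge $e$. The $M$ indicators $\{x_e\}_{e\in E}$ are independent Bernoulli$(q^*)$, so $\sum_{e\in E}x_e\sim B(M,q^*)$ and therefore $\langle k\rangle=\frac{2}{N}\sum_{e\in E}x_e\sim\frac{2}{N}B(M,q^*)$.

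I do not anticipate a serious obstacle here; the argument is essentially bookkeeping once the two structural facts---the per-edge equilibrium marginal $q^*$ and the mutual independence of edges in model~1---are invoked. The only point requiring care is the factor of two in the average-degree claim, which must be traced to the double counting of each undirected edge in the sum $\sum_{i,j}A_{ij}x_{ij}$, together with the explicit observation that $x_{ij}=x_{ji}$, so that the two appearances of each edge are perfectly correlated (the same Bernoulli variable) rather than independent. This correlation is what makes the number of independent summands equal to $M$ rather than $2M$, and hence yields $B(M,q^*)$ scaled by $2/N$.
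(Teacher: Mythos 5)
Your argument is correct and follows essentially the same route as the paper's proof: express $k_i$ as a sum of $\overline{k}_i$ independent Bernoulli$(q^*)$ edge indicators, and express $\langle k\rangle$ via the undirectedness of $G$ as $\tfrac{2}{N}$ times a sum over the $M$ distinct edges. Your explicit remark that the two appearances of each edge in the double sum are the \emph{same} Bernoulli variable (so the number of independent summands is $M$, not $2M$) is a point the paper handles implicitly by restricting the sum to $j>i$, but the substance is identical.
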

\begin{proof}
For a proposition $C$, define the indicator function by
\begin{equation}
\mathds{1}(C) = \begin{cases}
1 &\text{if } C \text{ is true},\\
0 &\text{if } C \text{ is false}.
\end{cases}
\end{equation}
For a node $i$, we obtain
\begin{align}
k_i = \sum_{j=1}^N B_{ij} = \sum_{j=1}^N \mathds{1}(A_{ij}=1) x_{ij}.
\end{align}
Because $x_{ij}$ are independent and identically distributed Bernoulli random variables and $k_i$ is the sum of $\overline{k}_i$ terms, $k_i$ obeys $B\left(\overline{k}_i, q^*\right)$.          

Because we have assumed that the network is undirected, we obtain
\begin{align}
\langle k\rangle & = \frac{1}{N} \sum_{i=1}^N k_i\nonumber\\
& = \frac{2}{N} \sum_{i=1}^N \sum_{j=i+1}^N B_{ij}\nonumber\\
& = \frac{2}{N} \sum_{i=1}^N \sum_{j=i+1}^N \mathds{1}(A_{ij}=1)x_{ij}.
\end{align}
Because there are $M$ terms that comprise the summation, $\langle k\rangle$ obeys $\frac{2}{N}B\left(M, q^*\right)$.             
\end{proof}
%
%
We denote the expectation by $\mathbb E$ and the variance by $\sigma^2$; the standard deviation is equal to $\sigma$.
Using Proposition~\ref{1}, we obtain
\begin{align}
\mathbb E[k_i] & = \overline{k}_iq^*,\label{Ei1}\\
\sigma^2[k_i] & = \overline{k}_iq^*(1-q^*),\label{si1}\\
\mathbb E[\langle k\rangle] & = \frac{2Mq^*}{N},\label{E1}\\
\sigma^2[\langle k\rangle] & = \frac{4Mq^*(1-q^*)}{N^2}\label{s1}.
\end{align}

To analyze models 2 and 3, we define 
\begin{equation}
y_{i}(t) = 
\begin{cases}
0 & \text{if node $i$ is in the $\ell$ state at time $t$}, \\
1 & \text{if node $i$ is in the $h$ state at time $t$}.
\end{cases}
\end{equation} 
The adjacency matrix of $\mathcal{G}(t)$ for model 2 is given by $B(t)=(B_{ij}(t))$, where $B_{ij}(t)=A_{ij}y_{i}(t)y_{j}(t)$.
Using this expression, we can prove the following proposition.
\begin{proposition} \label{2}
For model 2, we obtain
\begin{align}
\mathbb E[k_i] =& \overline{k}_i\left(\frac{a}{a+b}\right)^2,\label{Ei2}\\
\sigma^2[k_i] =& \frac{\overline{k}_ia^2b}{(a+b)^3}\left(1+\frac{\overline{k}_ia}{a+b}\right),\label{Di2}\\
\mathbb E[\langle k\rangle] =& \frac{2M}{N}\left(\frac{a}{a+b}\right)^2,\label{E2}\\
\sigma^2[\langle k\rangle] =& \frac{4M}{N^2}\left(\frac{a}{a+b}\right)^2 + \frac{4M(M-1)}{N^2}\left(\frac{a}{a+b}\right)^3 - \frac{8}{N^2}\left(\frac{a}{a+b}\right)^3\frac{b}{a+b}\sum_{i=1}^{N-1}\sum_{j=i+1}^N\sum\limits_{\substack{k>i\\k\neq j}}^{N-1}\sum\limits_{\substack{\ell>k\\\ell\neq i,j}}^N A_{ij}A_{k\ell}\nonumber\\ 
& \quad - \frac{4M^2}{N^2}\left(\frac{a}{a+b}\right)^4.\label{D2}
\end{align}
\end{proposition}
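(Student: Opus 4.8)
The plan is to compute each of the four moments directly from the representation $B_{ij}=A_{ij}y_iy_j$, using the fact that the node indicators $y_i$ are independent Bernoulli random variables with $\mathbb{E}[y_i]=p_h^*=a/(a+b)$. First I would handle the per-node degree $k_i=\sum_{j=1}^N A_{ij}y_iy_j$. Since $A_{ii}=0$, every term in the sum contains $y_i$ together with a distinct $y_j$, so by independence $\mathbb{E}[k_i]=\sum_j A_{ij}\mathbb{E}[y_i]\mathbb{E}[y_j]=\overline{k}_i (p_h^*)^2$, giving Eq.~\eqref{Ei2}. For the variance, I would write $\sigma^2[k_i]=\mathbb{E}[k_i^2]-(\mathbb{E}[k_i])^2$ and expand $k_i^2=\sum_{j,j'}A_{ij}A_{ij'}y_i^2 y_j y_{j'}$. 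The key bookkeeping is to use $y_i^2=y_i$ (indicator) and to split the double sum into diagonal terms $j=j'$ (contributing $\mathbb{E}[y_iy_j]=(p_h^*)^2$) and off-diagonal terms $j\neq j'$ (contributing $\mathbb{E}[y_iy_jy_{j'}]=(p_h^*)^3$). Collecting these and subtracting the squared mean should collapse to Eq.~\eqref{Di2} after factoring; the algebra is routine once the moment structure is fixed.

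The average-degree statistics require more care because, unlike in Proposition~\ref{1}, the summands of $\langle k\rangle=\frac{2}{N}\sum_{i<j}A_{ij}y_iy_j$ are no longer independent: two distinct edges that share a node involve a common indicator $y_i$, inducing correlations. For the mean, independence is not needed and linearity immediately yields $\mathbb{E}[\langle k\rangle]=\frac{2M}{N}(p_h^*)^2$, Eq.~\eqref{E2}. For the variance I would set $Z=\sum_{i<j}A_{ij}y_iy_j$ so that $\sigma^2[\langle k\rangle]=\frac{4}{N^2}\sigma^2[Z]$, and compute $\mathbb{E}[Z^2]=\sum_{i<j}\sum_{k<\ell}A_{ij}A_{k\ell}\,\mathbb{E}[y_iy_jy_ky_\ell]$. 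The plan is to partition the pairs of edges $\{i,j\},\{k,\ell\}$ by how many nodes they share, because the value of $\mathbb{E}[y_iy_jy_ky_\ell]$ depends only on the number of distinct indices appearing: two shared nodes (the same edge) gives $(p_h^*)^2$, exactly one shared node gives $(p_h^*)^3$, and no shared node gives $(p_h^*)^4$.

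Counting these three cases is the crux. The ``same edge'' case contributes $M(p_h^*)^2$. The ``disjoint edges'' case, where all four indices are distinct, naturally produces a term proportional to $(p_h^*)^4$; the cleanest route is to write this count as (total ordered pairs of edges) minus (pairs sharing at least one node), so that the leading $M^2 (p_h^*)^4$ piece and the correction appear separately. The ``one shared node'' case is exactly the adjacent-edge count that the fourfold restricted sum $\sum_{i<j}\sum_{k>i,k\neq j}\sum_{\ell>k,\ell\neq i,j}A_{ij}A_{k\ell}$ in Eq.~\eqref{D2} is designed to encode, and it contributes at order $(p_h^*)^3$. After assembling $\mathbb{E}[Z^2]$ from these counts and subtracting $(\mathbb{E}[Z])^2=M^2(p_h^*)^4$, I expect the four displayed terms of Eq.~\eqref{D2} to emerge, with the signs of the $(p_h^*)^3$ correction and the $-4M^2(p_h^*)^4/N^2$ term reflecting the inclusion--exclusion.

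The main obstacle is purely combinatorial rather than probabilistic: correctly enumerating and matching the adjacent-edge and disjoint-edge counts to the index ranges in Eq.~\eqref{D2}, and keeping the factors of $2$ (from the undirected $i<j$ convention and from expanding $\mathbb{E}[Z^2]$ over ordered versus unordered edge pairs) consistent. The moment values $\mathbb{E}[y_iy_jy_ky_\ell]\in\{(p_h^*)^2,(p_h^*)^3,(p_h^*)^4\}$ are immediate from independence and $y_i^2=y_i$, so once the three geometric cases are counted correctly, substituting $p_h^*=a/(a+b)$ and simplifying delivers all four identities.
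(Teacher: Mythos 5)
Your approach is the same as the paper's (Appendix~\ref{Appendix: A}): direct moment computation from $B_{ij}=A_{ij}y_iy_j$ with independent Bernoulli $y_i$, the reduction $y_i^2=y_i$, and a partition of edge pairs by the number of shared nodes, followed by complementary counting against the total number $M(M-1)/2$ of edge pairs. One correction to your bookkeeping before you execute it: the fourfold restricted sum $\sum_{i<j}\sum_{k>i,\,k\neq j}\sum_{\ell>k,\,\ell\neq i,j}A_{ij}A_{k\ell}$ forces all four indices to be distinct, so it counts the \emph{disjoint} (node-sharing-free) edge pairs, not the adjacent ones as you state; the adjacent-pair count enters Eq.~\eqref{D2} only implicitly, as $M(M-1)/2$ minus that sum, which is exactly why the fourfold sum appears with the coefficient $\left(\frac{a}{a+b}\right)^3\frac{b}{a+b}=(p_h^*)^3(1-p_h^*)$ after the $(p_h^*)^3$ and $(p_h^*)^4$ contributions are combined. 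With that relabeling your plan reproduces the paper's derivation.
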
 
We prove Proposition~\ref{2} in Appendix~\ref{Appendix: A}.

The adjacency matrix of $\mathcal{G}(t)$ for model 3 is given by $B(t)=(B_{ij}(t))$, where $B_{ij}=A_{ij}\left[1 - \left(1 - y_i(t)\right)\left(1 - y_j(t)\right)\right]$.
Using this expression, we can prove the following proposition.
\begin{proposition} \label{4}
For model 3, we obtain
\begin{align}
\mathbb E[k_i] =& \frac{\overline{k}_ia(a+2b)}{(a+b)^2},\label{Ei3}\\
\sigma^2[k_i] =& \frac{\overline{k}_ia(a+2b)}{(a+b)^2}+\frac{\overline{k}_i(\overline{k}_i-1)a(a^2+3ab+b^2)}{(a+b)^3}-\left[\frac{\overline{k}_ia(a+2b)}{(a+b)^2}\right]^2,\label{Di3}\\
\mathbb E[\langle k\rangle] =& \frac{2Ma(a+2b)}{N(a+b)^2},\label{E3}\\
\sigma^2[\langle k\rangle] =& \frac{4Ma(a+2b)}{N^2(a+b)^2}+\frac{4M(M-1)a(a^2+3ab+b^2)}{N^2(a+b)^3}\nonumber\\
& \quad \ - \frac{8ab^3}{N^2(a+b)^4}\sum_{i=1}^{N-1}\sum_{j=i+1}^N\sum\limits_{\substack{k>i\\k\neq j}}^{N-1}\sum\limits_{\substack{\ell>k\\\ell\neq i,j}}^N A_{ij}A_{k\ell} - \frac{4M^2a^2(a+2b)^2}{N^2(a+b)^4}.\label{D3}
\end{align}
\end{proposition}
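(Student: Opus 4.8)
The plan is to exploit the node-based representation of model~3, in which each node $i$ carries an independent Bernoulli variable $y_i$ with $\Pr(y_i=1)=p_h^*=a/(a+b)$, and the edge indicator is $z_{ij}=1-(1-y_i)(1-y_j)$, so that $B_{ij}=A_{ij}z_{ij}$. The entire proof then reduces to computing first and second moments of sums of the $z_{ij}$, exactly as in the proof of Proposition~\ref{2}, but with the model-3 edge rule. Since an edge is inactive only when both endpoints are in state $\ell$, the single-edge mean is $\mathbb E[z_{ij}]=1-(p_\ell^*)^2=q^*$ with $q^*$ given by Eq.~\eqref{m3q^*}. Because $k_i=\sum_j A_{ij}z_{ij}$ and $\langle k\rangle=\tfrac1N\sum_i k_i$, linearity of expectation immediately yields Eqs.~\eqref{Ei3} and \eqref{E3}.

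For the variance $\sigma^2[k_i]$, I would expand $k_i^2=\sum_{j}\sum_{j'}A_{ij}A_{ij'}z_{ij}z_{ij'}$ and use $A_{ij}^2=A_{ij}$ and $z_{ij}^2=z_{ij}$ to split it into diagonal terms ($j=j'$), contributing $\overline k_i q^*$, and off-diagonal terms ($j\neq j'$). The one genuinely new computation is $\mathbb E[z_{ij}z_{ij'}]$ for two edges sharing node~$i$: conditioning on $y_i$, if $y_i=1$ both edges are active, while if $y_i=0$ each edge is active if and only if its far endpoint is in state $h$, so that $\mathbb E[z_{ij}z_{ij'}]=p_h^*+p_\ell^*(p_h^*)^2=\frac{a(a^2+3ab+b^2)}{(a+b)^3}=:r$. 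Multiplying by the $\overline k_i(\overline k_i-1)$ ordered pairs of distinct neighbors of $i$ and subtracting $(\mathbb E[k_i])^2$ gives Eq.~\eqref{Di3}.

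The hardest part is $\sigma^2[\langle k\rangle]$, which is a second moment of $W=\sum_{i<j}A_{ij}z_{ij}$ via $\langle k\rangle=\tfrac2N W$. I would expand $\mathbb E[W^2]=\sum_{i<j}\sum_{k<\ell}A_{ij}A_{k\ell}\,\mathbb E[z_{ij}z_{k\ell}]$ and classify the ordered edge pairs by overlap: (i) identical edges give $q^*$ and there are $M$ of them; (ii) edges sharing exactly one node give $r$ by the conditioning above; (iii) disjoint edges are independent and give $(q^*)^2$. Because two distinct edges of a simple graph meet in at most one node, these cases are exhaustive, and the pair counts satisfy $M(M-1)=\sum_i\overline k_i(\overline k_i-1)+2\Sigma_{\mathrm{disj}}$, where $\Sigma_{\mathrm{disj}}$ is the number of unordered disjoint edge pairs counted by the quadruple sum $\sum_{i<j}\sum_{k>i,\,k\neq j}\sum_{\ell>k,\,\ell\neq i,j}A_{ij}A_{k\ell}$ appearing in Eq.~\eqref{D3}; here the constraint $k>i$ forces all four indices to be distinct and counts each disjoint pair once.

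The remaining obstacle is purely bookkeeping: rather than leaving the answer in terms of the shared-node count $\sum_i\overline k_i(\overline k_i-1)$, the stated form assigns the weight $r$ to \emph{all} $M(M-1)$ distinct ordered pairs and then corrects the disjoint ones. This is clean because the per-pair discrepancy is the tidy quantity $r-(q^*)^2=\frac{ab^3}{(a+b)^4}$, which I would verify by placing both over $(a+b)^4$ and checking the identity $a(a^2+3ab+b^2)(a+b)=a^2(a+2b)^2+ab^3$. Assembling $\sigma^2[W]=Mq^*+M(M-1)r-2\Sigma_{\mathrm{disj}}\,(r-(q^*)^2)-M^2(q^*)^2$ and multiplying by $4/N^2$ then reproduces Eq.~\eqref{D3} term by term, the coefficient $8ab^3/[N^2(a+b)^4]$ on $\Sigma_{\mathrm{disj}}$ arising as $\tfrac{4}{N^2}\cdot 2\,(r-(q^*)^2)$. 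I expect the only places to slip are the conditioning computation of $r$ and keeping the disjoint-pair counting consistent with the index constraints in the quadruple sum.
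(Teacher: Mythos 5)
Your proposal is correct and follows essentially the same route as the paper's proof in Appendix~\ref{Appendix: B}: both use the representation $B_{ij}=A_{ij}\left[1-(1-y_i)(1-y_j)\right]$ with independent Bernoulli $y_i$, split the second moments into diagonal, shared-node, and disjoint edge-pair contributions, and regroup via $M(M-1)=\sum_i\overline{k}_i(\overline{k}_i-1)+2\Sigma_{\mathrm{disj}}$ so that the correction term carries the coefficient $r-(q^*)^2=ab^3/(a+b)^4$. The only cosmetic difference is that you evaluate the shared-node moment $r$ by conditioning on the state of the common node, whereas the paper expands $(y_i+y_j-y_iy_j)(y_i+y_k-y_iy_k)=y_i+y_jy_k-y_iy_jy_k$ and takes expectations directly; both give $r=a(a^2+3ab+b^2)/(a+b)^3$.
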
 
We prove Proposition~\ref{4} in Appendix~\ref{Appendix: B}.

Now, let us compare the variance of $k_i$ and $\langle k\rangle$ among the three models under the condition that the expectation of $k_i$ and $\langle k\rangle$ is the same across the different models. This condition is equivalent to keeping $q^*$ the same across the models for each edge. The purpose of examining the variance of the node's degree is the following. Similar to Eq.~\eqref{eq:cardinality-mathcalS}, the number of concurrent edge pairs at time $t$ is given by $\sum_{i=1}^N k_i (k_i-1)/2$. Because this expression contains the second moment of the degree, we expect that the variance of the degree may be related to our concurrency measure.

In fact, we obtain the following results for the fluctuation of the degree, which are parallel to those for the concurrency index.
\begin{proposition} \label{3}
Assume that $\mathbb E[k_i]$ is the same between models 1 and 2 for any $i\in \{1, \ldots, N\}$.
For any given $q^*$, the variance $\sigma^2[k_i]$ is larger for model 2 than model 1 if $\overline{k}_i>1$. Likewise, $\sigma^2[\langle k\rangle]$ is larger for model 2 than model 1 if there exists $i$ such that $\overline{k}_i>1$. 
\end{proposition}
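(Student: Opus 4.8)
The plan is to substitute the equal-expectation constraint into the variance formulas of Propositions~\ref{1} and~\ref{2} and then reduce each difference to a manifestly signed expression. Matching $\mathbb E[k_i]$ between the two models through Eqs.~\eqref{Ei1} and~\eqref{Ei2} forces $q^* = \left(\tfrac{a}{a+b}\right)^2$, so that $\tfrac{a}{a+b}=\sqrt{q^*}$ and $\tfrac{b}{a+b}=1-\sqrt{q^*}$. Writing $s\equiv\sqrt{q^*}$, I would use these identities to re-express every model-2 quantity in terms of $q^*$ alone.

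First I would treat the single-node degree. Under the substitution, Eq.~\eqref{Di2} becomes $\sigma^2[k_i]=\overline{k}_i q^*(1-s)(1+\overline{k}_i s)$, whereas Eq.~\eqref{si1} reads $\overline{k}_i q^*(1-q^*)=\overline{k}_i q^*(1-s)(1+s)$. Subtracting and factoring out the common positive factor $\overline{k}_i q^*(1-s)$ leaves $(1+\overline{k}_i s)-(1+s)=s(\overline{k}_i-1)$, so the difference equals $\overline{k}_i q^*\sqrt{q^*}\,(1-\sqrt{q^*})(\overline{k}_i-1)$. Since $0<q^*<1$, this is strictly positive exactly when $\overline{k}_i>1$, which settles the first claim.

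For $\langle k\rangle$ I would perform the same substitution in Eq.~\eqref{D2}. The terms without the quadruple sum collapse: the $\tfrac{4M}{N^2}s^2$ piece combines with $-\tfrac{4Mq^*(1-q^*)}{N^2}$ coming from Eq.~\eqref{s1} to give $\tfrac{4M}{N^2}s^4$, which merges with $-\tfrac{4M^2}{N^2}s^4$ into $-\tfrac{4M(M-1)}{N^2}s^4$; collecting this against the surviving cubic term $\tfrac{4M(M-1)}{N^2}s^3$ factors the whole difference as $\tfrac{4}{N^2}\,s^3(1-s)\bigl[M(M-1)-2\Sigma\bigr]$, where $\Sigma$ is the quadruple sum over $A_{ij}A_{k\ell}$ in Eq.~\eqref{D2}. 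Because $s^3(1-s)>0$, the sign is decided entirely by $M(M-1)-2\Sigma$.

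The crux, and the step I expect to be the main obstacle, is identifying this combinatorial factor. I would argue that $\Sigma$ counts precisely the unordered pairs of node-disjoint edges of $G$: the constraints $k\neq j$ and $\ell\neq i,j$ together with $k>i$ force $\{k,\ell\}\cap\{i,j\}=\emptyset$, and the ordering $k>i$ of the two smallest endpoints selects each such pair once. Since distinct edges of a simple graph share $0$ or $1$ nodes, $\binom{M}{2}=\Sigma+\left|\mathcal S\right|$, whence $M(M-1)-2\Sigma=2\left|\mathcal S\right|=\sum_{i=1}^N \overline{k}_i(\overline{k}_i-1)$ by Eq.~\eqref{eq:cardinality-mathcalS}. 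This sum of nonnegative terms is strictly positive if and only if some $\overline{k}_i>1$, which gives the second claim. The algebraic collapse is routine bookkeeping; the only genuinely delicate point is verifying the counting interpretation of $\Sigma$ and the simple-graph fact that two distinct edges meet in at most one node.
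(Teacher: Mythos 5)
Your proof is correct and follows essentially the same route as the paper's: the same substitution $a/(a+b)=\sqrt{q^*}$ reduces both differences, and the decisive combinatorial fact---that the quadruple sum $\Sigma$ counts unordered pairs of node-disjoint edges, so $M(M-1)-2\Sigma=\sum_{i=1}^N \overline{k}_i(\overline{k}_i-1)>0$ exactly when some $\overline{k}_i>1$---is precisely the content of Eq.~\eqref{combination} and the surrounding discussion in the paper. The only cosmetic difference is that you carry the computation through to the exact factored identity $\tfrac{4}{N^2}s^3(1-s)\left[M(M-1)-2\Sigma\right]$, whereas the paper instead bounds the difference from below by replacing $\Sigma$ with $M(M-1)/2$; both arguments deliver the same strict positivity.
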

\begin{proof}
We use subscripts 1 and 2 to represent the variance with respect to the probability distribution for models 1 and 2, respectively. We substitute Eq.~\eqref{m2q^*} in Eq.~\eqref{si1} and use Eq.~\eqref{Di2} to obtain
\begin{align}
\sigma^2_2[k_i] - \sigma^2_1[k_i] & = \frac{\overline{k}_i{a}^2b}{(a+b)^3}\left(1+\frac{\overline{k}_ia}{a+b}\right)-\overline{k}_iq^*(1-q^*)\nonumber\\
& = \frac{\overline{k}_i{a}^2b}{(a+b)^3}\left(1+\frac{\overline{k}_ia}{a+b}\right) -\overline{k}_i\left(\frac{a}{a+b}\right)^2\left[1-\left(\frac{a}{a+b}\right)^2\right]\nonumber\\
& = \overline{k}_i(\overline{k}_i-1)\frac{{a}^3b}{\left(a+b\right)^4}\nonumber\\
& > 0.
\end{align}

Next, we compare the variance of the average degree. Because $M$ is the number of edges of static network $G$ and there exists $i$ such that $\overline{k}_i>1$ in $G$, we obtain
\begin{equation}
\label{combination}
\sum_{i=1}^{N-1}\sum_{j=i+1}^N\sum\limits_{\substack{k>i\\k\neq j}}^{N-1}\sum\limits_{\substack{\ell>k\\\ell\neq i,j}}^N A_{ij}A_{k\ell} < \frac{M(M-1)}{2}
\end{equation}
for the following reason. The right-hand side of Eq.~\eqref{combination} is the number of pairs of edges. The left-hand side is the number of pairs of edges that do not share a node. These two quantities are equal to each other if and only if there is no pair of edges sharing a node, i.e., when all nodes have the degree at most 1.

By substituting Eq.~\eqref{m2q^*} in Eq.~\eqref{s1} and using Eqs.~\eqref{D2} and \eqref{combination}, we obtain
\begin{align}
& \sigma^2_2\left[\langle k\rangle\right]-\sigma^2_1\left[\langle k\rangle\right] \nonumber\\ 
= &\frac{4M}{N^2}\left(\frac{a}{a+b}\right)^2 + \frac{4M(M-1)}{N^2}\left(\frac{a}{a+b}\right)^3 - \frac{8}{N^2}\left(\frac{a}{a+b}\right)^3\frac{b}{a+b}\sum_{i=1}^{N-1}\sum_{j=i+1}^N\sum\limits_{\substack{k>i\\k\neq j}}^{N-1}\sum\limits_{\substack{\ell>k\\\ell\neq i,j}}^N A_{ij}A_{k\ell}\nonumber\\ 
& \quad - \left(\frac{2M}{N}\right)^2 \left(\frac{a}{a+b}\right)^4 - \frac{4Mq^*(1-q^*)}{N^2}\nonumber\\
> &\frac{4M}{N^2}\left(\frac{a}{a+b}\right)^2 + \frac{4M(M-1)}{N^2}\left(\frac{a}{a+b}\right)^3 - \frac{4M(M-1)}{N^2}\left(\frac{a}{a+b}\right)^3\frac{b}{a+b} \nonumber\\ 
& \quad - \left(\frac{2M}{N}\right)^2 \left(\frac{a}{a+b}\right)^4 - \frac{4M}{N^2}\left(\frac{a}{a+b}\right)^2\left[1 - \left(\frac{a}{a+b}\right)^2\right]\nonumber\\
= &0.
\end{align}
\end{proof}

\begin{proposition} \label{5}
Assume that $\mathbb E[k_i]$ is the same between models 1 and 3 for any $i\in \{1, \ldots, N\}$.
For any given $q^*$, the variance $\sigma^2[k_i]$ is larger for model 3 than model 1 if $\overline{k}_i>1$. Likewise, $\sigma^2[\langle k\rangle]$ is larger for model 3 than model 1 if there exists $i$ such that $\overline{k}_i>1$.
\end{proposition}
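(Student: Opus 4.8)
The plan is to mirror the proof of Proposition~\ref{3}, replacing the model-2 formulas by their model-3 counterparts while tracking the same two quantities. First I would substitute the model-3 equilibrium probability $q^*$ from Eq.~\eqref{m3q^*} into the model-1 variance $\sigma^2_1[k_i]$ given by Eq.~\eqref{si1}, using the identity $1-q^* = b^2/(a+b)^2$ to write it purely in terms of $a$ and $b$. Subtracting this from the model-3 expression $\sigma^2_3[k_i]$ in Eq.~\eqref{Di3} and simplifying, I expect the $\overline{k}_i^2$ and $\overline{k}_i$ contributions to recombine so that the difference collapses to
\begin{equation}
\sigma^2_3[k_i] - \sigma^2_1[k_i] = \overline{k}_i(\overline{k}_i-1)\frac{ab^3}{(a+b)^4},
\end{equation}
which is strictly positive exactly when $\overline{k}_i>1$, establishing the first claim. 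This is the direct analogue of the clean factorization $\overline{k}_i(\overline{k}_i-1)a^3b/(a+b)^4$ obtained for model~2.

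For the average degree, I would again exploit the bound Eq.~\eqref{combination}, namely that the number of edge pairs sharing no node is strictly less than $M(M-1)/2$ whenever some node has degree exceeding $1$. Because the coefficient of the fourfold sum $\sum\sum\sum\sum A_{ij}A_{k\ell}$ in Eq.~\eqref{D3} is negative (it equals $-8ab^3/[N^2(a+b)^4]$), replacing this sum by its upper bound $M(M-1)/2$ yields a strict lower bound on $\sigma^2_3[\langle k\rangle]$. I would then substitute Eq.~\eqref{m3q^*} into Eq.~\eqref{s1} to express $\sigma^2_1[\langle k\rangle]$, form the difference of the two lower-bounded quantities, factor out $4M/N^2$, and verify that the bracketed remainder vanishes identically, so that the strict inequality survives.

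The routine part is the algebra; the main obstacle is recognizing the cancellations that force the remainder to vanish. Writing $p=a/(a+b)$, so that $q^*=2p-p^2$ and $1-q^*=(1-p)^2$, the two identities that drive the collapse are $(2p-p^2)-(2p-p^2)(1-p)^2=(2p-p^2)^2$, which handles the $\sigma^2_1$ subtraction together with the diagonal degree-one term, and $(p+p^2-p^3)-p(1-p)^3=(2p-p^2)^2$, which matches the off-diagonal $M(M-1)$ term against the substituted bound. With these, the bracket telescopes as $q^{*2}+(M-1)q^{*2}-Mq^{*2}=0$, so the strict inequality from Eq.~\eqref{combination} gives $\sigma^2_3[\langle k\rangle]>\sigma^2_1[\langle k\rangle]$, completing the second claim. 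The model-3 expressions are cubic in $a$ and $b$ rather than the simpler model-2 forms, so the bookkeeping is heavier, but the overall structure of the argument is identical to that of Proposition~\ref{3}.
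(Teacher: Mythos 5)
Your proposal is correct and follows essentially the same route as the paper's proof: substituting the model-3 value of $q^*$ from Eq.~\eqref{m3q^*} into Eqs.~\eqref{si1} and \eqref{s1}, obtaining the factorization $\overline{k}_i(\overline{k}_i-1)ab^3/(a+b)^4$ for the node-degree variance difference, and bounding the fourfold sum in Eq.~\eqref{D3} via Eq.~\eqref{combination} so that the remaining terms cancel to zero. The identities you state in terms of $p=a/(a+b)$ check out and reproduce exactly the cancellations used in the paper.
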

\begin{proof}
We substitute Eq.~\eqref{m3q^*} in Eq.~\eqref{si1} and use Eq.~\eqref{Di3} to obtain
\begin{align}
& \sigma^2_3[k_i] - \sigma^2_1[k_i]\nonumber\\ 
= & \overline{k}_i\frac{a(a+2b)}{\left(a+b\right)^2}+\overline{k}_i(\overline{k}_i-1)\frac{a({a}^2+3ab+{b}^2)}{\left(a+b\right)^3}-\left[\overline{k}_i\frac{a(a+2b)}{\left(a+b\right)^2}\right]^2-\overline{k}_iq^*(1-q^*)\nonumber\\
= & \overline{k}_i(\overline{k}_i-1)\frac{a{b}^3}{\left(a+b\right)^4}\nonumber\\
> & 0.
\end{align}
By substituting Eq.~\eqref{m3q^*} in Eq.~\eqref{s1} and using Eqs.~\eqref{D3} and \eqref{combination}, we obtain
\begin{align}
& \sigma^2_3\left[\langle k\rangle\right] - \sigma^2_1\left[\langle k\rangle\right] \nonumber\\
= & \frac{4M}{N^2}\cdot\frac{a(a+2b)}{(a+b)^2}+\frac{4M(M-1)}{N^2}\cdot\frac{a({a}^2+3a b+{b}^2)}{(a+b)^3}\nonumber\\
& \quad \ - \frac{8}{N^2}\cdot\frac{ab^3}{(a+b)^4}\sum_{i=1}^{N-1}\sum_{j=i+1}^N\sum\limits_{\substack{k>i\\k\neq j}}^{N-1}\sum\limits_{\substack{\ell>k\\\ell\neq i,j}}^N A_{ij}A_{k\ell} - \left(\frac{2M}{N}\right)^2 \frac{{a}^2(a+2b)^2}{(a+b)^4} - \frac{4Mq^*(1-q^*)}{N^2}\nonumber\\
> & \frac{4M}{N^2}\cdot\frac{a(a+2b)}{\left(a+b\right)^2}+\frac{4M(M-1)}{N^2}\cdot\frac{a({a}^2+3ab+{b}^2)}{\left(a+b\right)^3}\nonumber\\
& \quad \ - \frac{4M(M-1)}{N^2}\cdot\frac{a{b}^3}{\left(a+b\right)^4} - \left(\frac{2M}{N}\right)^2 \frac{{a}^2(a+2b)^2}{(a+b)^4} - \frac{4M}{N^2}\frac{ab^2(a+2b)}{(a+b)^4}\nonumber\\
= & 0.
\end{align}
\end{proof}

\begin{proposition} \label{6}
Assume that $\mathbb E[k_i]$ is the same between models 2 and 3 for any $i\in \{1, \ldots, N\}$.
For any given $q^*$, if $\overline{k}_i>1$, we obtain
\begin{enumerate}[(i)]
\item $\sigma^2_3[k_i] > \sigma^2_2[k_i]$ if $0<q^*<\frac{1}{2}$,
\item $\sigma^2_3[k_i] <\sigma^2_2[k_i]$ if $\frac{1}{2}<q^*\leq 1$,
\item $\sigma^2_3[k_i] = \sigma^2_2[k_i]$ if $q^*=\frac{1}{2}$.
\end{enumerate}
Furthermore, $\sigma^2[\langle k\rangle]$ satisfies the same relationships if there exists $i$ such that $\overline{k}_i>1$.
\end{proposition}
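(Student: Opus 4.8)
The plan is to reduce the comparison of variances to the comparison of concurrency indices already established in Proposition~\ref{cc23}, rather than manipulating the explicit four-term expressions in Propositions~\ref{2} and~\ref{4} head-on. As in the proof of Proposition~\ref{cc23}, I would write $a_2,b_2$ for the node transition rates of model 2 and $a_3,b_3$ for those of model 3, and impose that $\mathbb E[k_i]$ agree, which by Eqs.~\eqref{Ei2} and~\eqref{Ei3} is exactly the condition that $q^*$ coincides between the two models. The first observation is that, because the two means are equal, the difference of variances equals the difference of second moments, $\sigma^2_3[k_i]-\sigma^2_2[k_i]=\mathbb E_3[k_i^2]-\mathbb E_2[k_i^2]$, so every term built solely from $q^*$ cancels.

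Next I would expand $k_i=\sum_j A_{ij}x_{ij}$ and collect diagonal and off-diagonal terms to obtain $\mathbb E[k_i^2]=\overline{k}_i q^* + \overline{k}_i(\overline{k}_i-1)\,C$, where $C$ is the probability that two edges sharing node $i$ are simultaneously active. This $C$ is precisely the numerator of the concurrency index in Eq.~\eqref{concurrency}, so $C=q^*\kappa$ with $\kappa=\kappa_2=\sqrt{q^*}$ for model 2 by Eq.~\eqref{model2} and $\kappa=\kappa_3$ for model 3 by Eq.~\eqref{model3}. This yields the clean identity
\begin{equation}
\sigma^2_3[k_i]-\sigma^2_2[k_i]=\overline{k}_i(\overline{k}_i-1)\,q^*\,(\kappa_3-\kappa_2),
\end{equation}
which I would also confirm by directly subtracting the right-hand sides of Eqs.~\eqref{Di2} and~\eqref{Di3} under the constraint $a_2^2/(a_2+b_2)^2=a_3(a_3+2b_3)/(a_3+b_3)^2$ as a consistency check. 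Since $\overline{k}_i>1$ forces $\overline{k}_i(\overline{k}_i-1)>0$ and $q^*>0$, the sign of the variance difference equals the sign of $\kappa_3-\kappa_2$, and cases (i)--(iii) follow immediately from Proposition~\ref{cc23}.

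For the average degree I would decompose the variance $\sigma^2\!\left[\sum_{i<j}B_{ij}\right]$ into the same-edge variances, the covariances of edge pairs sharing a node, and the covariances of disjoint edge pairs. The crucial point is that in both models $B_{ij}$ is a function of the endpoint node states only, so two edges with no common node are independent and contribute zero covariance; hence all disjoint-edge terms cancel in the difference. Each node-sharing pair contributes $C-(q^*)^2$, and there are $\lvert\mathcal S\rvert=\sum_i \overline{k}_i(\overline{k}_i-1)/2$ such pairs by Eq.~\eqref{eq:cardinality-mathcalS}, giving
\begin{equation}
\sigma^2_3[\langle k\rangle]-\sigma^2_2[\langle k\rangle]=\frac{8\lvert\mathcal S\rvert}{N^2}\,q^*\,(\kappa_3-\kappa_2).
\end{equation}
Because $\lvert\mathcal S\rvert>0$ precisely when some $\overline{k}_i>1$, the same three-way sign dichotomy carries over from Proposition~\ref{cc23}.

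The main obstacle I anticipate is the bookkeeping in the $\langle k\rangle$ computation: one must verify that the quadruple sum $\sum A_{ij}A_{k\ell}$ appearing in Eqs.~\eqref{D2} and~\eqref{D3} counts exactly the disjoint edge pairs, namely $\tfrac{M(M-1)}{2}-\lvert\mathcal S\rvert$, so that these contributions cancel in the difference and leave only the node-sharing term proportional to $\kappa_3-\kappa_2$. Once this counting is settled, the whole statement reduces to the already-proved inequality for $\kappa$, and no fresh case analysis in $q^*$ is required.
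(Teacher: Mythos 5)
Your proof is correct, but it takes a genuinely different route from the paper's. The paper's Appendix~\ref{Appendix: C} works head-on with the explicit rate-parameter formulas: it imposes the constraint $a_2^2/(a_2+b_2)^2=a_3(a_3+2b_3)/(a_3+b_3)^2$, subtracts Eq.~\eqref{Di3} from Eq.~\eqref{Di2} (and likewise Eq.~\eqref{D3} from Eq.~\eqref{D2}) term by term, and reads the sign off the quantity $a_3(a_3^2+3a_3b_3+b_3^2)-\sqrt{a_3^3(a_3+2b_3)^3}$, which it then relates to $b_3\lessgtr(1+\sqrt{2})a_3$ and hence to $q^*\lessgtr\tfrac12$ --- in effect redoing the case analysis of Proposition~\ref{cc23}. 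You instead establish the structural identities $\sigma^2_3[k_i]-\sigma^2_2[k_i]=\overline{k}_i(\overline{k}_i-1)\,q^*(\kappa_3-\kappa_2)$ and $\sigma^2_3[\langle k\rangle]-\sigma^2_2[\langle k\rangle]=\bigl(8\lvert\mathcal S\rvert/N^2\bigr)q^*(\kappa_3-\kappa_2)$ and then simply invoke Proposition~\ref{cc23}. I checked that both identities reproduce the paper's computed differences exactly (e.g., $q^*(\kappa_3-\kappa_2)=\bigl[a_3(a_3^2+3a_3b_3+b_3^2)-\sqrt{a_3^3(a_3+2b_3)^3}\bigr]/(a_3+b_3)^3$), and the two supporting facts you flag as needing verification are indeed true: disjoint edges are independent in models 2 and 3 because node states are independent and each $B_{ij}$ depends only on $y_i$ and $y_j$, and the quadruple sum in Eqs.~\eqref{D2} and \eqref{D3} counts exactly the $\tfrac{M(M-1)}{2}-\lvert\mathcal S\rvert$ unordered disjoint edge pairs (this is implicit in the paper's justification of Eq.~\eqref{combination}). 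What your approach buys is an explanation of \emph{why} the degree-fluctuation comparison is ``parallel to'' the concurrency comparison, which the paper only observes informally after Proposition~\ref{4}; it also extends verbatim to the model-1-versus-model-2 and model-1-versus-model-3 comparisons (Propositions~\ref{3} and \ref{5}), since there too the difference is $\overline{k}_i(\overline{k}_i-1)q^*(\kappa-q^*)$. The paper's brute-force computation, by contrast, requires no auxiliary probabilistic reasoning beyond the already-derived moment formulas.
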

We prove Proposition~\ref{6} in Appendix~\ref{Appendix: C}.

\subsection{Duration for the edge being inactive in model 2}

In model 2, an edge is active if and only if both nodes forming the edge are in the $h$ state, and the state of each node (i.e., $h$ or $\ell$) independently obeys a continuous-time Markov process with two states. Therefore, the duration of the edge being active obeys an exponential distribution with rate $2b$. In contrast, the duration of the edge being inactive does not obey an exponential distribution, which we characterize as follows. 
\begin{proposition} \label{non-exponential}
The probability density function of the duration of the edge being inactive in model 2 is the mixture of two exponential distributions given by
\begin{equation} \label{mixdistribution}
f(t) = C_1\lambda_1e^{-\lambda_1t}+C_2\lambda_2e^{-\lambda_2t},
\end{equation}
where
\begin{align}
\lambda_1 =& \frac{1}{2}\left(3a+b-\sqrt{a^2+6ab+b^2}\right)\label{lambda1},\\
\lambda_2 =& \frac{1}{2}\left(3a+b+\sqrt{a^2+6ab+b^2}\right)\label{lambda2},\\
C_1 =& \frac{a\left(1+\frac{a-b}{\sqrt{a^2+6ab+b^2}}\right)}{3a+b-\sqrt{a^2+6ab+b^2}}\label{C1},
\end{align}
and
\begin{align}
C_2 =& \frac{a\left(1-\frac{a-b}{\sqrt{a^2+6ab+b^2}}\right)}{3a+b+\sqrt{a^2+6ab+b^2}}\label{C2}.
\end{align}
\end{proposition}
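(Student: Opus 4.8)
The plan is to recognize the inactive duration as a first-passage (phase-type) time in the four-state continuous-time Markov chain describing the joint states of the two nodes forming the edge. Write the joint node state as one of $hh$, $h\ell$, $\ell h$, $\ell\ell$, where each node independently flips $\ell\to h$ at rate $a$ and $h\to\ell$ at rate $b$. The edge is inactive precisely when the joint state is not $hh$. An inactive interval begins at the instant the chain leaves $hh$ and ends at the first subsequent return to $hh$; since two independent nodes almost surely do not flip simultaneously, the chain always enters the inactive region through a state with exactly one node in $h$ (either $h\ell$ or $\ell h$). Hence the inactive duration is the first-passage time to $hh$ started from a one-$h$ state.

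First I would exploit the symmetry between $h\ell$ and $\ell h$ to lump them into a single transient state, call it A (exactly one node in $h$), write B for the state $\ell\ell$, and treat $hh$ as absorbing (it marks the end of the inactive interval). From A the chain jumps to B at rate $b$ (the remaining $h$-node flips to $\ell$) and to the absorbing state $hh$ at rate $a$ (the $\ell$-node flips to $h$); from B it jumps to A at rate $2a$ (either node flips to $h$) and cannot reach $hh$ directly. This yields the transient sub-generator
\[
Q=\begin{pmatrix} -(a+b) & b \\ 2a & -2a \end{pmatrix},
\]
with the initial distribution concentrated on A. Because absorption can occur only out of A at rate $a$, the density of the inactive duration equals $f(t)=a\,P_A(t)$, where $P_A(t)$ is the defective probability of being in A at time $t$ without prior absorption, given the start in A.

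Next I would solve the linear system $\dot{P}_A=-(a+b)P_A+2aP_B$, $\dot{P}_B=bP_A-2aP_B$ with $P_A(0)=1$ and $P_B(0)=0$. The characteristic polynomial of $-Q$ is $\lambda^2-(3a+b)\lambda+2a^2=0$, whose roots are exactly the $\lambda_1,\lambda_2$ in Eqs.~\eqref{lambda1}--\eqref{lambda2}, since the discriminant is $(3a+b)^2-8a^2=a^2+6ab+b^2$; in particular $\lambda_1+\lambda_2=3a+b$ and $\lambda_1\lambda_2=2a^2$. Consequently $P_A(t)=c_1e^{-\lambda_1t}+c_2e^{-\lambda_2t}$, so $f(t)=ac_1e^{-\lambda_1t}+ac_2e^{-\lambda_2t}$ is automatically a mixture of two exponentials, establishing the form of Eq.~\eqref{mixdistribution}. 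The coefficients are fixed by the two initial conditions $c_1+c_2=P_A(0)=1$ and $-\lambda_1c_1-\lambda_2c_2=\dot{P}_A(0)=-(a+b)$; solving this $2\times2$ system and writing $C_1=ac_1/\lambda_1$ and $C_2=ac_2/\lambda_2$ gives the claimed Eqs.~\eqref{C1}--\eqref{C2}, with $C_1+C_2=1$ serving as a consistency check (absorption is certain).

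The conceptual work is the reduction just described: identifying the absorbing structure, using the $h\ell\leftrightarrow\ell h$ symmetry to obtain a two-dimensional transient space, and the observation $f(t)=a\,P_A(t)$. After that, the only remaining task is routine algebra, namely solving the scalar second-order ODE for $P_A$ and simplifying $C_1$ and $C_2$ into the stated closed forms, for which one uses $\lambda_1-\lambda_2=-\sqrt{a^2+6ab+b^2}$ together with $2\lambda_{1}=3a+b-\sqrt{a^2+6ab+b^2}$ and $2\lambda_{2}=3a+b+\sqrt{a^2+6ab+b^2}$. I expect the coefficient simplification to be the most error-prone step, although it is elementary.
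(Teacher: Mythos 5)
Your proof is correct, and it reaches the stated formulas by a genuinely different route from the paper's. Both arguments start from the same reduced three-state chain (number of nodes in the $h$ state), but the paper conditions on the number $n$ of visits to the all-$\ell$ state before the first return to $hh$, computes the Laplace transform of each conditional duration $T_n$ as a product of exponential factors, sums the resulting geometric series to get $\mathcal{L}(s) = a(s+2a)/\bigl(s^2+(3a+b)s+2a^2\bigr)$, and then inverts the transform. You instead treat the inactive duration as a phase-type first-passage time: you lump $h\ell$ and $\ell h$ into one transient state by symmetry, write the $2\times 2$ sub-generator, observe that the density is $a\,P_A(t)$ because absorption into $hh$ occurs only from the one-$h$ state at rate $a$, and solve the linear ODE system directly. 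The characteristic polynomial $\lambda^2-(3a+b)\lambda+2a^2$ you obtain is exactly the denominator of the paper's Laplace transform, so the two computations agree where they must, and your coefficients $c_1=\tfrac12\bigl(1+\tfrac{a-b}{\sqrt{a^2+6ab+b^2}}\bigr)$, $c_2=\tfrac12\bigl(1-\tfrac{a-b}{\sqrt{a^2+6ab+b^2}}\bigr)$ combined with $C_i=ac_i/\lambda_i$ reproduce Eqs.~\eqref{C1}--\eqref{C2}. Your approach is more compact (no geometric series, no explicit Laplace inversion) and makes the structural claim---that $f$ is a combination of exponentials with rates given by the eigenvalues of the negated sub-generator---immediate before any coefficient algebra; it also scales directly to larger transient state spaces. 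The paper's path-decomposition is more explicit about the excursion structure and would adapt more readily if the sojourn times were not exponential. One small point worth making explicit in a final write-up: the claim that $f$ is a genuine \emph{mixture} also requires $C_1, C_2 \ge 0$, which neither the proposition's proof in the paper nor your argument establishes directly (the paper defers it to a remark); your consistency check $C_1+C_2=1$ covers only half of that.
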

Note that it is straightforward to check $C_1+C_2=1$, $C_1 > 0$, $C_2 > 0$, and
$\lambda_1 > 0$.
\begin{proof}
Consider a three-state continuous-time Markov process described as follows. Let $z$ (with $z$= 0, 1, or 2) denote the number of nodes forming an edge that are in the $h$ state. We refer to the value of $z$ as the state of the three-state Markov chain without arising confusion with the single node's state (i.e., $h$ or $\ell$) or edge's state (i.e., active or inactive).
We initialize the stochastic dynamics of nodes by setting $z=2$, which corresponds to the edge being active. Consider a sequence of the state $z$ that starts from $z=2$ at time 0 and return to $z=2$ for the first time. Let $I_n$ be such a sequence of the $z$ values visiting $z=0$ in total $n$ times before returning to $z=2$ for the first time, which we denote by
\begin{equation*}
I_n = (2, 1,\underbrace{0, 1, ..., 0, 1}_{\text{$n$ repetitions of 0 and 1}},2). 
\end{equation*}
The duration of the edge being inactive is the difference between the time of the first passage to $z=2$ and the time of leaving $z=2$ last time.
We denote this duration by $T_n$ for the case in which $z=0$ is visited $n$ times before $z=2$ is revisited for the first time.

For general $n$, we obtain
\begin{equation}
T_n = \tau^{\prime}_{1,1} + \tau^{\prime\prime}_{0,1} + \tau^{\prime}_{1,2} + \cdots + \tau^{\prime\prime}_{0,n} + \tau^{\prime}_{1,n+1}, 
\end{equation}
where $\tau^{\prime}_{1,i}$ is the $i$th duration of $z=1$, which obeys the exponential distribution with rate $a+b$; $\tau^{\prime\prime}_{0,i}$ is the $i$th duration of $z=0$, which obeys the exponential distribution with rate $2a$. Variables $\tau^{\prime}_{1,1}$, $\ldots$, $\tau^{\prime}_{1,n+1}$, $\tau^{\prime\prime}_{0,1}$, $\ldots$, $\tau^{\prime\prime}_{0,n}$ are independent of each other. Therefore, the Laplace transform of the distribution of $T_n$ is given by
\begin{align}
\mathcal{L}_n(s) & = \mathbb E[e^{-sT_n}] \nonumber\\
& = \mathbb E[e^{-s\tau^{\prime}_{1,1}}]\mathbb E[e^{-s\tau^{\prime\prime}_{0,1}}]\mathbb E[e^{-s\tau^{\prime}_{1,2}}]\mathbb E[e^{-s\tau^{\prime\prime}_{0,2}}] \cdots \mathbb E[e^{-s\tau^{\prime\prime}_{0,n}}]\mathbb E[e^{-s\tau^{\prime}_{1,n+1}}] \nonumber\\
& = \left(\frac{a+b}{s+a+b}\right)^{n+1}\left(\frac{2a}{s+2a}\right)^n.
\label{eq:L_n(s)}
\end{align}
The probability that sequence $I_n$ occurs is given by 
\begin{equation}
\tilde{q}(n)=1\cdot \left(\frac{b}{a+b}\right)^n \cdot 1^n \cdot \frac{a}{a+b} = \frac{ab^n}{(a+b)^{n+1}}.
\label{eq:q(n)}
\end{equation}
Let $T$ be the duration for which the edge is inactive. Using Eqs.~\eqref{eq:L_n(s)} and \eqref{eq:q(n)}, we obtain the Laplace transform of the distribution of $T$ as follows:
\begin{align}
\mathcal{L}(s) & = \sum_{n=0}^{\infty} \tilde{q}(n)\mathbb E[e^{-sT_n}] \nonumber\\
& = \sum_{n=0}^{\infty} \frac{ab^n}{(a+b)^{n+1}}\left(\frac{a+b}{s+a+b}\right)^{n+1}\left(\frac{2a}{s+2a}\right)^n\nonumber\\
& = \frac{a(s+2a)}{s^2+(3a+b)s+2a^2}\nonumber\\
& = \frac{a\left(s+\frac{3a+b}{2}+\frac{a-b}{2}\right)}{\left(s+\frac{3a+b}{2}\right)^2-\left(\sqrt{\frac{a^2+6ab+b^2}{4}}\right)^2}.
\end{align}
Therefore,
\begin{align}
\mathcal{L}^{-1}(s) & = a\left[e^{-\frac{3a+b}{2}t}\cosh{\left(\sqrt{\frac{a^2+6ab+b^2}{4}}t\right)}+\frac{a-b}{\sqrt{a^2+6ab+b^2}}e^{-\frac{3a+b}{2}t}\sinh{\left(\sqrt{\frac{a^2+6ab+b^2}{4}}t\right)}\right]\nonumber\\
& = C_1\lambda_1e^{-\lambda_1t}+C_2\lambda_2e^{-\lambda_2t}.
\end{align}
\end{proof}

We verify Eq.~\eqref{mixdistribution} with numerical simulations for two parameter sets. The results are shown in Fig.~\ref{fig:test}. The dashed curves represent the exponential distributions whose mean is the same as that of the corresponding mixture of two exponential distributions. The figure suggests that the actual duration for the edge to be inactive is distributed more heterogeneously than the exponential distribution for both parameter sets.

\floatsetup[figure]{style=plain,subcapbesideposition=top}
\captionsetup{font={small,rm}} 
\begin{figure}[H]
  \centering
  \includegraphics[scale=0.45]{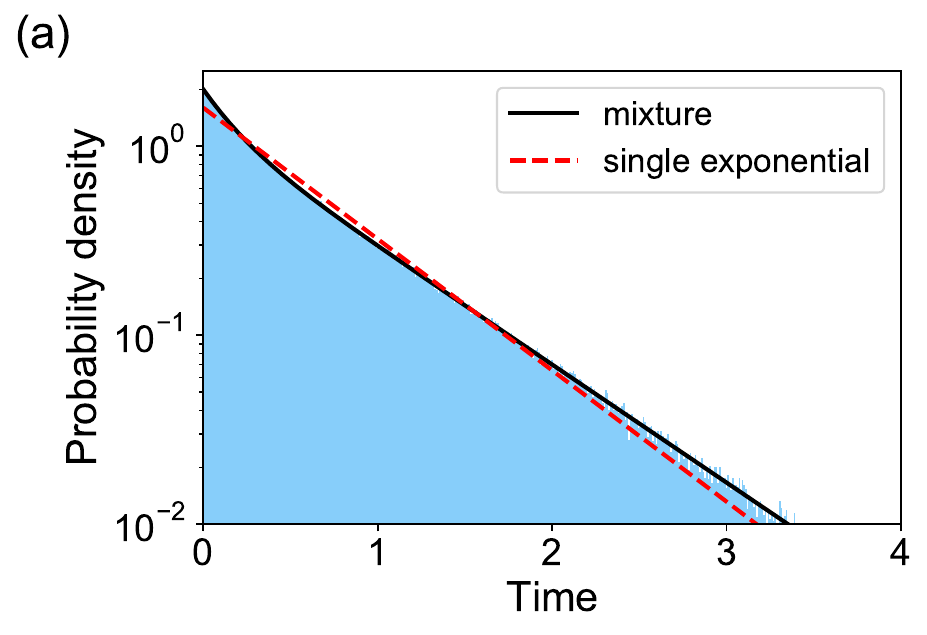}
  \hspace{0.3in}
  \includegraphics[scale=0.45]{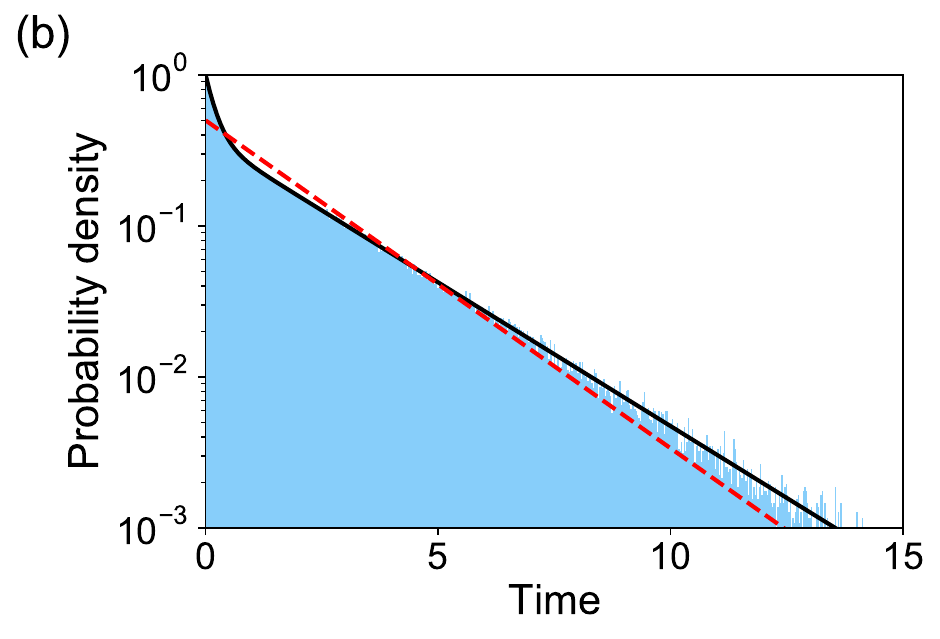}
  \caption{Distributions of the duration of the edge being inactive in model 2.
  %
  %
  The shaded bars represent numerically obtained distributions calculated on the basis of $5\times 10^5$ samples. The solid lines represent the mixture of two exponential distributions, i.e., Eq.~\eqref{mixdistribution}. The dashed lines represent the exponential distribution whose mean is the same as that for Eq.~\eqref{mixdistribution}. \textbf{(a)} $a=2.0$, $b=1.0$. \textbf{(b)} $a=1.0$, $b=2.0$.}
  \label{fig:test}
\end{figure}

For model 3, the duration for the edge being inactive, which is equivalent to the time for which the same three-state Markov process stays in state $z=0$,
obeys an exponential distribution with rate $2a$. The duration for the edge being active is the first passage time to $z=0$ since the Markov chain has left $z=0$. Therefore, the duration for the edge being active for model 3 obeys the mixture of two exponential distributions given by Eq.~\eqref{mixdistribution}, but with $a$ and $b$ being swapped.

\section{Impact of concurrency on the dynamics of the SIS and SIR models}

\subsection{Numerical results}

\subsubsection{Methods} 

To examine the effect of concurrency on epidemic spreading, we run the stochastic SIS and SIR models on three static networks. On each static network, we run stochastic dynamics of partnership according to model 1, 2, or 3 and compare the extent of epidemic spreading among the models. Each node takes either the susceptible, infectious, or recovered state, and the node's state may change over time. In both SIS and SIR models, an infectious node infects each of its susceptible neighbor independently at rate $\beta$, which we call the infection rate, and an infectious node recovers at rate $\mu$, which we call the recovery rate, independently of the other nodes' states. The infection and recovery events occur as Poisson processes with the respective rates. Once an infectious node recovers, it turns into the susceptible state in the case of the SIS model and the recovered state in the case of the SIR model. Recovered nodes in the SIR model do not infect other nodes and are not infected by other nodes.

We consider three static networks. First, we use the Erd\H{o}s-R\'{e}nyi (ER) random graph with $N=200$ nodes. We independently connect each pair of nodes with probability $0.05$. We iterated generating a network from the ER random graph until we obtained a connected network with $M=1000$ edges. 
The second network is a network with $N=200$ nodes that has a degree distribution with a power-law tail that is proportional to $k^{-3}$, where $k$ is the degree, generated by
the Barab\'{a}si-Albert (BA) model \cite{Barabasi1999Sci}.
We assume that each incoming node has five edges to be connected to already existing nodes according to the proportional preferential attachment rule. In other words, the probability that an existing node, denoted by $v$, forms a new edge with an incoming node is proportional to $v$'s degree. The initial network is a star graph on 6 nodes. The network is connected and contains $M=975$ edges. 
Third, we use the largest connected component of
a collaboration network among researchers who had published papers in network science up to 2006 \cite{Newman3}. The network has $N=379$ nodes and $M=914$ edges. The edge represents the presence of at least one paper that two authors have coauthored. We use this network as an unweighted network. 

We set $\mu=1$ without loss of generality; multiplying the same constant to $\beta$, $\mu$, $a_1$, $b_1$, $a_2$, $b_2$, $a_3$, and $b_3$ only rescales the time; $a_i$ and $b_i$ represent parameters $a$ and $b$ for model $i$. We run 5000 simulations for each epidemic process model (i.e., SIS or SIR), each static network, each partnership model (i.e., model 1, 2, or 3), and each parameter set.
In each simulation of the SIS model, we started from the initial condition in which all nodes are infectious. This initial condition is not realistic. However, we use it with the aim of identifying the quasi-stationary fraction of infectious nodes without being affected by rapid extinction of infectious nodes that may happen with an initial condition having only a small number of infectious nodes even if the infection rate is large. In the SIS model, we measure the fraction of infectious nodes at
time $t = 10^4$ and average it over $10^3$ simulations at each $\beta$ value.
In each simulation of the SIR model, just one node selected uniformly at random is initially infectious, and all the other nodes are susceptible. For each static network,
the initially infectious node is the same over the different partnership models and the different $\beta$ values. For each $\beta$ value, we 
average the fraction of recovered nodes at the end of the simulation over the 5000 simulations, which we call the final epidemic size. 

We run the simulation with model 1 using the Laplace Gillespie algorithm \cite{Masuda2018SIAM, Masuda2021arxiv}, which is an extension of the direct method of Gillespie to the case in which inter-event times are allowed to obey a mixture of exponential distributions like Eq.~\eqref{mixdistribution}. For models 2 and 3, we run the simulation using the direct method of Gillespie because these models only involve exponential distributions of inter-event times.

Apart from the variation in the $\beta$ value, we consider four sets of parameter values for both SIS and SIR models. In the first set of simulations, we set $a_1=1$, $b_1=9$, $a_2=0.5(\sqrt{10}+1)\approx 2.08$, $b_2=4.5$, $a_3=0.5$, and $b_3=1.5(\sqrt{10}+3)\approx 9.24$. Then, we obtain $q^*=0.1$ for all the three partnership models. In this manner, we compare the final epidemic size for the three partnership models, which yield different amounts of concurrency, under the condition that each edge is active for the same amount of time on average. It should be noted that a large $q^*$ value will obviously lead to a larger final epidemic size with the other things being equal, and so it is necessary to compare the models with the $q^*$ value being fixed. We obtain a second parameter set by making the values of $a_1$, $b_1$, $a_2$, $b_2$, $a_3$, and $b_3$ for the first parameter set five times smaller, which implies that the nodes and edges flip their states five times more slowly than in the case of the first parameter set. Because multiplying these six parameters by the same constant does not change $q^*$, we retain $q^* = 0.1$ for the second parameter set. We refer to the first and second parameter sets as the fast and slow edge dynamics, respectively.
The third parameter set is defined by $a_1=1$, $b_1=1$, $a_2=0.5(\sqrt{2}+1)\approx 1.21$, $b_2=0.5$, $a_3=0.5$, and $b_3=0.5(\sqrt{2}+1)\approx 1.21$ such that $q^*=0.5$. We also consider a slower variant of edge dynamics by dividing these $a_1$, $b_1$, $a_2$, $b_2$, $a_3$, and $b_3$ values by five. In summary, for each of the three networks and each partnership model, we have four cases each of which consists of the combination of
$q^* \in \{0.1, 0.5\}$ and either the fast or slow edge dynamics.

\subsubsection{Results for the SIS model\label{sub:SIS-numerical}}

In this section, we examine the SIS model with partnership models $1$, $2$, and $3$.
We first compare between partnership models $1$ and $2$. We recall that model 2 has a higher concurrency than model 1 when each edge is active with the same probability in the two models. To exclude the possible effects of the distribution of the duration for which the edge is active and that for which the edge is inactive,
we use model $1$ with $\psi_1(\tau_1)=2b_2e^{-2b_2\tau_1}$ and $\psi_2(\tau_2)=C_1\lambda_1e^{-\lambda_1\tau_2}+C_2\lambda_2e^{-\lambda_2\tau_2}$, where $\lambda_1$, $\lambda_2$, $C_1$, and $C_2$ are given by Eqs.~\eqref{lambda1}, \eqref{lambda2}, \eqref{C1}, and \eqref{C2}, respectively, with $a = a_2$ and $b = b_2$. In this manner, models 1 and 2 have the identical distribution of the duration of the edge being active (i.e., $\psi_1$) and that of the edge being inactive (i.e., $\psi_2$), whereas they are different in terms of the amount of concurrency.

In Fig.~\ref{fig:sis_1p_vs_2}(a)--(d), we show the relationships between the infection rate and fraction of infectious nodes at $t=10^4$ for a network generated by the ER random graph. Figure~\ref{fig:sis_1p_vs_2}(a) and (b) corresponds to $q^*=0.1$; Fig.~\ref{fig:sis_1p_vs_2}(c) and (d) corresponds to $q^*=0.5$. 
Figure~\ref{fig:sis_1p_vs_2}(a) and (c) corresponds to the slow edge dynamics; Fig.~\ref{fig:sis_1p_vs_2}(b) and (d) corresponds to the fast edge dynamics.
Figure~\ref{fig:sis_1p_vs_2}(a)--(d) indicates that, in each of these combinations of a value of $q^*$ and speed of the edge dynamics, the epidemic threshold is smaller for model $2$ than model $1$. As a result, the fraction of infectious nodes is larger for model $2$ than model $1$ when the infection rate is near the epidemic threshold. These results suggest that concurrency decreases the epidemic threshold and promotes epidemic spreading when the infection rate is near the epidemic threshold, which is consistent with the main conclusions of previous studies \cite{Bauch2000PRSB,Onaga2017PRL,Onaga2019book}.
However, the epidemic threshold and the fraction of infectious nodes near the epidemic threshold are almost the same between models 1 and 2 when $q^* = 0.5$ (see Fig.~\ref{fig:sis_1p_vs_2}(c) and (d)).
Furthermore, the concurrency decreases the fraction of infectious nodes when the infection rate is sufficiently larger than the epidemic threshold, with both $q^*=0.1$ and $q^*=0.5$.

The results for a network generated by the BA model, shown in Fig.~\ref{fig:sis_1p_vs_2}(e)--(h), and
those for the collaboration network, shown in Fig.~\ref{fig:sis_1p_vs_2}(i)--(l), are qualitatively the same as those for the ER model.
Quantitatively, when $q^* = 0.1$, the relative difference in the epidemic threshold between models 1 and 2 for the BA model (see Fig.~\ref{fig:sis_1p_vs_2}(e) and (f)) and the collaboration network (see Fig.~\ref{fig:sis_1p_vs_2}(i) and (j)) is smaller than for the ER random graph (see Fig.~\ref{fig:sis_1p_vs_2}(a) and (b)).

\floatsetup[figure]{style=plain,subcapbesideposition=top}
\captionsetup{font={small,rm}} 
\captionsetup{labelfont=bf}
\begin{figure}[H]
  \centering
  \includegraphics[width=1.0\linewidth]{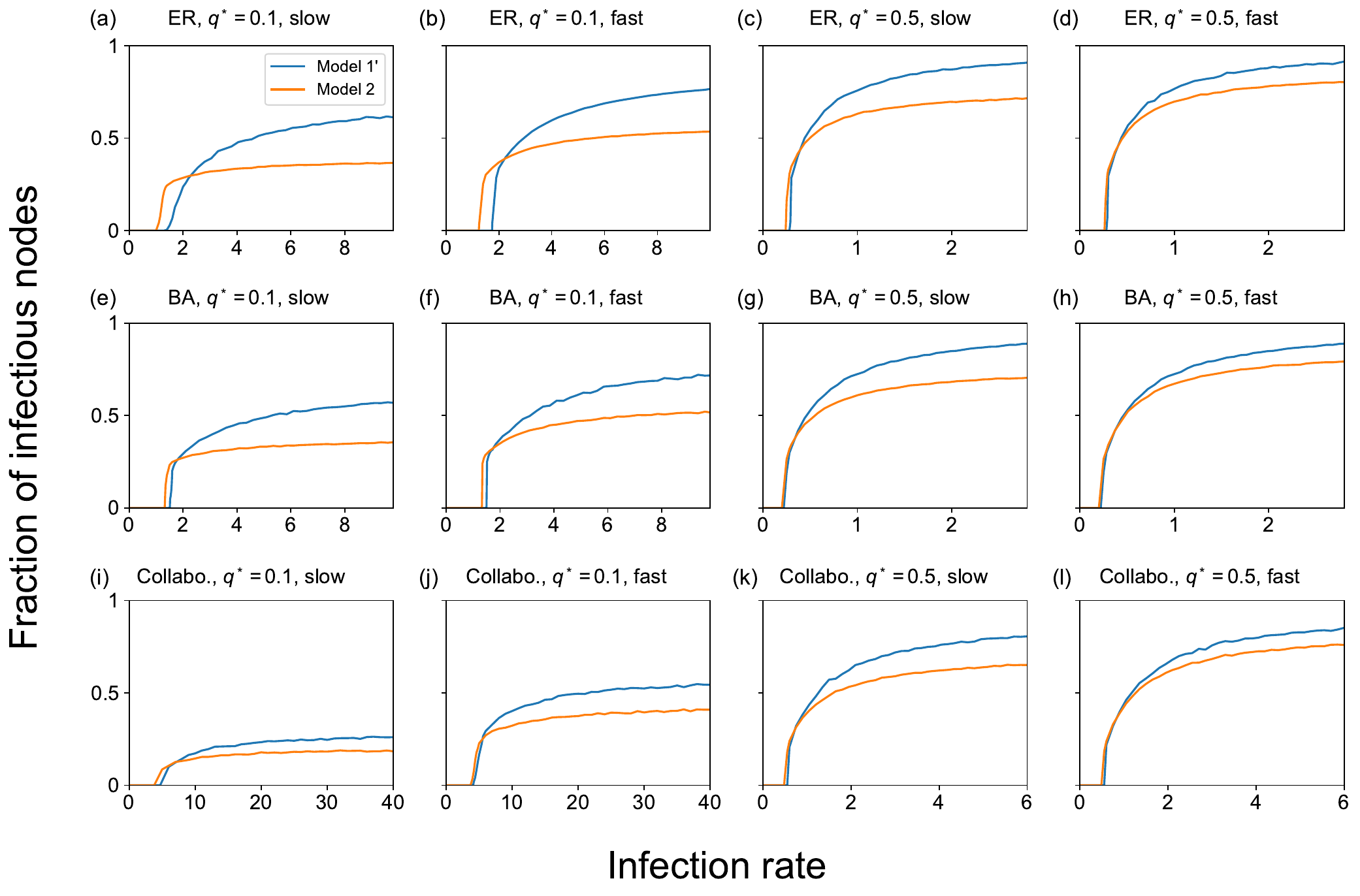}
   \caption{Comparison of the quasi-equilibrium fraction of infectious nodes in the SIS model between models $1$ and $2$. (a)--(d) ER random graph. (e)--(h) BA model. (i)--(l) Collaboration network. In panels (a), (e), and (i), we set $q^*=0.1$ and use the slow edge dynamics. In panels (b), (f), and (j), we set $q^*=0.1$ and use the fast edge dynamics. In panels (c), (g), and (k), we set $q^*=0.5$ and use the slow edge dynamics. In panels (d), (h), and (l), we set $q^*=0.5$ and use the fast edge dynamics. ``Collabo.''~is a shorthand for the collaboration (i.e., co-authorship) network.}   
   \label{fig:sis_1p_vs_2}
\end{figure}

Now we compare models $1$ and $3$. We use model $1$ with $\psi_1(\tau_1)=C_1\lambda_1e^{-\lambda_1\tau_1}+C_2\lambda_2e^{-\lambda_2\tau_1}$, where the four parameters are given by
Eqs.~\eqref{lambda1}, \eqref{lambda2}, \eqref{C1}, and \eqref{C2}, respectively, with $a = b_3$ and $b = a_3$, and $\psi_2(\tau_2)=2a_3e^{-2a_3\tau_2}$. In this manner, models 1 and 3 have the identical $\psi_1$ and $\psi_2$, whereas model 3 has a larger amount of concurrency than model 1 does.

We show the fraction of infectious nodes at $t=10^4$ for models $1$ and $3$ for the same three underlying static networks, two values of $q^*$, and two speeds of edge dynamics in Fig.~\ref{fig:sis_1dp_vs_3}. The results are qualitatively similar to the comparison between models $1$ and $2$ shown in Fig.~\ref{fig:sis_1p_vs_2}. These results with models 1 and 3 further support that concurrency decreases the epidemic threshold, whereas this effect is small for the BA model and collaboration network as well as for denser networks (i.e., $q^* = 0.5$ compared to $q^* = 0.1$).

\floatsetup[figure]{style=plain,subcapbesideposition=top}
\captionsetup{font={small,rm}} 
\captionsetup{labelfont=bf}
\begin{figure}[H]
  \centering
  \includegraphics[width=1.0\linewidth]{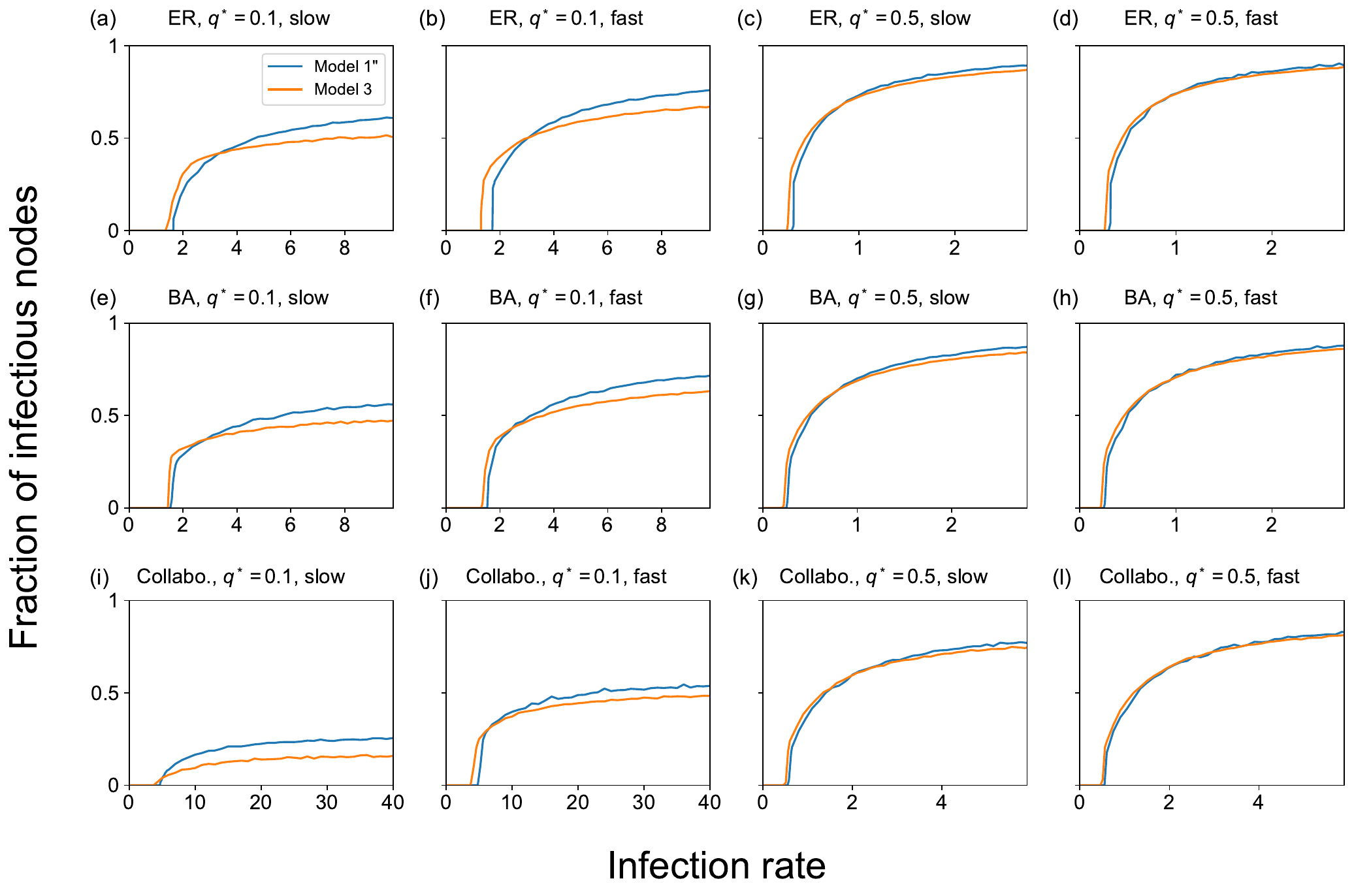}
   \caption{Comparison of the quasi-equilibrium fraction of infectious nodes in the SIS model between models $1$ and $3$. (a)--(d) ER random graph. (e)--(h) BA model. (i)--(l) Collaboration network. In panels (a), (e), and (i), we set $q^*=0.1$ and use the slow edge dynamics. In panels (b), (f), and (j), we set $q^*=0.1$ and use the fast edge dynamics. In panels (c), (g), and (k), we set $q^*=0.5$ and use the slow edge dynamics. In panels (d), (h), and (l), we set $q^*=0.5$ and use the fast edge dynamics.}   
   \label{fig:sis_1dp_vs_3}
\end{figure}

\subsubsection{Results for the SIR model\label{sub:SIR-numerical}}

In this section, we examine the SIR model with the three partnership models. When comparing between models 1 and 2 and between models 1 and 3, we use the same distributions of inter-event times for model 1 as those used in section~\ref{sub:SIS-numerical}.

In Fig.~\ref{fig:1p_vs_2}(a)--(d), we show the relationships between the infection rate and final epidemic size for a network generated by the ER random graph. Figure~\ref{fig:1p_vs_2}(a) and (b) corresponds to $q^*=0.1$; Fig.~\ref{fig:1p_vs_2}(c) and (d) corresponds to $q^*=0.5$. Figure~\ref{fig:1p_vs_2}(a) and (c) corresponds to the slow edge dynamics; Fig.~\ref{fig:1p_vs_2}(b) and (d) corresponds to the fast edge dynamics.
Figure~\ref{fig:1p_vs_2}(a)--(d) indicates that, in each of these combinations of a value of $q^*$ and speed of the edge dynamics, the final epidemic size is larger for model $2$ than model $1$ when the infection rate is near the epidemic threshold. This result suggests that concurrency promotes epidemic spreading near the epidemic threshold. However, the final epidemic size near the epidemic threshold is only slightly larger for model 2 than model 1 when $q^*= 0.5$ (see Fig.~\ref{fig:1p_vs_2}(c) and (d)). Furthermore, the concurrency suppresses the epidemic spreading when the infection rate, and hence the final epidemic size, is larger for both $q^*=0.1$ and $q^*=0.5$.
All these results are similar to those for the SIS model shown in Fig.~\ref{fig:sis_1p_vs_2}(a)--(d) except that there is no noticeable difference in the epidemic threshold between models 1 and 2 in the case of the SIR model.

The results for the BA model, shown in Fig.~\ref{fig:1p_vs_2}(e)--(h), are similar to those for the ER random graph. The results for the collaboration network, shown in Fig.~\ref{fig:1p_vs_2}(i)--(l), are similar to those for the ER random graph and the BA model except that, when $q^*=0.1$, the final epidemic size is larger for model 2 than model 1 across the entire range of the infection rate, $\beta$, that we investigated; see
Fig.~\ref{fig:1p_vs_2}(i) and (j). Because the final epidemic size has approximately plateaued for both models 1 and 2 at the largest $\beta$ value that we investigated, it is unlikely that model 1 yields a larger final epidemic size than model 2 when $\beta$ is larger.
These numerical results suggest that the effect of concurrency on enhancing epidemic spreading at $q^* = 0.1$ is stronger for the collaboration network than for the ER and BA models.
In contrast, when $q^*=0.5$, the final epidemic size is not noticeably larger for model 2 than model 1 when $\beta$ is near the epidemic threshold and smaller for model 2 than model 1 when $\beta$ is larger (see Fig.~\ref{fig:1p_vs_2}(k) and (l)). This result is qualitatively the same as that for the ER and BA models shown in Fig.~\ref{fig:1p_vs_2}(c), (d), (g), and (h).

\floatsetup[figure]{style=plain,subcapbesideposition=top}
\captionsetup{font={small,rm}} 
\captionsetup{labelfont=bf}
\begin{figure}[H]
  \centering
  \includegraphics[width=1.0\linewidth]{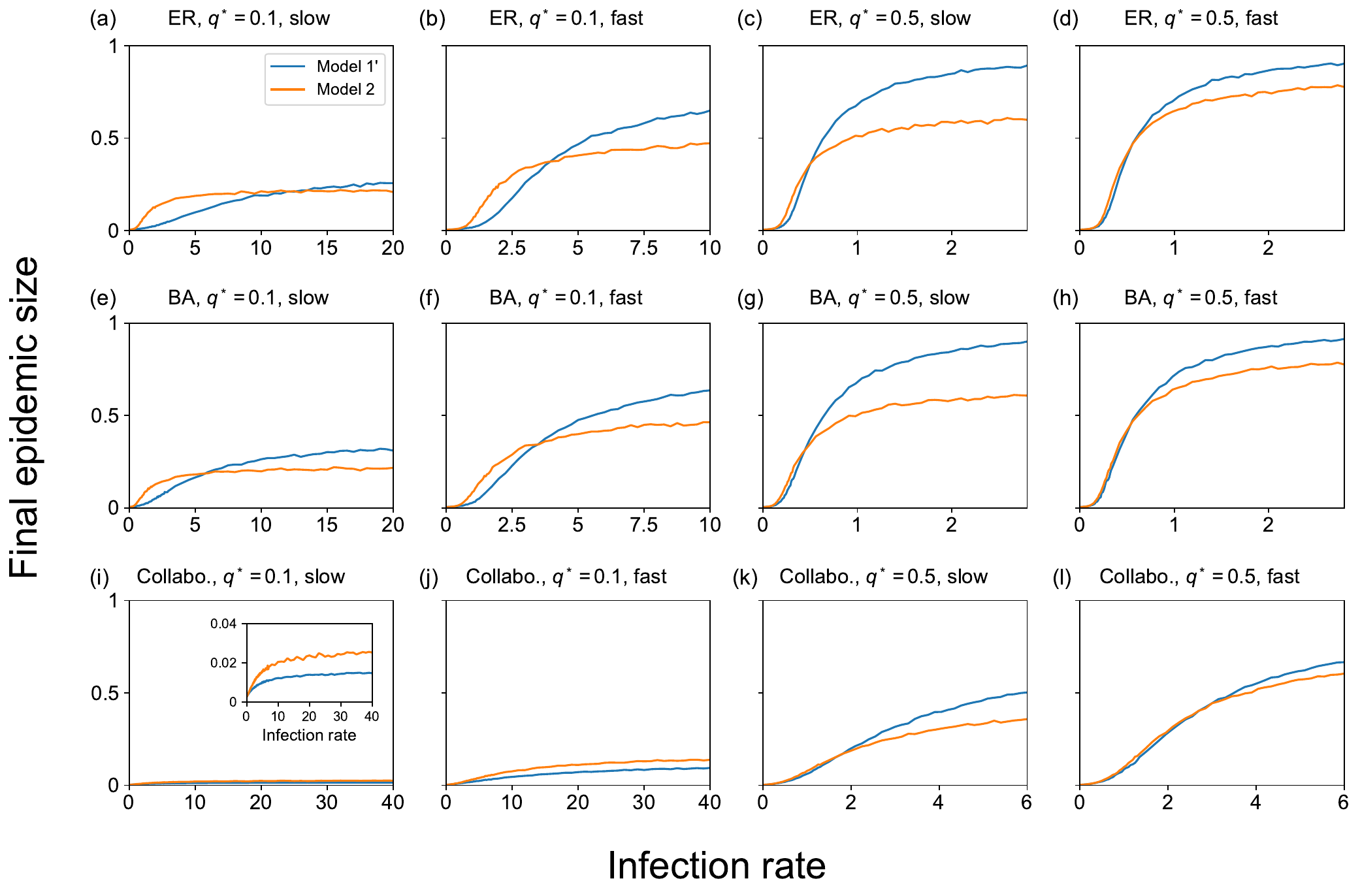}
   \caption{Comparison of the final epidemic size in the SIR model between models $1$ and $2$. (a)--(d) ER random graph. (e)--(h) BA model. (i)--(l) Collaboration network. In panels (a), (e), and (i), we set $q^*=0.1$ and use the slow edge dynamics. In panels (b), (f), and (j), we set $q^*=0.1$ and use the fast edge dynamics. In panels (c), (g), and (k), we set $q^*=0.5$ and use the slow edge dynamics. In panels (d), (h), and (l), we set $q^*=0.5$ and use the fast edge dynamics. The inset of (i) shows the magnification of the main panel because the final epidemic size is small in the entire range of the infection rate in this case.}   
   \label{fig:1p_vs_2}
\end{figure}

In Fig.~\ref{fig:1dp_vs_3}, we compare the final epidemic size in the SIR model between models $1$ and $3$ for the same three underlying static networks, two values of $q^*$, and two speeds of edge dynamics. The results are largely similar to the comparison between models $1$ and $2$. A notable difference from the comparison between models $1$ and $2$ is that, for the ER random graph and BA model combined with $q^*=0.1$ and the slow edge dynamics,
model 3 yields a larger final epidemic size than model 1 across the entire range of the infection rate values investigated (see Fig.~\ref{fig:1dp_vs_3}(a) and (e)).

\floatsetup[figure]{style=plain,subcapbesideposition=top}
\captionsetup{font={small,rm}} 
\captionsetup{labelfont=bf}
\begin{figure}[H]
  \centering
  \includegraphics[width=1.0\linewidth]{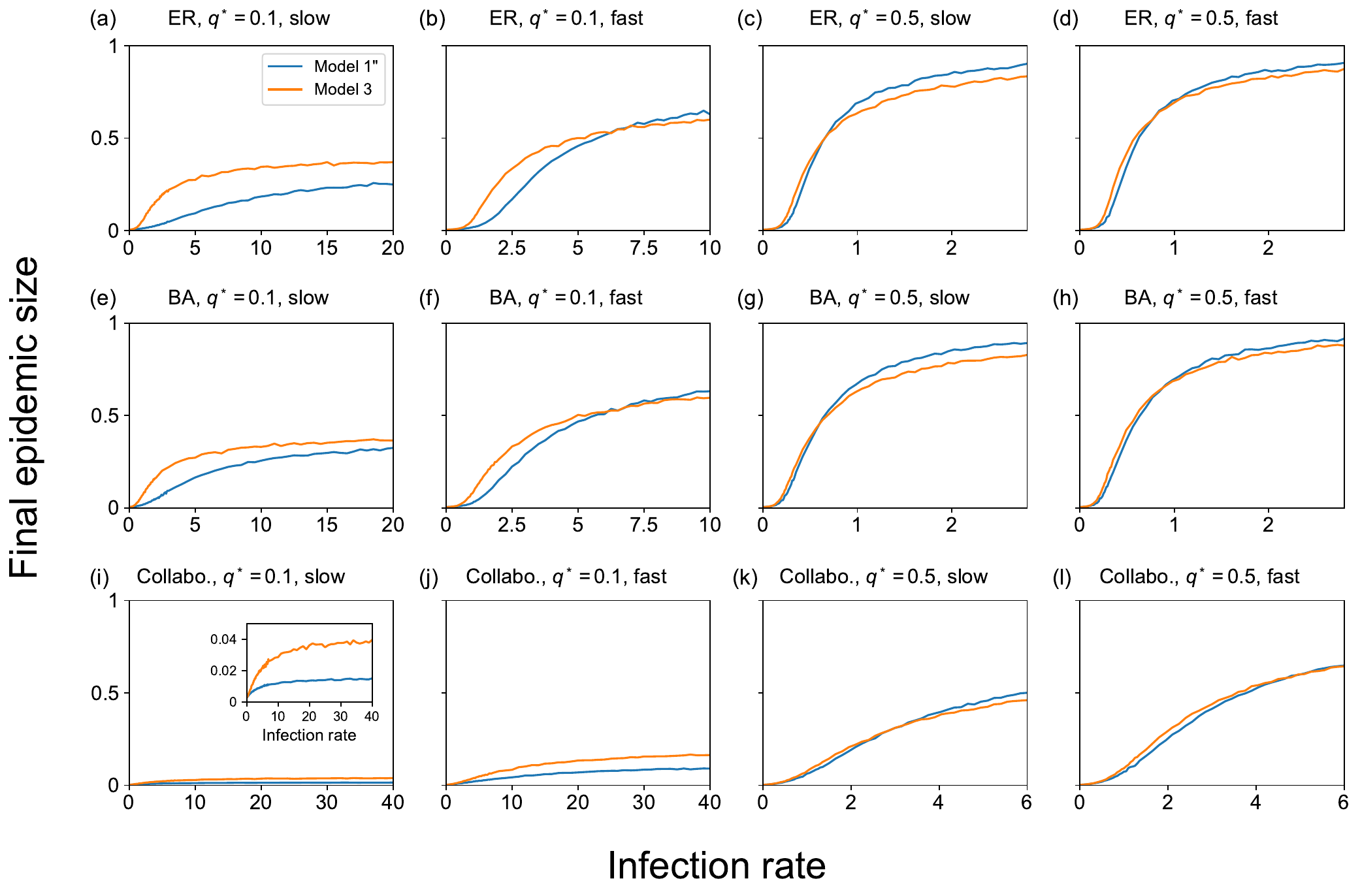}
   \caption{Comparison of the final epidemic size in the SIR model between models $1$ and $3$. (a)--(d) ER random graph. (e)--(h) BA model. (i)--(l) Collaboration network. In panels (a), (e), and (i), we set $q^*=0.1$ and use the slow edge dynamics. In panels (b), (f), and (j), we set $q^*=0.1$ and use the fast edge dynamics. In panels (c), (g), and (k), we set $q^*=0.5$ and use the slow edge dynamics. In panels (d), (h), and (l), we set $q^*=0.5$ and use the fast edge dynamics.}
   \label{fig:1dp_vs_3}
\end{figure}

\subsection{Theoretical results\label{sub:Ogura}}

We analytically derived an expression of the epidemic threshold for the SIS model on arbitrary underlying static networks combined with partnership models 1, 2, and 3. We assume that each network composed of partnership edges lasts for time $h$ before switching to another partnership network and that 
the static network $\mathcal{G}(t)$ in each time window of length $h$ is generated as an i.i.d. 
In fact, our partnership models 1, 2, and 3 imply that $\mathcal{G}(t)$ at different times are correlated with each other. Therefore, the i.i.d. assumption is for facilitating theoretical analysis.

We found that, to the first order of $h$, the epidemic threshold is the same for the three models. We show the detail in Appendix D.
This result is apparently not consistent with the numerical results for the SIS model shown in Figs.~\ref{fig:sis_1p_vs_2} and \ref{fig:sis_1dp_vs_3}, which support that the epidemic threshold decreases as the concurrency increases. In fact, the i.i.d. assumption made for our theoretical analysis corresponds to the limit of infinitely fast switching of edges. This is because a faster edge dynamics for a fixed value of $h$ implies that $\mathcal{G}(t)$ and $\mathcal{G}(t')$, where $t' \neq t$, are less correlated with each other for given $t$ and $t'$. Therefore, we carried out additional numerical simulations by making the values of $a_1$, $b_1$, $a_2$, $b_2$, $a_3$, and $b_3$ five times larger than those for the fast edge dynamics, i.e., 25 times larger than those for the slow edge dynamics. Then, we found that the epidemic threshold for the SIS model in the case of high concurrency (i.e., models 2 and 3) is much closer to the epidemic threshold in the case of low concurrency (i.e., model 1), as compared to when the edge dynamics are slower (i.e., Figs.~\ref{fig:sis_1p_vs_2} and \ref{fig:sis_1dp_vs_3}). See Fig.~\ref{fig:sis_even_faster_edge_dynamics} for numerical results. We conclude that the theoretical results in this section explain our numerical results when the edge dynamics are sufficiently faster than the epidemic dynamics.

\floatsetup[figure]{style=plain,subcapbesideposition=top}
\captionsetup{font={small,rm}} 
\captionsetup{labelfont=bf}
\begin{figure}[H]
  \centering
  \includegraphics[width=1.0\linewidth]{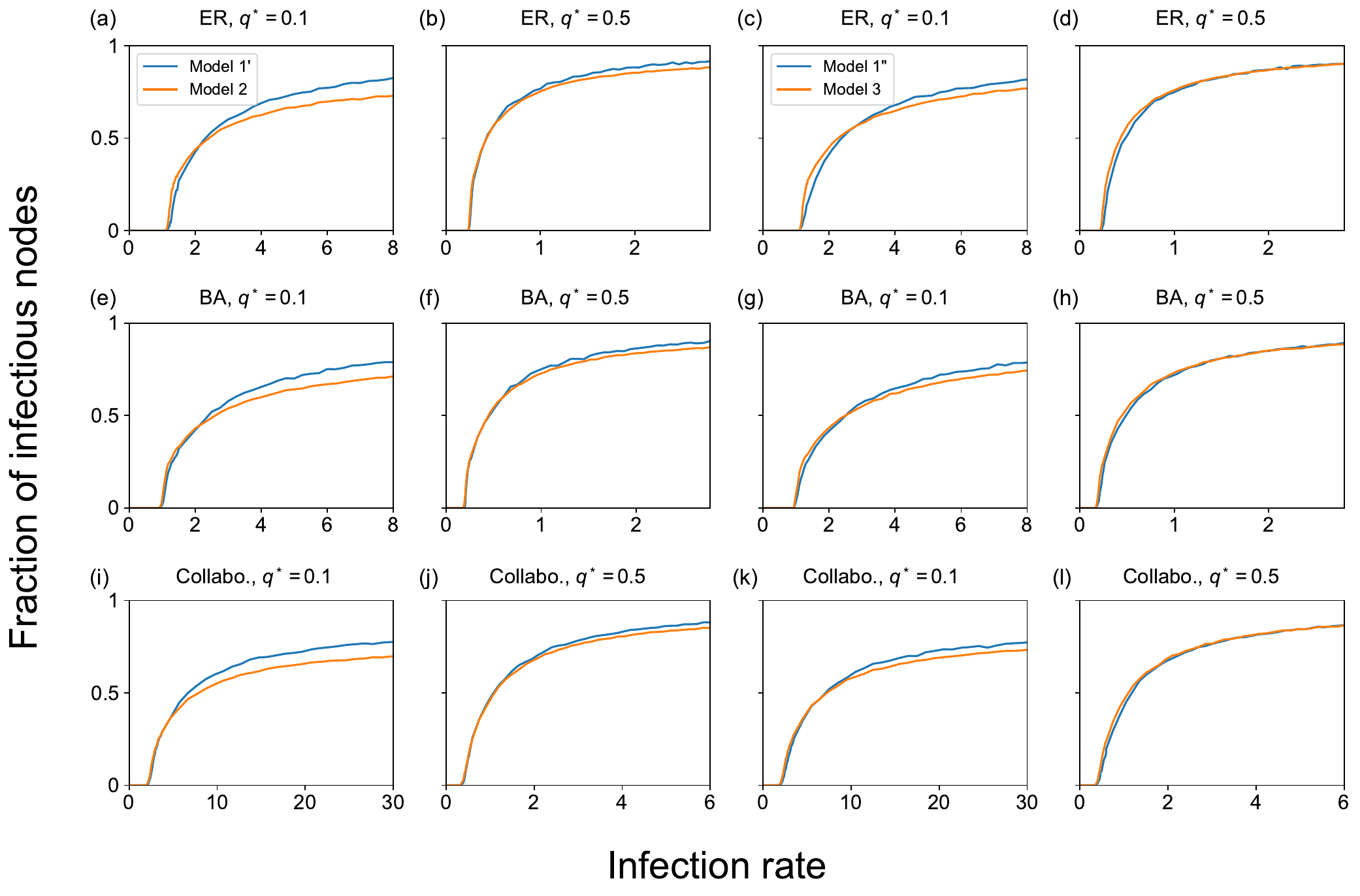}
   \caption{
Comparison of the quasi-stationary fraction of infectious nodes in the SIS model among models $1$, $2$, and $3$ under faster edge dynamics. We use the edge dynamics that are five times faster than the fast edge dynamics used in Figs.~\ref{fig:sis_1p_vs_2} and \ref{fig:sis_1dp_vs_3}.
(a)--(d) ER random graph. (e)--(h) BA model. (i)--(l) Collaboration network. In panels (a), (e), and (i), we set $q^*=0.1$ and compare models $1$ and $2$.
In panels (b), (f), and (j), we set $q^*=0.5$ and compare models $1$ and $2$. In panels (c), (g), and (k), we set $q^*=0.1$ and compare models $1$ and $3$.
In panels (d), (h), and (l), we set $q^*=0.5$ and compare models $1$ and $3$. We remind that we cannot compare models $1$, $2$, and $3$ in a single figure panel because we need to use different variants of model $1$ in the comparison with models $2$ and $3$ to make the comparison fair.}
   \label{fig:sis_even_faster_edge_dynamics}
\end{figure}

\section{Discussion}

We investigated three models of edge dynamics. They allow us to generate cases with different amounts of concurrency while keeping the probability that each edge is active (i.e., $q^*$) and the structure of the aggregate network the same. We theoretically evaluated the concurrency of each model and compared it among the models. Then, we numerically observed using the stochastic SIS and SIR models that quasi-equilibrium fractions of infectious nodes and the final epidemic size, respectively, were larger for the edge dynamics models with the higher concurrency (i.e., models 2 or 3 compared to model 1) if the infection rate was near the epidemic threshold. 
In the SIS model, the epidemic threshold was also smaller when the concurrency was higher.
These effects of the concurrency surrounding the epidemic threshold were larger when the edge dynamics is slower, i.e., with $q^*=0.1$ than with $q^*=0.5$.
At the same time, we found that the effect of concurrency on epidemic spreading was modest up to our numerical efforts. It was particularly the case for the BA model and the collaboration network, which are more realistic than the ER random graph in terms of the heterogeneity in the node's degree.
Furthermore, high concurrency suppressed epidemic spreading at large infection rates.

A previous study numerically showed that concurrency has a greater effect on increasing an epidemic potential in sparser networks \cite{Moody}. They measured the epidemic potential defined by the proportion of ordered pairs of nodes that are reachable in the sense that it is possible to travel from one node to the other node along a time-respecting path in the given temporal network composed of time-stamped edges. Our results are consistent with theirs because the effect of concurrency on enhancing epidemic spreading is stronger when $q^*$ is smaller in all the cases investigated (see Figs.~\ref{fig:sis_1p_vs_2}, \ref{fig:sis_1dp_vs_3}, \ref{fig:1p_vs_2}, and \ref{fig:1dp_vs_3}). The larger effect of concurrency on epidemic spreading for sparser networks may be because networks with large $q^*$, which leads to a large number of edges, have more time-respecting paths from an infectious node to susceptible nodes transmitting infection even in the absence of concurrency \cite{Moody}.

Another previous study concluded that the impact of concurrency on the epidemic size is fairly limited except in an early stage of epidemic dynamics \cite{Miller}. Despite the dependence of epidemic spreading on the density of contacts, which we discussed in the last paragraph, our findings are at large consistent with theirs. In other words, we found that the presence of concurrency little affected or even decreased the quasi-equilibrium fraction of infectious nodes in the SIS model and the final epidemic size in the SIR model in many cases, in particular for the BA model and collaboration networks. The reason why high concurrency decreases epidemic spreading at large infection rates is unclear. This phenomenon may be because, at high infection rates, infection can easily spread from nodes to nodes even without concurrency, such that many contact events on different edges, which would occur at close times in the case of high concurrency, are wasted. Note that our modeling framework guarantees that the total number of contact events on each edge is the same between low and high concurrency cases on average.
Our analytical result showing that the epidemic threshold is independent of the concurrency (i.e., see section~\ref{sub:Ogura}) also supports that the impact of concurrency on enhancing epidemic spreading may not be large.

In contrast, our previous study analytically showed that the concurrency enhances epidemic spreading in terms of the epidemic threshold value for the SIS model \cite{Onaga2017PRL, Onaga2019book}. The diversity of these results may be due to different strategies for modeling concurrency. For example, the authors of Ref.~\cite{Miller} designed their dynamic network models by ensuring that the number of partners that an individual has in a long term is the same across the individuals, rendering the aggregate network the complete graph. 
A different study that also carefully controlled the amount of interaction for each edge to be the same across comparisons made the same homogeneity assumption \cite{Bauch2000PRSB}. In contrast, Refs.~\cite{Onaga2017PRL, Onaga2019book} and the present study have assumed that the number of partners that an individual has in a long term may be heterogeneously distributed. Examining similarities and differences between these existing models of concurrency, including the model proposed in the present study, is an outstanding question. This being said, an overall conclusion based on the present numerical results is that the concurrency only modestly or little influences extents of epidemic spreading in many cases. This result indicates that other factors such as static network structure \cite{Newman2018book, Pastor2015RMP, Kiss2017book, Barrat2008book} and burstiness \cite{Holme2012PR, Holme2015EPJB, Masuda2020book, Karsai} may be a larger contributor to epidemic dynamics than concurrency. Further comparing the impact of concurrency on epidemic spreading and that of other network factors warrants future work.

Concurrency is not only relevant to sexually transmitted infections, as most studies of concurrency have focused on \cite{Masuda2021RoS,Aral2010CIDR,Kretzschmar2012AIDS,Lurie2010AIDS,Foxman2006STD,Sawers2013J.I.AIDS}, or even to epidemic spreading in general.  Other network dynamics such as opinion formation, synchronization, and information spreading and cascading on social networks may also be affected by concurrency. Our assumption that the partnership between any given pair of individuals can reoccur after it has disappeared is unrealistic in most cases of sexual partnerships. However, this assumption is considered to be natural for describing other types of dynamic social contacts such as face-to-face encounters and online communications, which underlie social dynamics apart from sexually transmitted infections. Investigating impacts of concurrency on these different social dynamics using the present models also warrants future work.

\section*{Acknowledgments}
M.O. acknowledges support from JSPS KAKENHI (under Grant No. JP21H01352). N.M. acknowledges support from AFOSR European Office (under Grant No. FA9550-19-1-7024), the Nakatani Foundation, the Sumitomo Foundation, and the Japan Science and Technology Agency (JST) Moonshot R$\&$D (under Grant No. JPMJMS2021).

\begin{appendices}

\section{Proof of Proposition \ref{2}}
\label{Appendix: A}

In model 2, random variable $y_i$ obeys a Bernoulli distribution with expected value $\mathbb E[y_i] = a/(a+b)$ because $y_i = 1$ with probability $a/(a+b)$ and $y_i = 0$ with probability $b/(a+b)$. Because $y_i$ and $y_j$, where $i\neq j$, are independent of each other, we obtain
\begin{align}
\mathbb E[k_i] & = \mathbb E\left[y_i\sum_{j=1}^N A_{ij}y_j\right] = \sum\limits_{\substack{j=1 \\ j\neq i}}^{N} A_{ij}\mathbb E[y_i]\mathbb E[y_j] = \overline{k}_i\left(\frac{a}{a+b}\right)^2
\end{align}
and
\begin{align}
\sigma^2[k_i] & = \mathbb E[k_i^2]-\left(\mathbb E[k_i]\right)^2 \nonumber\\
& = \mathbb E\left[\sum_{j=1}^N A_{ij}^2y_{i}^2y_{j}^2\right]+\mathbb E\left[\sum\limits_{\substack{j=1\\ j\neq i}}^N\sum\limits_{\substack{k=1\\ k\neq i,j}}^N A_{ij}A_{ik}y_{i}^2y_{j}y_{k}\right]-\left(\mathbb E[k_i]\right)^2\nonumber\\
& = \mathbb E\left[\sum_{j=1}^N A_{ij}y_{i}y_{j}\right]+\mathbb E\left[\sum\limits_{\substack{j=1\\ j\neq i}}^N\sum\limits_{\substack{k=1\\ k\neq i,j}}^N A_{ij}A_{ik}y_{i}y_{j}y_{k}\right]-\left(\mathbb E[k_i]\right)^2\nonumber\\
& = \overline{k}_i\left(\frac{a}{a+b}\right)^2 + \overline{k}_i(\overline{k}_i-1)\left(\frac{a}{a+b}\right)^3 - \left[\overline{k}_i\left(\frac{a}{a+b}\right)^2\right]^2\nonumber\\
& = \frac{\overline{k}_ia^2b}{(a+b)^3}\left(1+\frac{\overline{k}_ia}{a+b}\right).
\end{align}
We also obtain
\begin{align}
\mathbb E[\langle k\rangle] & = \mathbb E\left[\frac{2}{N} \sum_{i=1}^N \sum_{j=i+1}^{N} A_{ij}y_{i}y_{j}\right]\nonumber\\
& = \frac{2}{N} \sum_{i=1}^N \sum_{j=i+1}^{N} A_{ij} \mathbb E[y_i]\mathbb E[y_j]\nonumber\\
& = \frac{2M}{N}\left(\frac{a}{a+b}\right)^2
\end{align}
and
\begin{align}
\sigma^2[\langle k\rangle] & = \mathbb E[\langle k\rangle^2] - \left(\mathbb E[\langle k\rangle]\right)^2\nonumber\\
& = \mathbb E\left[\left(\frac{2}{N} \sum_{i=1}^N \sum_{j=i+1}^{N} A_{ij}y_{i}y_{j}\right)^2\right] - \left[\frac{2M}{N}\left(\frac{a}{a+b}\right)^2\right]^2\nonumber\\
& = \frac{4}{N^2}\mathbb E\left[\sum_{i=1}^N \sum_{j=i+1}^{N} A_{ij}^{2}y_{i}^{2}y_{j}^{2} + 2\sum_{i=1}^N \sum_{j=i+1}^{N} \sum_{k=j+1}^{N} A_{ij}A_{jk}y_{i}y_{j}^{2}y_{k} + 2\sum_{i=1}^N \sum_{j=i+1}^{N} \sum_{k=j+1}^{N} A_{ij}A_{ik}y_{i}^{2}y_{j}y_{k} \right. \nonumber\\
& \phantom{=\;\;}\left.+ 2\sum_{i=1}^N \sum_{k=i+1}^{N} \sum_{j=k+1}^{N} A_{ij}A_{kj}y_{i}y_{k}y_{j}^{2} + 2\sum_{i=1}^{N-1}\sum_{j=i+1}^N\sum\limits_{\substack{k>i\\k\neq j}}^{N-1}\sum\limits_{\substack{\ell>k\\\ell\neq i,j}}^N A_{ij}A_{k\ell}y_{i}y_{j}y_{k}y_{\ell}\right] - \left(\frac{2M}{N}\right)^2\left(\frac{a}{a+b}\right)^4\nonumber\\
& = \frac{4}{N^2}\left(\mathbb E\left[\sum_{i=1}^N \sum_{j=i+1}^{N} A_{ij}y_{i}y_{j}\right] + 2\mathbb E\left[\sum_{i=1}^N \sum_{j=i+1}^{N} \sum_{k=j+1}^{N} A_{ij}A_{jk}y_{i}y_{j}y_{k}\right] + 2\mathbb E\left[\sum_{i=1}^N \sum_{j=i+1}^{N} \sum_{k=j+1}^{N} A_{ij}A_{ik}y_{i}y_{j}y_{k}\right] \right.\nonumber\\
& \phantom{=\;\;}\left. + 2\mathbb E\left[\sum_{i=1}^N \sum_{k=i+1}^{N} \sum_{j=k+1}^{N} A_{ij}A_{kj}y_{i}y_{k}y_{j}\right] + 2\mathbb E\left[\sum_{i=1}^{N-1}\sum_{j=i+1}^N\sum\limits_{\substack{k>i\\k\neq j}}^{N-1}\sum\limits_{\substack{\ell>k\\\ell\neq i,j}}^N A_{ij}A_{k\ell}y_{i}y_{j}y_{k}y_{\ell}\right]\right) - \left(\frac{2M}{N}\right)^2\left(\frac{a}{a+b}\right)^4\nonumber\\
& = \frac{4}{N^2}\left[M\left(\frac{a}{a+b}\right)^2 + 2\left(\frac{a}{a+b}\right)^3\sum_{i=1}^N \sum_{j=i+1}^{N} \sum_{k=j+1}^{N} A_{ij}A_{jk} + 2\left(\frac{a}{a+b}\right)^3\sum_{i=1}^N \sum_{j=i+1}^{N} \sum_{k=j+1}^{N} A_{ij}A_{ik}\right.\nonumber\\
& \phantom{=\;\;}\left. + 2\left(\frac{a}{a+b}\right)^3\sum_{i=1}^N \sum_{k=i+1}^{N} \sum_{j=k+1}^{N} A_{ij}A_{kj} + 2\left(\frac{a}{a+b}\right)^4\sum_{i=1}^{N-1}\sum_{j=i+1}^N\sum\limits_{\substack{k>i\\k\neq j}}^{N-1}\sum\limits_{\substack{\ell>k\\\ell\neq i,j}}^N A_{ij}A_{k\ell}\right] - \left(\frac{2M}{N}\right)^2\left(\frac{a}{a+b}\right)^4\nonumber\\
& = \frac{4M}{N^2}\left(\frac{a}{a+b}\right)^2 + \frac{8}{N^2}\left(\frac{a}{a+b}\right)^3\left(\sum_{i=1}^N \sum_{j=i+1}^{N} \sum_{k=j+1}^{N} A_{ij}A_{jk} + \sum_{i=1}^N \sum_{j=i+1}^{N} \sum_{k=j+1}^{N} A_{ij}A_{ik} + \sum_{i=1}^N \sum_{k=i+1}^{N} \sum_{j=k+1}^{N} A_{ij}A_{kj}\right.\nonumber\\
& \phantom{=\;\;}\left.+\sum_{i=1}^{N-1}\sum_{j=i+1}^N\sum\limits_{\substack{k>i\\k\neq j}}^{N-1}\sum\limits_{\substack{\ell>k\\\ell\neq i,j}}^N A_{ij}A_{k\ell}\right) + \frac{8}{N^2}\left(\frac{a}{a+b}\right)^3\sum_{i=1}^{N-1}\sum_{j=i+1}^N\sum\limits_{\substack{k>i\\k\neq j}}^{N-1}\sum\limits_{\substack{\ell>k\\\ell\neq i,j}}^N A_{ij}A_{k\ell}\left(\frac{a}{a+b} - 1\right) - \left(\frac{2M}{N}\right)^2\left(\frac{a}{a+b}\right)^4\nonumber\\
& = \frac{4M}{N^2}\left(\frac{a}{a+b}\right)^2 + \frac{4M(M-1)}{N^2}\left(\frac{a}{a+b}\right)^3 - \frac{8}{N^2}\frac{a^3b}{(a+b)^4}\sum_{i=1}^{N-1}\sum_{j=i+1}^N\sum\limits_{\substack{k>i\\k\neq j}}^{N-1}\sum\limits_{\substack{\ell>k\\\ell\neq i,j}}^N A_{ij}A_{k\ell}\nonumber\\ 
& \quad - \left(\frac{2M}{N}\right)^2\left(\frac{a}{a+b}\right)^4.
\end{align}

\section{Proof of Proposition \ref{4}}
\label{Appendix: B}

For model 3, we obtain
\begin{align}
\mathbb E[k_i] & = \mathbb E\left[\sum_{j=1}^N A_{ij}(y_i+y_j-y_iy_j)\right] \nonumber\\
& = \sum\limits_{\substack{j=1 \\ j\neq i}}^{N} A_{ij}(\mathbb E[y_i]+\mathbb E[y_j]-\mathbb E[y_i]\mathbb E[y_j]) = \frac{\overline{k}_ia(a+2b)}{(a+b)^2}
\end{align}
and
\begin{align}
\sigma^2[k_i] & = \mathbb E[k_i^2]-\left(\mathbb E[k_i]\right)^2\nonumber\\
& = \mathbb E\left[\sum_{j=1}^N A_{ij}^2(y_i+y_j-y_iy_j)^2\right]+\mathbb E\left[\sum\limits_{\substack{j=1\\ j\neq i}}^N\sum\limits_{\substack{k=1\\ k\neq i,j}}^N A_{ij}A_{ik}(y_i+y_jy_k-y_iy_jy_k)\right]-\left(\mathbb E[k_i]\right)^2\nonumber\\
& = \mathbb E\left[\sum_{j=1}^N A_{ij}(y_i+y_j-y_iy_j)\right] + \mathbb E\left[\sum\limits_{\substack{j=1\\ j\neq i}}^N\sum\limits_{\substack{k=1\\ k\neq i,j}}^N A_{ij}A_{ik}(y_i+y_jy_k-y_iy_jy_k)\right]-\left(\mathbb E[k_i]\right)^2\nonumber\\
& = \frac{\overline{k}_ia(a+2b)}{(a+b)^2}+\frac{\overline{k}_i(\overline{k}_i-1)a(a^2+3ab+b^2)}{(a+b)^3}-\left[\frac{\overline{k}_ia(a+2b)}{(a+b)^2}\right]^2.
\end{align}
We also obtain
\begin{align}
\mathbb E[\langle k\rangle] & = \mathbb E\left[\frac{2}{N} \sum_{i=1}^N \sum_{j=i+1}^{N} A_{ij}\left(y_i + y_j - y_iy_j\right)\right]\nonumber\\
& = \frac{2}{N} \sum_{i=1}^N \sum_{j=i+1}^{N} A_{ij}\left(\mathbb E[y_i] + \mathbb E[y_j] - \mathbb E[y_i]\mathbb E[y_j]\right)\nonumber\\
& = \frac{2}{N} \sum_{i=1}^N \sum_{j=i+1}^{N} A_{ij}\left[\frac{2a}{a+b} - \left(\frac{a}{a+b}\right)^2\right]\nonumber\\
& = \frac{2M}{N}\cdot\frac{a(a+2b)}{\left(a+b\right)^2}
\end{align}
and
\begin{align}
\sigma^2[\langle k\rangle] & = \mathbb E[\langle k\rangle^2] - \left(\mathbb E[\langle k\rangle]\right)^2\nonumber\\
& = \mathbb E\left[\left(\frac{2}{N} \sum_{i=1}^N \sum_{j=i+1}^{N} A_{ij}\left(y_i + y_j - y_iy_j\right)\right)^2\right] - \left[\frac{2M}{N}\cdot\frac{a(a+2b)}{\left(a+b\right)^2}\right]^2\nonumber\\
& = \frac{4}{N^2} \mathbb E\left[\sum_{i=1}^N \sum_{j=i+1}^{N} A_{ij}^2\left(y_i+y_j-y_iy_j\right)^2 + 2\sum_{i=1}^N \sum_{j=i+1}^{N} \sum_{k=j+1}^{N} A_{ij}A_{jk}\left(y_i+y_j-y_iy_j\right)\left(y_j+y_k-y_jy_k\right)\right.\nonumber\\
& \phantom{=\;\;}\left.+ 2\sum_{i=1}^N \sum_{j=i+1}^{N} \sum_{k=j+1}^{N} A_{ij}A_{ik}\left(y_i+y_j-y_iy_j\right)\left(y_i+y_k-y_iy_k\right)\right.\nonumber\\ 
& \phantom{=\;\;}\left.+ 2\sum_{i=1}^N \sum_{k=i+1}^{N} \sum_{j=k+1}^{N} A_{ij}A_{kj}\left(y_i+y_j-y_iy_j\right)\left(y_k+y_j-y_ky_j\right)\right.\nonumber\\
& \phantom{=\;\;}\left.+ 2\sum_{i=1}^{N-1}\sum_{j=i+1}^N\sum\limits_{\substack{k>i\\k\neq j}}^{N-1}\sum\limits_{\substack{\ell>k\\\ell\neq i,j}}^N A_{ij}A_{k\ell}\left(y_i+y_j-y_iy_j\right)\left(y_k+y_\ell-y_ky_\ell\right)\right] - \left[\frac{2M}{N}\cdot\frac{a(a+2b)}{\left(a+b\right)^2}\right]^2\nonumber\\
& = \frac{4}{N^2} \mathbb E\left[\sum_{i=1}^N \sum_{j=i+1}^{N} A_{ij}\left(y_i+y_j-y_iy_j\right)+2\sum_{i=1}^N \sum_{j=i+1}^{N} \sum_{k=j+1}^{N} A_{ij}A_{jk}\left(y_j+y_iy_k-y_iy_jy_k\right)\right.\nonumber\\
& \phantom{=\;\;}\left.+2\sum_{i=1}^N \sum_{j=i+1}^{N} \sum_{k=j+1}^{N} A_{ij}A_{ik}\left(y_i+y_jy_k-y_iy_jy_k\right)+2\sum_{i=1}^N \sum_{k=i+1}^{N} \sum_{j=k+1}^{N} A_{ij}A_{kj}\left(y_j+y_iy_k-y_iy_jy_k\right)\right.\nonumber\\
& \phantom{=\;\;}\left.+2\sum_{i=1}^{N-1}\sum_{j=i+1}^N\sum\limits_{\substack{k>i\\k\neq j}}^{N-1}\sum\limits_{\substack{\ell>k\\\ell\neq i,j}}^N A_{ij}A_{k\ell}\left(y_iy_k+y_iy_\ell+y_jy_k+y_jy_\ell-y_iy_ky_\ell-y_jy_ky_\ell-y_iy_jy_k-y_iy_jy_\ell+y_iy_jy_ky_\ell\right)\right]\nonumber\\
& \quad \ - \left[\frac{2M}{N}\cdot\frac{a(a+2b)}{\left(a+b\right)^2}\right]^2\nonumber\\
& = \frac{4M}{N^2}\cdot\frac{a(a+2b)}{\left(a+b\right)^2}+\frac{8}{N^2}\sum_{i=1}^N \sum_{j=i+1}^{N} \sum_{k=j+1}^{N} A_{ij}A_{jk}\frac{a(a^2+3ab+b^2)}{\left(a+b\right)^3}\nonumber\\
& \quad \ +\frac{8}{N^2}\sum_{i=1}^N \sum_{j=i+1}^{N} \sum_{k=j+1}^{N} A_{ij}A_{ik}\frac{a(a^2+3ab+b^2)}{\left(a+b\right)^3}+\frac{8}{N^2}\sum_{i=1}^N \sum_{k=i+1}^{N} \sum_{j=k+1}^{N} A_{ij}A_{kj}\frac{a(a^2+3ab+b^2)}{\left(a+b\right)^3}\nonumber\\
& \quad \ +\frac{8}{N^2}\sum_{i=1}^{N-1}\sum_{j=i+1}^N\sum\limits_{\substack{k>i\\k\neq j}}^{N-1}\sum\limits_{\substack{\ell>k\\\ell\neq i,j}}^N A_{ij}A_{k\ell}\frac{a^2(a^2+4ab+4b^2)}{\left(a+b\right)^4} - \left[\frac{2M}{N}\cdot\frac{a(a+2b)}{\left(a+b\right)^2}\right]^2\nonumber\\
& = \frac{4M}{N^2}\cdot\frac{a(a+2b)}{\left(a+b\right)^2}+\frac{4M(M-1)}{N^2}\cdot\frac{a(a^2+3ab+b^2)}{\left(a+b\right)^3} - \frac{8}{N^2}\cdot\frac{ab^3}{\left(a+b\right)^4}\sum_{i=1}^{N-1}\sum_{j=i+1}^N\sum\limits_{\substack{k>i\\k\neq j}}^{N-1}\sum\limits_{\substack{\ell>k\\\ell\neq i,j}}^N A_{ij}A_{k\ell}\nonumber\\
& \quad \ - \left[\frac{2M}{N}\cdot\frac{a(a+2b)}{\left(a+b\right)^2}\right]^2.
\end{align}

\section{\label{Appendix: C}Proof of Proposition~\ref{6}}


Let $a_2$ and $a_3$ be the rate at which the state of a node changes from the $\ell$ to the $h$ state for models 2 and 3, respectively. Similarly, let $b_2$ and $b_3$ be the rate at which the state of a node changes from the $h$ to $\ell$ state for models 2 and 3, respectively. Because $q^*$ is the same between models 2 and 3 by assumption, Eq.~\eqref{eo23} holds true. We apply Eq.~\eqref{eo23} to Eqs.~\eqref{Di2} and \eqref{Di3} to obtain
\begin{align}
& \sigma^2_3[k_i] - \sigma^2_2[k_i]\nonumber\\
= & \overline{k}_i\frac{a_3(a_3+2b_3)}{\left(a_3+b_3\right)^2}+\overline{k}_i(\overline{k}_i-1)\frac{a_3(a_3^2+3a_3b_3+b_3^2)}{\left(a_3+b_3\right)^3}-\left[\overline{k}_i\frac{a_3(a_3+2b_3)}{\left(a_3+b_3\right)^2}\right]^2\nonumber\\
\quad & - \overline{k}_i\left(\frac{a_2}{a_2+b_2}\right)^2 - \overline{k}_i(\overline{k}_i-1)\left(\frac{a_2}{a_2+b_2}\right)^3 + \left[\overline{k}_i\left(\frac{a_2}{a_2+b_2}\right)^2\right]^2\nonumber\\
= & \overline{k}_i(\overline{k}_i-1)\frac{a_3(a_3^2+3a_3b_3+b_3^2)}{\left(a_3+b_3\right)^3}-\overline{k}_i(\overline{k}_i-1)\left(\frac{a_2}{a_2+b_2}\right)^3\nonumber\\
= & \overline{k}_i(\overline{k}_i-1)\frac{a_3(a_3^2+3a_3b_3+b_3^2)}{\left(a_3+b_3\right)^3}-\overline{k}_i(\overline{k}_i-1)\frac{\sqrt{a_3^3(a_3+2b_3)^3}}{\left(a_3+b_3\right)^3}\nonumber\\
= & \frac{\overline{k}_i(\overline{k}_i-1)}{\left(a_3+b_3\right)^3}\left[a_3(a_3^2+3a_3b_3+b_3^2)-\sqrt{a_3^3(a_3+2b_3)^3}\right].
\end{align}
Therefore, $\sigma^2_3[k_i] < \sigma^2_2[k_i]$ if and only if $b_3<(1+\sqrt{2})a_3$, which is equivalent to $\frac{1}{2}<q^*\leq 1$.

By applying Eq.~\eqref{eo23} to Eqs.~\eqref{D2} and \eqref{D3} and using Eq.~\eqref{combination}, we obtain
\begin{align}
& \sigma^2_2[\langle k\rangle]-\sigma^2_3[\langle k\rangle]\nonumber\\
= & \frac{4M(M-1)}{N^2}\left(\frac{a_2}{a_2+b_2}\right)^3 - \frac{8}{N^2}\left(\frac{a_2}{a_2+b_2}\right)^3\frac{b_2}{a_2+b_2}\sum_{i=1}^{N-1}\sum_{j=i+1}^N\sum\limits_{\substack{k>i\\k\neq j}}^{N-1}\sum\limits_{\substack{\ell>k\\\ell\neq i,j}}^N A_{ij}A_{k\ell}\nonumber\\
& -\frac{4M(M-1)}{N^2}\cdot\frac{a_3(a_3^2+3a_3b_3+b_3^2)}{(a_3+b_3)^3} + \frac{8}{N^2}\cdot\frac{a_3b_3^3}{(a_3+b_3)^4}\sum_{i=1}^{N-1}\sum_{j=i+1}^N\sum\limits_{\substack{k>i\\k\neq j}}^{N-1}\sum\limits_{\substack{\ell>k\\\ell\neq i,j}}^N A_{ij}A_{k\ell}\nonumber\\
= & \frac{4M(M-1)}{N^2}\cdot\frac{\sqrt{a_3^3(a_3+2b_3)^3}}{\left(a_3+b_3\right)^3}-\frac{8}{N^2}\cdot \frac{\sqrt{a_3^3(a_3+2b_3)^3}}{\left(a_3+b_3\right)^3}\cdot\frac{a_3+b_3-\sqrt{a_3(a_3+2b_3)}}{a_3+b_3}\sum_{i=1}^{N-1}\sum_{j=i+1}^N\sum\limits_{\substack{k>i\\k\neq j}}^{N-1}\sum\limits_{\substack{\ell>k\\\ell\neq i,j}}^N A_{ij}A_{k\ell}\nonumber\\
& \quad \ -\frac{4M(M-1)}{N^2}\cdot\frac{a_3(a_3^2+3a_3b_3+b_3^2)}{\left(a_3+b_3\right)^3} +\frac{8}{N^2}\cdot\frac{a_3b_3^3}{\left(a_3+b_3\right)^4}\sum_{i=1}^{N-1}\sum_{j=i+1}^N\sum\limits_{\substack{k>i\\k\neq j}}^{N-1}\sum\limits_{\substack{\ell>k\\\ell\neq i,j}}^N A_{ij}A_{k\ell}\nonumber\\
= & \frac{4M(M-1)}{N^2}\cdot\frac{\left(a_3+b_3\right)\left[\sqrt{a_3^3(a_3+2b_3)^3}-a_3(a_3^2+3a_3b_3+b_3^2)\right]}{\left(a_3+b_3\right)^4}\nonumber\\
& - \frac{8}{N^2}\sum_{i=1}^{N-1}\sum_{j=i+1}^N\sum\limits_{\substack{k>i\\k\neq j}}^{N-1}\sum\limits_{\substack{\ell>k\\\ell\neq i,j}}^N A_{ij}A_{k\ell}\frac{\left(a_3+b_3\right)\sqrt{a_3^3(a_3+2b_3)^3}-a_3^2(a_3+2b_3)^2-a_3b_3^3}{\left(a_3+b_3\right)^4}\nonumber\\
= & \frac{8}{N^2}\left[\frac{M(M-1)}{2}-\sum_{i=1}^{N-1}\sum_{j=i+1}^N\sum\limits_{\substack{k>i\\k\neq j}}^{N-1}\sum\limits_{\substack{\ell>k\\\ell\neq i,j}}^N A_{ij}A_{k\ell}\right]\frac{\sqrt{a_3^3(a_3+2b_3)^3}-a_3(a_3^2+3a_3b_3+b_3^2)}{\left(a_3+b_3\right)^3}.
\end{align}
Therefore, $\sigma^2_3[\langle k\rangle] < \sigma^2_2[\langle k\rangle]$ if and only if $b_3<(1+\sqrt{2})a_3$, which is equivalent to $\frac{1}{2}<q^*\leq 1$.


\section{\label{sec:Masaki}Epidemic threshold when the network switches rapidly}

In this section, we evaluate the epidemic threshold for models 1, 2, and 3 under the assumption that the time-independent networks at different times are independent of each other in each model.  


%
%

We denote the identity and the zero matrices by $I$ and~$O$, respectively. 
A real matrix~$A$ (or a vector as its special case) is said to be nonnegative,
denoted by $A\geq 0$, if all the entries of~$A$ are nonnegative. If all the entries of~$A$ are positive, then $A$ is said to be positive. We say that $A\leq B$, where $A$ and $B$ are of the same dimension, if $B-A\geq 0$. A square matrix~$A$ is said to be Metzler if all its off-diagonal entries are nonnegative~\cite{Farina2000}. If $A$ is Metzler, it holds true that $e^{At} \geq
0$ for all $t\geq 0$ \cite{Farina2000}. For a Metzler matrix~$A$, the maximum real part of the eigenvalues of~$A$ is denoted
by~$\lambda_{\max}(A)$. For any matrix $A$, the spectral radius is the largest absolute value of its eigenvalues and denoted by $\rho(A)$.

The SIS model is a continuous-time Markov process with $2^N$ possible states
\cite{VanMieghem2009a,Pastor2015RMP,Kiss2017book} and
has a unique absorbing state in which all the $N$ nodes are susceptible. Because this absorbing state is reachable from any other state, the dynamics of the SIS model reaches the disease-free absorbing equilibrium in finite time with probability one. 


\subsection{A lower bound on the decay rate for subgraphs} \label{sec1}

We refer to the directed edges of a given network $\mathcal{\overline G}$ as $\mathcal{\overline E}=\{e_1, \ldots, e_M\}$, where the $\ell$th edge ($1\le \ell\le M$) is represented by $e_\ell = (i_\ell, j_\ell)$, i.e., the edge is directed from node $v_{i_\ell}$ to node $v_{j_\ell}$. Define the incidence matrix $\overline{C}\in \mathbb{R}^{N\times M}$ of the network $\mathcal{\overline G}$ by \cite{Wilson1972-2010book,Newman2018book}
\begin{equation}
\overline{C}_{i\ell} = \begin{cases}
1, & \mbox{if $j_\ell = i$}, 
\\
-1, & \mbox{if $i_\ell = i$}, 
\\
0, & \mbox{otherwise},
\end{cases}
\end{equation}
for all $i\in \{1, \dotsc, N\}$ and $\ell \in \{1, \dotsc, M\}$.
We also define the non-backtracking matrix $\overline{H}\in \mathbb{R}^{M\times M}$ of $\mathcal{\overline G}$ by \cite{Hashimoto1989AdvStudPureMath,Alon2007CommContempMath}
\begin{equation}\label{non-back_overlineH}
\overline{H}_{\ell m} = \begin{cases}
1, & \mbox{if $j_\ell = i_m$ and $j_m \neq i_\ell$}, 
\\
0, & \mbox{otherwise.}
\end{cases}
\end{equation}


We prove the following corollary about the SIS spreading processes taking place in a subgraph of $\mathcal G$.

\begin{corollary}\label{cor:subgraphSIS}
  Let $\breve{\mathcal E}$ be a subset of $\mathcal{\overline E}$. Consider the SIS model over the network $\breve{\mathcal G} = (\mathcal V, \breve{\mathcal E})$ with the infection rate $\beta > 0$ and recovery rate $\mu > 0$. Define matrix $\breve H\in \mathbb{R}^{M\times M}$ by 
\begin{equation}
\breve H_{\ell m} = \begin{cases}
1, & \mbox{if $e_\ell \in \breve{\mathcal E}$, {$j_\ell = i_m$}, and {$j_m \neq i_\ell$}}, 
\\
0, & \mbox{otherwise,}
\end{cases}
\end{equation}
We also define the diagonal matrix 
  \begin{equation}
    \breve\Xi =\text{diag}(\breve\xi_1, \ldots, \breve\xi_M),
  \end{equation}
  where 
  \begin{equation}
    \breve\xi_\ell = \begin{cases}
      1,&\mbox{if $e_\ell \in \breve{\mathcal{E}}$}, 
      \\
      0,&\mbox{otherwise}
    \end{cases}
  \end{equation}
  for each $\ell \in \{1, \dotsc, M\}$.
Then, we obtain
\begin{equation}
  \frac{d}{dt}\begin{bmatrix}
  \bm p\\ \bm q
  \end{bmatrix}
  \leq 
  \breve{\mathcal A} \begin{bmatrix}
  \bm p\\ \bm q
  \end{bmatrix},  
  \end{equation}  
where $\bm p(t) = \left[p_1(t), \ldots, p_N(t)\right]^{\top}$, $p_i(t) = \mathbb E[x_i(t)]$ is the probability that node $v_i$ is infectious at time $t$, $^{\top}$ represents the transposition,
$\bm q(t) = \left[ q_{i_1 j_1}(t), \ldots, q_{i_M j_M}(t) \right]^{\top}$, $q_{ij}(t)=\mathbb E[x_i(t)(1-x_j(t)]$ is the joint probability that $v_i$ is infectious and node $v_j$ is susceptible at time $t$, and the $(N+M)\times (N+M)$ matrix $\breve{\mathcal A}$ is defined by 
\begin{equation}
  \breve{\mathcal A} = \begin{bmatrix}
    -\mu I & \beta \overline{C}_+\breve\Xi
    \\
    \mu \overline{C}_-^\top & \beta \breve H^\top - \beta \breve\Xi -2\mu I
  \end{bmatrix},
\end{equation}
where $\overline{C}_+ = \max(\overline{C}, 0)$ and $\overline{C}_- = \max(-\overline{C}, 0)$ denote the positive and negative parts of the incidence matrix~$\overline{C}$, respectively.
\end{corollary}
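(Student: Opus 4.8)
The plan is to derive the result directly from the generator of the SIS process restricted to the subgraph $\breve{\mathcal G}$, obtaining the exact moment equations for $p_i=\mathbb E[x_i]$ and $q_{i_\ell j_\ell}=\mathbb E[x_{i_\ell}(1-x_{j_\ell})]$ and then bounding the unclosed third-order terms so that each bound pushes the derivative upward. (This mirrors the preceding theorem, now specialized by replacing the edge set $\overline{\mathcal E}$ with $\breve{\mathcal E}$ throughout.) On $\breve{\mathcal G}$ the transitions are: each infectious node $k$ recovers at rate $\mu$, and each susceptible node $k$ is infected at rate $\beta\sum_{m\in\breve{\mathcal E}:\,j_m=k}x_{i_m}$. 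Applying the generator to $x_i$ gives the \emph{exact} identity $\frac{d}{dt}p_i=-\mu p_i+\beta\sum_{m\in\breve{\mathcal E}:\,j_m=i}\mathbb E[(1-x_i)x_{i_m}]$. Since $j_m=i$ forces $\mathbb E[(1-x_i)x_{i_m}]=\mathbb E[x_{i_m}(1-x_{j_m})]=q_{i_m j_m}$ and $(\overline C_+)_{i\ell}=\mathds 1(j_\ell=i)$, the restriction to $\breve{\mathcal E}$ is encoded precisely by $\overline C_+\breve\Xi$, so the first block row of $\breve{\mathcal A}$ holds with equality.

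Next I would apply the generator to $f=x_i(1-x_j)$ for the directed edge $e_\ell=(i,j)$. Recovery of $i$ contributes $-\mu q_{ij}$; recovery of $j$ contributes $+\mu\,\mathbb E[x_i x_j]=\mu(p_i-q_{ij})$, using $\mathbb E[x_ix_j]=p_i-q_{ij}$. These combine to $\mu p_i-2\mu q_{ij}$, which I identify with the $\mu\overline C_-^{\top}\bm p$ and $-2\mu I\bm q$ blocks, noting that $(\overline C_-^{\top}\bm p)_\ell=p_{i_\ell}$ because $(\overline C_-)_{i\ell}=\mathds 1(i_\ell=i)$. The infection events generate the two unclosed (third-order) terms: a \emph{gain} $\beta\sum_{m\in\breve{\mathcal E}:\,j_m=i}\mathbb E[x_{i_m}(1-x_i)(1-x_j)]$ from edges feeding $i$, and a \emph{loss} $-\beta\sum_{m\in\breve{\mathcal E}:\,j_m=j}\mathbb E[x_i(1-x_j)x_{i_m}]$ from edges feeding $j$.

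The heart of the argument is two sign-definite facts. For the gain, the backtracking summand $i_m=j$ vanishes identically because $x_{j}(1-x_{j})=0$; for every remaining summand I bound $(1-x_j)\le 1$ to get $\mathbb E[x_{i_m}(1-x_i)(1-x_j)]\le q_{i_m j_m}$. The surviving index set, namely $m\in\breve{\mathcal E}$ with $j_m=i$ and $i_m\neq j$, is exactly the set recorded by $\breve H$, so the gain is at most $\beta(\breve H^{\top}\bm q)_\ell$. For the loss, the self-term $m=\ell$ (present iff $e_\ell\in\breve{\mathcal E}$) equals $\mathbb E[x_{i}(1-x_j)x_{i}]=q_{ij}$ and yields $-\beta\breve\xi_\ell q_{ij}=-\beta(\breve\Xi\bm q)_\ell$, while all other loss summands are nonnegative, so discarding them only raises the derivative. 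Assembling these estimates gives
\begin{align}
\frac{d}{dt}q_{i_\ell j_\ell}
&\le \mu p_{i_\ell}-2\mu q_{i_\ell j_\ell}+\beta(\breve H^{\top}\bm q)_\ell-\beta(\breve\Xi\bm q)_\ell \nonumber\\
&= \bigl[\mu\overline C_-^{\top}\bm p+(\beta\breve H^{\top}-\beta\breve\Xi-2\mu I)\bm q\bigr]_\ell, \nonumber
\end{align}
which is the second block row of $\breve{\mathcal A}$.

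I expect the main obstacle to be purely combinatorial bookkeeping of the directed-edge indexing: verifying that ``$m\in\breve{\mathcal E}$ with $j_m=i_\ell$ and $i_m\neq j_\ell$'' is transcribed correctly by $\breve H^{\top}$ and that the isolated self-term $m=\ell$ is exactly $\breve\Xi$, so that the two elementary facts $x(1-x)=0$ and nonnegativity of expectations each make their bound point in the direction of an \emph{upper} bound on $d/dt$. Once the index sets are matched, the identification of the remaining $\mu\overline C_\pm$ and $-2\mu I$ blocks is routine, and the componentwise inequalities for $\bm p$ (equality) and $\bm q$ together give the claimed vector inequality.
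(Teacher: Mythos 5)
Your proof is correct, and it reaches the stated inequality by a route that is more self-contained than the paper's. The paper does not rederive the moment dynamics at all: it invokes Eq.~(3.20) of Masuda and Ogura (2020) with the substitutions $B'\mapsto\beta\breve\Xi$, $D, D_1', D_2'\mapsto\mu I$ to obtain $\frac{d}{dt}[\bm p;\bm q]=\mathcal A[\bm p;\bm q]-[\bm 0;\bm\epsilon]$ with $\bm\epsilon\geq 0$, and then reduces the corollary to the single matrix identity $\overline H^{\top}\breve\Xi=\breve H^{\top}$, verified by an index computation. You instead reprove the content of that cited lemma from the generator: the exact first block row, the recovery bookkeeping $\mu p_{i_\ell}-2\mu q_{i_\ell j_\ell}$, the vanishing of the backtracking gain term via $x_j(1-x_j)=0$, the bound $(1-x_j)\leq 1$ producing $\beta(\breve H^{\top}\bm q)_\ell$, and the retention of only the self-loss term $-\beta\breve\xi_\ell q_{i_\ell j_\ell}$ while discarding the remaining nonpositive loss contributions. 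All index identifications ($\overline C_+$, $\overline C_-^{\top}$, $\breve H^{\top}$, $\breve\Xi$) check out against the definitions in the statement, and the discarded terms are exactly what the paper packages into $\bm\epsilon$. What your version buys is transparency and independence from the external reference; what the paper's version buys is brevity, since the only new work beyond the citation is the identity $\overline H^{\top}\breve\Xi=\breve H^{\top}$, which is precisely the combinatorial bookkeeping you flag as the main obstacle.
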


\begin{proof}
We adapt Eq. (3.20) in Ref.~\cite{Masuda2020ImaJApplMath} to the case in which the network is defined by $\breve{\mathcal{G}}$, the infection rate is independent of edges, and the recovery rate is independent of nodes. Specifically, by replacing $B'$, $D$, $D_1'$, $D_2'$ in Eq. (3.20) in Ref.~\cite{Masuda2020ImaJApplMath} by $\beta\breve\Xi$, $\mu I$, $\mu I$, $\mu I$, respectively, we obtain
\begin{equation}\label{eq:pSIS}
\frac{d}{dt}\begin{bmatrix}
\bm p\\ \bm q
\end{bmatrix}
=
\mathcal A \begin{bmatrix}
\bm p\\ \bm q
\end{bmatrix} - \begin{bmatrix}
\bm 0 \\ \bm \epsilon
\end{bmatrix},
\end{equation}
where $\bm \epsilon(t)$ is entry-wise nonnegative for every $t\geq 0$.
Equation \eqref{eq:pSIS} implies 
\begin{equation}
  \frac{d}{dt}\begin{bmatrix}
  \bm p\\ \bm q
  \end{bmatrix}
  \leq 
  {\mathcal A} \begin{bmatrix}
  \bm p\\ \bm q
  \end{bmatrix}, 
  \end{equation} 
where matrix $\mathcal{A}$ is given by
\begin{equation}\label{eq:defMathcalA}
  \mathcal A = \begin{bmatrix}
  -\mu I & \beta \overline{C}_+\breve\Xi
  \\
  \mu \overline{C}_-^\top & \beta \overline{H}^\top {\breve\Xi} - \beta \breve\Xi - 2\mu I
  \end{bmatrix}. 
  \end{equation}
Therefore, to complete the proof, it is sufficient to show 
\begin{equation}\label{eq:sufftoshow2}
  \overline{H}^\top \breve\Xi = \breve H^\top, 
\end{equation}
which proves $\mathcal A= \breve{\mathcal A}$. 

Let us show Eq.~\eqref{eq:sufftoshow2}. For any $\ell, m \in \{1, \dotsc, M\}$, one obtains
\begin{align}
  [\overline{H}^\top \breve\Xi]_{\ell m}
  &=
  \sum_{k=1}^M [\overline{H}^\top]_{\ell k} [\breve\Xi]_{km}\notag
  \\
  &=
   [\overline{H}^\top]_{\ell m} [\breve\Xi]_{mm}\notag
   \\
   &=
   \overline{H}_{m \ell} \breve\xi_m\notag
   \\
   &=
   \begin{cases}
     1,&\mbox{if $e_m \in \breve{\mathcal E}$, $j_m=i_\ell$, and $j_\ell \neq i_m$}, 
     \\
     0,&\mbox{otherwise},
   \end{cases} \notag
   \\
   &=
   [\breve H^\top]_{\ell m}, 
\end{align}
which proves \eqref{eq:sufftoshow2}. 
\end{proof}

\subsection{A lower bound on the decay rate for temporal networks}

Let $\tilde{\mathcal G}_1 = (\mathcal V, \tilde{\mathcal E}_1)$, $\ldots$, $\tilde{\mathcal G}_L = (\mathcal V, \tilde{\mathcal E}_L)$ be directed and unweighted networks having the common node set $\mathcal V = \{v_1, \dotsc, v_N\}$. 
Let $\{\sigma_k\}_{k=0}^\infty$ be independent and identically distributed random variables following a probability distribution on the set $\{1, \dotsc, L\}$.  
Variable $\sigma_k$ indexes the $k$th network to be used. Let $h>0$ be arbitrary. For a real number $x$, let $\lfloor x \rfloor$ denote the maximum integer that does not exceed~$x$. 
Define the stochastic temporal network $\mathcal G$ by
\begin{equation}\label{eq:defmathcalG}
  \mathcal G(t) = \tilde{\mathcal G}_{\sigma_{\lfloor t/h \rfloor}}
\end{equation}
for all $t \geq 0$. Equation~\eqref{eq:defmathcalG} implies that $\mathcal G(kh+\tau) = \tilde{\mathcal G}_{\sigma_{k}}$
for all $k\geq 0$ and $\tau \in [0, h)$. 
In the temporal network models 1, 2, and 3 used in the main text, the states of individual edges or nodes independently repeat flipping in continuous time in a Poissonian manner, which induces network switching. Therefore, the duration for a single time-independent network, $h$, is not a constant. Furthermore, the time-independent networks before and after a single flipping of an edge's or node's state are not independent of each other. For these two reasons, the present $\mathcal G$ is not the same as models 1, 2, or 3. However, one can enforce the present $\mathcal G$ and any of models 1, 2, and 3 to have the same probability that each type of time-independent network, $\tilde{\mathcal G}_{\ell}$, occurs by setting an appropriate probability distribution for $\{ \tilde{\mathcal G}_1, \ldots, \tilde{\mathcal G}_L \}$ for the present $\mathcal{G}$. We assume such a probability distribution in the following text. We consider the stochastic SIS model taking place in $\mathcal G$.

\begin{definition}
  \label{defn:new:temporal}
  The \textit{decay rate} of the SIS model over $\mathcal G$ is defined by
  \begin{equation}
  \gamma = - \limsup_{t\to\infty} \frac{\log \sum_{i=1}^N \mathbb E[p_i(t)]}{t},
  \end{equation}
  where all nodes are assumed to be infected at $t=0$.
\end{definition}

\color{black}
Definition~\ref{defn:new:temporal} states that $\sum_{i=1}^N p_i(t)$, which is equal to the expected number of infected nodes at time $t$, roughly decays exponentially in time in proportion to $e^{-\gamma t}$. Because the number of infected nodes always becomes zero in finite time, the SIS model on networks always has a positive decay rate. In fact, exact computation of the decay rate is computationally demanding because the decay rate is equal to the modulus of the largest real part of the non-zero eigenvalues of a $2^N\times 2^N$ matrix representing the
infinitesimal generator of the Markov chain~\cite{VanMieghem2009a}. Therefore, bounds of the decay rate that only require computation of much smaller matrices are available~\cite{Ganesh2005,Chakrabarti2008,VanMieghem2009a,Preciado2014,Ogura2018SystControlLett,Masuda2020ImaJApplMath}. Here we develop such a bound for temporal network $\mathcal G$.

\color{black}
Let us label the set of time-aggregated edges, $\overline{\mathcal E} = \bigcup_{\ell=1}^L \tilde{\mathcal E}_\ell$, as $\overline{\mathcal E} = \{e_1, \dotsc, e_M\}$. Define the time-aggregated graph $\overline{\mathcal G} = (\mathcal V, \overline{\mathcal E})$. Let $\overline C \in \mathbb{R}^{N\times M}$ denote the incidence matrix of $\overline{\mathcal G}$. Although the notations $\overline{\mathcal G}$, $\overline{\mathcal E}$, and $\overline C$ are common to those used in section \ref{sec1}, this should not arise confusions in the following text. For each $t\geq 0$, let $\mathcal E(t)$ denote the set of edges of the network at time $t$, i.e.,~$\mathcal G(t)$. In other words, $\mathcal E(t) = \tilde{\mathcal E}_{\sigma_{\lfloor t/h \rfloor}}$. We also define the matrix $H(t)\in \mathbb{R}^{M\times M}$ by
\begin{equation}\label{eq:def:tildeH}
  H(t)_{\ell m} = \begin{cases}
1 & \mbox{if $e_\ell \in \mathcal E(t)$, $j_\ell = i_m$, and $j_m \neq i_\ell$}, 
\\
0 & \mbox{otherwise,}
\end{cases}
\end{equation}
where $\ell, m \in \{1, \dotsc, M\}$,
and the diagonal matrix
  \begin{equation} \label{diagonalxi}
    \Xi(t) =\text{diag}(\xi_1(t), \ldots, \xi_M(t)),
  \end{equation}
  where 
  \begin{equation} \label{eachxi}
    \xi_\ell(t) = \begin{cases}
      1&\mbox{if $e_\ell \in {\mathcal{E}}(t)$}, 
      \\
      0&\mbox{otherwise,}
    \end{cases}
  \end{equation}
and $\ell \in \{1, \dotsc, M\}$. Note that $H(t)$ is similar to but different from the non-backtracking matrix of network $\mathcal{G}(t)$ because the definition of $H(t)$ does not require $e_m \in \mathcal{E}(t)$.

The following proposition gives an upper bound on the decay rate of the SIS model over the temporal network $\mathcal G$. 

\begin{proposition}\label{thm:temporal}
Define an $(N+M)\times (N+M)$ random matrix~$\mathcal A(t)$ by 
\begin{equation}\label{eq:defMathcalA:temporal}
\mathcal A(t) = \begin{bmatrix}
-\mu I & \beta \overline C_+\Xi(t) 
\\
\mu \overline C_-^\top & \beta H(t)^\top - \beta \Xi(t) - 2\mu I
\end{bmatrix}. 
\end{equation} 
Let 
\begin{equation}
  \mathcal F = \mathbb E[e^{h\mathcal A(0)}]. 
\end{equation}
Then, the decay rate of the SIS model over the temporal network~$\mathcal G$ is greater than or equal to $-h^{-1}\log \rho(\mathcal F)$. 
\end{proposition}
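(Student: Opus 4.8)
The plan is to exploit the piecewise-constant nature of $\mathcal G$. By Eq.~\eqref{eq:defmathcalG}, on each interval $[kh,(k+1)h)$ the network equals the fixed subgraph $\tilde{\mathcal G}_{\sigma_k}$ of the aggregate graph $\overline{\mathcal G}$, so the random matrix $\mathcal A(t)$ of Eq.~\eqref{eq:defMathcalA:temporal} is constant on that interval and equals $\mathcal A^{(\sigma_k)} := \mathcal A(kh)$. Conditioning on the network path $(\sigma_0,\sigma_1,\dots)$, I would apply Corollary~\ref{cor:subgraphSIS} with $\breve{\mathcal E} = \tilde{\mathcal E}_{\sigma_k}$, for which $\breve\Xi$ and $\breve H$ coincide with $\Xi(t)$ and $H(t)$, to obtain the differential inequality $\tfrac{d}{dt}[\bm p;\bm q] \le \mathcal A(t)[\bm p;\bm q]$ on each interval. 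The aim is then to convert this into a multiplicative bound across intervals, take expectations over the i.i.d.\ network indices, and read off the growth rate from $\rho(\mathcal F)$.

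The key structural fact is that $\mathcal A(t)$ is Metzler: its off-diagonal blocks $\beta\overline C_+\Xi(t)$ and $\mu\overline C_-^\top$ are nonnegative, the only negative contributions to the lower-right block are the diagonal terms $-\beta\Xi(t)-2\mu I$, and $\beta H(t)^\top$ has zero diagonal with nonnegative entries. Hence $e^{\mathcal A(t)s}\ge 0$ for all $s\ge 0$, as recalled in the text. Writing the differential inequality as $\tfrac{d}{dt}[\bm p;\bm q] = \mathcal A(t)[\bm p;\bm q] - \bm r(t)$ with $\bm r(t)\ge 0$ and applying variation of constants on $[kh,(k+1)h)$, the nonnegativity of $e^{\mathcal A(t)s}$ forces the correction integral to be nonnegative and yields the comparison bound $[\bm p((k+1)h);\bm q((k+1)h)] \le e^{h\mathcal A^{(\sigma_k)}}[\bm p(kh);\bm q(kh)]$. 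Because each $e^{h\mathcal A^{(\sigma_k)}}$ is nonnegative and the state vector is entrywise nonnegative (as $p_i=\mathbb E[x_i]\ge 0$ and $q_{ij}=\mathbb E[x_i(1-x_j)]\ge 0$), I can iterate without reversing inequalities to obtain, conditional on the path, $[\bm p(Kh);\bm q(Kh)] \le e^{h\mathcal A^{(\sigma_{K-1})}}\cdots e^{h\mathcal A^{(\sigma_0)}}\,\bm z_0$, where $\bm z_0 = [\bm p(0);\bm q(0)]$ is the deterministic all-infected initial vector.

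Next I would take the expectation over the network path. Since $\{\sigma_k\}$ are i.i.d.\ and each factor $e^{h\mathcal A^{(\sigma_k)}}$ depends only on $\sigma_k$, the expectation of the product factors into a product of expectations, each equal to $\mathcal F = \mathbb E[e^{h\mathcal A(0)}]$, so the unconditional marginal satisfies $\mathbb E[[\bm p(Kh);\bm q(Kh)]] \le \mathcal F^K\bm z_0$. Summing the first $N$ entries and enlarging the sum to all $N+M$ nonnegative entries gives $\sum_{i=1}^N \mathbb E[p_i(Kh)] \le \onevtop \mathcal F^K \bm z_0$. Because $\mathcal F\ge 0$, Gelfand's formula $\lim_{K\to\infty}\|\mathcal F^K\|^{1/K} = \rho(\mathcal F)$ gives $\limsup_{K\to\infty} \tfrac{1}{Kh}\log\!\big(\onevtop\mathcal F^K\bm z_0\big) \le h^{-1}\log\rho(\mathcal F)$. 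To pass from the grid $t=Kh$ to arbitrary $t\in[Kh,(K+1)h)$, the same conditional comparison gives $[\bm p(t);\bm q(t)]\le e^{\mathcal A^{(\sigma_K)}(t-Kh)}[\bm p(Kh);\bm q(Kh)]$; since $\sigma_K$ is independent of $\sigma_0,\dots,\sigma_{K-1}$, taking expectations inflates the bound only by the constant $\sup_{0\le\tau\le h}\|\mathbb E[e^{\mathcal A(0)\tau}]\|$, which does not affect the $\limsup$. Comparing with Definition~\ref{defn:new:temporal} then yields $\gamma \ge -h^{-1}\log\rho(\mathcal F)$.

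The step I expect to be most delicate is the bookkeeping of the two independent sources of randomness, the epidemic process and the network realization, so that the comparison inequality, which holds only conditionally on the path, survives taking expectations. This needs both the Metzler property, which guarantees $e^{h\mathcal A^{(\sigma_k)}}\ge 0$ and hence that left-multiplying a valid entrywise inequality preserves it, and the independence of the $\sigma_k$, which lets me factor $\mathbb E[\,\prod_k e^{h\mathcal A^{(\sigma_k)}}\,] = \mathcal F^K$. Verifying the comparison lemma via variation of constants, and confirming that $\breve\Xi$ and $\breve H$ for the subgraph $\tilde{\mathcal E}_{\sigma_k}$ reproduce $\Xi(t)$ and $H(t)$, are the routine but essential supporting computations.
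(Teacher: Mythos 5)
Your proposal is correct and follows essentially the same route as the paper's proof: the one-step comparison bound $[\bm p((k+1)h);\bm q((k+1)h)] \leq e^{h\mathcal A(kh)}[\bm p(kh);\bm q(kh)]$ from Corollary~\ref{cor:subgraphSIS}, factorization of the expectation using the independence of the $\sigma_k$ to obtain $\mathcal F^k$, a uniform bound to interpolate within each switching interval, and Gelfand's formula. The only difference is cosmetic — you spell out the Metzler/variation-of-constants justification of the comparison step and iterate the matrix product before taking expectations, whereas the paper takes the expectation one step at a time — and both are valid.
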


\begin{proof} 
  From Corollary~\ref{cor:subgraphSIS}, we obtain 
  \begin{equation}\label{eq:pSIS:temporal}
  \frac{d}{dt}\begin{bmatrix}
  \bm p(t)\\ \bm q(t)
  \end{bmatrix}
  \leq 
  \mathcal A(t) \begin{bmatrix}
  \bm p(t)\\ \bm q(t)
  \end{bmatrix}. 
  \end{equation}
  By combining Eq.~\eqref{eq:pSIS:temporal} and the definition of the temporal network given by \eqref{eq:defmathcalG}, we obtain
  \begin{equation}\label{eq:pSIS:temporal:tau}
    \frac{d}{d\tau}\begin{bmatrix}
    \bm p(kh+\tau)\\ \bm q(kh+\tau)
    \end{bmatrix}
    \leq 
    \mathcal A(kh) \begin{bmatrix}
    \bm p(\tau)\\ \bm q(\tau)
    \end{bmatrix}
    \end{equation}
for all $k=0, 1, \ldots$ and $\tau \in [0, h)$, which implies 
\begin{equation}\label{eq:pSIS:temporal:tau2}
\begin{bmatrix}
    \bm p((k+1)h)\\ \bm q((k+1)h)
    \end{bmatrix}
      \leq 
      e^{h\mathcal A (kh)}
      \begin{bmatrix}
        \bm p(kh)\\ \bm q(kh)
        \end{bmatrix}. 
\end{equation}
Because the random variables $\{\sigma_k\}_{k=0}^\infty$ are independently and identically distributed, we take the expectation with respect to $\sigma$ in Eq.~\eqref{eq:pSIS:temporal:tau2} to obtain
\begin{equation}
\begin{bmatrix}
    \mathbb E[\bm p((k+1)h)]\\ \mathbb E[\bm q((k+1)h)]
    \end{bmatrix}
      \leq 
      \mathcal F 
      \begin{bmatrix}
        \mathbb E[\bm p(kh)]\\ \mathbb E[\bm q(kh)]
        \end{bmatrix}, 
\end{equation}
which implies 
\begin{equation}\label{eq:epkhqkh}
\begin{bmatrix}
    \mathbb E[\bm p(kh)]\\ \mathbb E[\bm q(kh)]
    \end{bmatrix}
      \leq 
      \mathcal F^k 
      \begin{bmatrix}
        \bm p(0)\\ \bm q(0)
        \end{bmatrix}. 
\end{equation}

Now, let $\mathcal S \subset \mathbb{R}^{(N+M)\times (N+M)}$ denote the support of the random matrix $\mathcal A(0)$. Inequality \eqref{eq:pSIS:temporal:tau} shows that, for each sample path, there exists a matrix $X_{k}\in \mathcal S$ dependent on $k$ such that 
\begin{equation}\label{eq:kh->kh+tau}
  \begin{bmatrix}
    \bm p(kh+\tau)\\ \bm q(kh+\tau)
    \end{bmatrix}\leq e^{\tau X_{k}}
    \begin{bmatrix}
      \bm p(kh)\\ \bm q(kh)
      \end{bmatrix}
\end{equation}
for all $k=0, 1, \ldots$ and $\tau \in [0, h)$. Because $\mathcal S$ is finite, the maximum 
  $\Gamma = \max_{\substack{0\leq \tau\leq h, X\in \mathcal S}}\Norm{e^{\tau X}}$
exists and is finite. 
{Therefore, Eq.~\eqref{eq:kh->kh+tau} implies
\begin{equation}\label{eq:normkh->kh+tau}
  \Norm{\begin{bmatrix}
    \bm p(kh+\tau)\\ \bm q(kh+\tau)
    \end{bmatrix}}
    \leq 
    \norm{e^{\tau X_{k}}}
    \Norm{\begin{bmatrix}
      \bm p(kh)\\ \bm q(kh)
      \end{bmatrix}}
      \leq 
    \Gamma
    \Norm{\begin{bmatrix}
      \bm p(kh)\\ \bm q(kh)
      \end{bmatrix}}
\end{equation}
with probability one. 
By combining Eqs.~\eqref{eq:epkhqkh} and~\eqref{eq:normkh->kh+tau}, we obtain} 
\begin{equation}\label{eq:normkh+tau}
  \norm{
    \mathbb E[\bm p(kh+\tau)]} \leq \Gamma \norm{\mathcal F^k} 
    \Norm{\begin{bmatrix}
      \bm p(0)\\ \bm q(0)
      \end{bmatrix}}. 
\end{equation}
Equation~\eqref{eq:normkh+tau} implies that
\begin{equation}\label{eq:E32,new}
    \| \mathbb E[\bm p(kh+\tau)] \| 
    \leq c\|\mathcal F^k\|
\end{equation}
for a constant $c>0$. 
Because 
\begin{align}
    \limsup_{t\to\infty}\frac{\log \sum_{i=1}^N \mathbb E[p_i(t)]}{t}
    &\leq 
    \limsup_{t\to\infty}\frac{\log \|\mathbb E[\bm p(t)]\|}{t}\nonumber
    \\
    &=
    \lim_{k\to\infty} \sup_{\ell \geq k} \max_{0\leq \tau < h}\frac{\log \|\mathbb E[\bm p(\ell h+\tau)]\|}{\ell h+\tau}\nonumber
    \\
    &\leq 
    \frac{1}{h}\lim_{k\to\infty} \sup_{\ell \geq k} \max_{0\leq \tau < h}\frac{\log \|\mathbb E[\bm p(\ell h+\tau)]\|}{\ell}, 
\end{align}
inequality~\eqref{eq:E32,new} implies that 
\begin{align}\label{eq:logrho}
    \limsup_{t\to\infty}\frac{\log \sum_{i=1}^N \mathbb E[p_i(t)]}{t}
    &\leq 
    \frac{1}{h}\lim_{k\to\infty} \sup_{\ell \geq k} \frac{\log\|\mathcal F^\ell\| +\log c}{\ell}\nonumber
    \\
    &=
    \frac{1}{h}\limsup_{k\to\infty}\frac{\log\|\mathcal F^k\|}{k}\nonumber
    \\
    &=
    \frac{1}{h}\log \rho(\mathcal F).
\end{align}
In the last equality in Eq.~\eqref{eq:logrho}, we used the Gelfand's formula \cite[Corollary 5.6.14]{Horn1990}.
\end{proof}

  
  
\subsection{Epidemic threshold for small $h$ }

In this section, we consider the case in which the switching interval, $h$, is sufficiently small. 
The first-order expansion of matrix $\mathcal F$ with respect to $h$ yields 
\begin{equation}
  \mathcal F = I + \mathbb E[h {\mathcal A(0)}] + O(h^2) = I + h \mathbb E[{\mathcal A(0)}] + O(h^2). 
\end{equation}
Let us assume that matrix $\mathbb E[{\mathcal A(0)}]$ is irreducible. Let $\lambda_{\max}$ denote the real eigenvalue of $\mathbb E[{\mathcal A(0)}]$ with the largest real part. Then, by the Perron-Frobenius theorem and the differentiability of the eigenvalues of matrices \cite[Theorem 6.3.12]{Horn1990}, we obtain 
\begin{equation}
  \rho(\mathcal F) = 1 + h \lambda_{\max} + O(h^2). 
\end{equation}
Therefore, $h^{-1}\log \rho(\mathcal F) 
= \lambda_{\max}=O(h)$.   
The combination of this asymptotic evaluation and Proposition~\ref{thm:temporal} suggests that the epidemic threshold is the value of $\beta/\mu$ at which $\lambda_{\text{max}} = 0$.

We rewrite the epidemic threshold in terms of the spectral radius of the relevant matrices. As in the proof of Corollary 5 in Ref.~\cite{Masuda2020ImaJApplMath}, we rewrite the random matrix $\mathcal A(0)$ as 
\begin{equation}
  \mathcal A(0) = R + \mathcal P
  \end{equation}
  with a constant matrix
  \begin{equation}
  R = \begin{bmatrix}
  -\mu I & O
  \\
  \mu \bar C_{-}^\top & -\beta I - 2\mu I
  \end{bmatrix}
  \end{equation}
  and a random matrix
  \begin{equation}
  \mathcal P  = \beta \begin{bmatrix}
  O & \overline C_+\Xi(0)
  \\
  O & I + H(0)^\top - \Xi(0)
  \end{bmatrix}.
  \end{equation}
Then, we obtain the decomposition 
\begin{equation}
  \mathbb E[\mathcal A(0)] = R + P, 
\end{equation}
where 
  \begin{equation}
  P  = \mathbb E[\mathcal P] = \beta \begin{bmatrix}
  O & \overline C_+\mathbb E\bigl[\Xi(0)\bigr]
  \\
  O & I +\mathbb E\bigl[H(0)\bigr]^\top - \mathbb E\bigl[\Xi(0)\bigr] 
  \end{bmatrix}.
  \end{equation}
Matrix~$R$ is Metzler, and all the eigenvalues of $R$ have negative real parts. Matrix~$P$ is nonnegative because each diagonal of $\Xi(0)$ is a $\{0, 1\}$-valued random variable. Therefore, Theorem~2.11 in Ref.~\cite{Damm2003LinAlgItsAppl} implies that $\lambda_{\max} < 0$ if and only if $\rho(R^{-1} P) < 1$. Because
\begin{equation}
R^{-1} P = 
\begin{bmatrix}
O & -\dfrac \beta \mu \overline C_+\mathbb E\bigl[\Xi(0)\bigr] 
\\ 
O & -\dfrac \beta{\beta + 2\mu}\left(I + \mathbb E\bigl[H(0)\bigr]^\top- \mathbb E[\Xi(0)] + \overline C_{-}^\top \overline C_+\mathbb E\bigl[\Xi(0)\bigr] \right)
\end{bmatrix}, 
\end{equation}
one obtains
\begin{equation}\label{eq:rhoR-1P:temporal}
\rho(R^{-1}  P) = \frac{\beta}{\beta+2\mu}\rho\left(I + \mathbb E\bigl[H(0)\bigr]^\top- \mathbb E[\Xi(0)] + \overline C_-^\top \overline C_+\mathbb E\bigl[\Xi(0)\bigr] \right). 
\end{equation}
Equation \eqref{eq:rhoR-1P:temporal} implies that $\rho(R^{-1} P) < 1$ if and only if 
\begin{equation}\label{eq:et}
\frac{\beta}{\mu } < \frac{2}{\rho\left(I - \mathbb E[\Xi(0)] + \mathbb E\bigl[\Xi(0)\bigr] \overline C_+^\top \overline C_{-}  + \mathbb E\bigl[H(0)\bigr]\right) - 1}. 
\end{equation}

The right-hand side of Eq. \eqref{eq:et} gives the epidemic threshold, which we can further as follows. Equations \eqref{eq:def:tildeH}, \eqref{diagonalxi}, and \eqref{eachxi} imply that $H(t) = \Xi(t) \overline{H}$, where $\overline{H}$ is the non-backtracking matrix of the time-aggregated graph. (Note that this notation should not cause confusions with the definition of $\overline{H}$ given by Eq.~\eqref{non-back_overlineH}.) Therefore, we rewrite Eq.~\eqref{eq:et} as 
\begin{equation}\label{eq:et:new}
  \frac{\beta}{\mu } < \frac{2}{
    \rho\left(I - \mathbb E[\Xi(0)] + \overline C_{+}^\top  \mathbb E\bigl[\Xi(0)\bigr] \overline C_{-}  + \mathbb E\bigl[\Xi(0) \bigr] \overline H \right) - 1}. 
\end{equation}
Therefore, the epidemic threshold, i.e., the right-hand side of Eq.~\eqref{eq:et:new}, only depends on the expectation of $\Xi(0)$, i.e., $\mathbb E[\Xi(0)]$. We keep $\mathbb E[\Xi(0)]$ constant across the different models to inspect the effect of different amounts of concurrency on epidemic spreading under the condition that the aggregate network is the same across the comparisons. Therefore, with this analysis, we do not find a difference in the epidemic threshold across our different models.

\end{appendices}


\end{document}